\newcommand{\nop}[1]{}
\newtheorem{definition}{Definition}
\newtheorem{example}{Example}
\newtheorem{lemma}{Lemma}
\newtheorem*{proof}{Proof:}
\def\BibTeX{{\rm B\kern-.05em{\sc i\kern-.025em b}\kern-.08em
    T\kern-.1667em\lower.7ex\hbox{E}\kern-.125emX}}
\begin{document}

\makeatletter
\newcommand{\linebreakand}{%
  \end{@IEEEauthorhalign}
  \hfill\mbox{}\par
  \mbox{}\hfill\begin{@IEEEauthorhalign}
}
\makeatother

\title{Top-$L$ Most Influential Community Detection Over Social Networks (Technical Report)}

% \author{\IEEEauthorblockN{Nan Zhang and Yutong Ye}
% \IEEEauthorblockA{\textit{Software Engineering Institute} \\
% \textit{East China Normal University}\\
% Shanghai, China \\
% \{51255902058, 52205902007\}@stu.ecnu.edu.cn}

% %\author{\IEEEauthorblockN{Nan Zhang}
% %\IEEEauthorblockA{\textit{Software Engineering Institute} \\
% %\textit{East China Normal University}\\
% %Shanghai, China \\
% %51255902058@stu.ecnu.edu.cn}

% %\and
% %\IEEEauthorblockN{Yutong Ye}
% %\IEEEauthorblockA{\textit{Software Engineering Institute} \\
% %\textit{East China Normal University}\\
% %Shanghai, China \\
% %52205902007@stu.ecnu.edu.cn}
% %\linebreakand

% \and

% \IEEEauthorblockN{Xiang Lian}
% \IEEEauthorblockA{\textit{Department of Computer Science} \\
% \textit{Kent State University}\\
% Kent, OH 44242, USA \\
% xlian@kent.edu}
% \and
% \IEEEauthorblockN{Mingsong Chen}
% \IEEEauthorblockA{\textit{Software Engineering Institute} \\
% \textit{East China Normal University}\\
% Shanghai, China \\
% mschen@sei.ecnu.edu.cn}
% }

\author{\IEEEauthorblockN{Nan Zhang}
\IEEEauthorblockA{\textit{East China Normal University} \\
Shanghai, China \\
51255902058@stu.ecnu.edu.cn}
\and
\IEEEauthorblockN{Yutong Ye}
\IEEEauthorblockA{\textit{East China Normal University} \\
Shanghai, China \\
52205902007@stu.ecnu.edu.cn}
\and
\IEEEauthorblockN{Xiang Lian}
\IEEEauthorblockA{\textit{Kent State University} \\
Kent, United States \\
xlian@kent.edu}
\and
\IEEEauthorblockN{Mingsong Chen}
\IEEEauthorblockA{\textit{East China Normal University} \\
Shanghai, China \\
mschen@sei.ecnu.edu.cn}
}

\maketitle

\begin{abstract}

In many real-world applications such as social network analysis and online marketing/advertising, the \textit{community detection} is a fundamental task to identify communities (subgraphs) in social networks with high structural cohesiveness. While previous works focus on detecting communities alone, they do not consider the collective influences of users in these communities on other user nodes in social networks. Inspired by this, in this paper, we investigate the influence propagation from some \textit{seed communities} and their influential effects that result in the \textit{influenced communities}. We propose a novel problem, named \textit{\underline{Top-$L$} most \underline{I}nfluential \underline{C}ommunity \underline{DE}tection} (Top$L$-ICDE) over social networks, which aims to retrieve top-$L$ seed communities with the highest influences, having high structural cohesiveness, and containing user-specified query keywords. In order to efficiently tackle the Top$L$-ICDE problem, we design effective pruning strategies to filter out false alarms of seed communities and propose an effective index mechanism to facilitate efficient Top-$L$ community retrieval. We develop an efficient Top$L$-ICDE answering algorithm by traversing the index and applying our proposed pruning strategies. We also formulate and tackle a variant of Top$L$-ICDE, named \textit{diversified top-$L$ most influential community detection} (DTop$L$-ICDE), which returns a set of $L$ diversified communities with the highest diversity score (i.e., collaborative influences by $L$ communities). Through extensive experiments, we verify the efficiency and effectiveness of our proposed Top$L$-ICDE and DTop$L$-ICDE approaches over both real-world and synthetic social networks under various parameter settings.

%With the continuous development of the Internet and online social networks, influential community detection become a valuable problem.

%\textcolor{blue}{
%In this paper, we propose a novel problem named \textit{top-$L$ most influential community detection over social networks} (Top-$L$-ICD), which mines top-$L$ communities with high influence, maximum radius, and covering query keywords. In order to tackle the Top-$L$-ICD problem over social networks, we design effective pruning strategies to reduce the search space. We also propose an effective index mechanism, namely xxx index, to facilitate the top-$L$ community query, and develop an efficient query answering algorithm via index traversal. To verify the efficiency and effectiveness of our approach, we consider extensive experiments on both real-world and synthetic datasets under various parameter settings.
%}

\end{abstract}

\begin{IEEEkeywords}
Top-$L$ Most Influential Community Detection, Diversified Top-$L$ Most Influential Community Detection %, Social Networks %Influence Maximization, 
\end{IEEEkeywords}

\section{Introduction}
Recently, the \textit{community detection} (CD) has gained significant attention as a fundamental task in various real-world applications, such as online marketing/advertising \cite{chen2010scalable,tang2015influence,tu2022viral}, friend recommendation \cite{song2022friend}, and social network analysis \cite{fan2018octopus}. Many previous works \cite{wang2020efficient,liu2020truss,sun2020index,liu2021efficient} usually focused on identifying communities only (i.e., subgraphs) with high structural cohesiveness in social networks. However, they overlooked the collective influences that communities may exert on other users (e.g., family members, or friends) within social networks, which play a significant role in Word-Of-Mouth effects.

%, or friends of friends

% Recently, the \textit{community detection} (CD) has gained significant attention as a fundamental task in various real-world applications, such as online marketing/advertising \cite{avadhanula2021stochastic,tu2022viral}, friend recommendation \cite{gulati2019little,song2022friend}, and social network analysis \cite{gong2018sentiment,fan2018octopus}. Many previous works \cite{akbas2017truss,fang2018effective,fang2018spatial,wang2020efficient,liu2020truss,sun2020index,liu2021efficient} usually focused on identifying communities only (i.e., subgraphs) with high structural cohesiveness in social networks. However, they often overlook the collective influences that communities may exert on other user nodes (e.g., family members, friends, or friends of friends) within social networks. 

In this paper, we will formulate and tackle a novel problem called \textit{top-$L$ most influential community detection over social networks} (Top$L$-ICDE). This Top$L$-ICDE problem aims to detect top-$L$ communities of people from social networks who have specific interests (e.g., sports, movies, traveling), not only with close social relationships (i.e., forming dense subgraphs with highly connected users) but also with high impacts/influences on other users in social networks. 

Below, we give a motivating example of our Top$L$-ICDE problem in real applications of online advertising/marketing.\vspace{-1.5ex}

% \setlength{\textfloatsep}{0pt}
% \begin{figure}
%     \centering
%     \includegraphics[width=0.5\textwidth]{}
%     \caption{An example of a social network.}
%     \label{fig:motivation}
% \end{figure}

\setlength{\textfloatsep}{0pt}
\begin{figure}[t]
    \vspace{-5ex}
    \centering
    \subfigure[social network $G$]{
        \includegraphics[height=3.5cm]{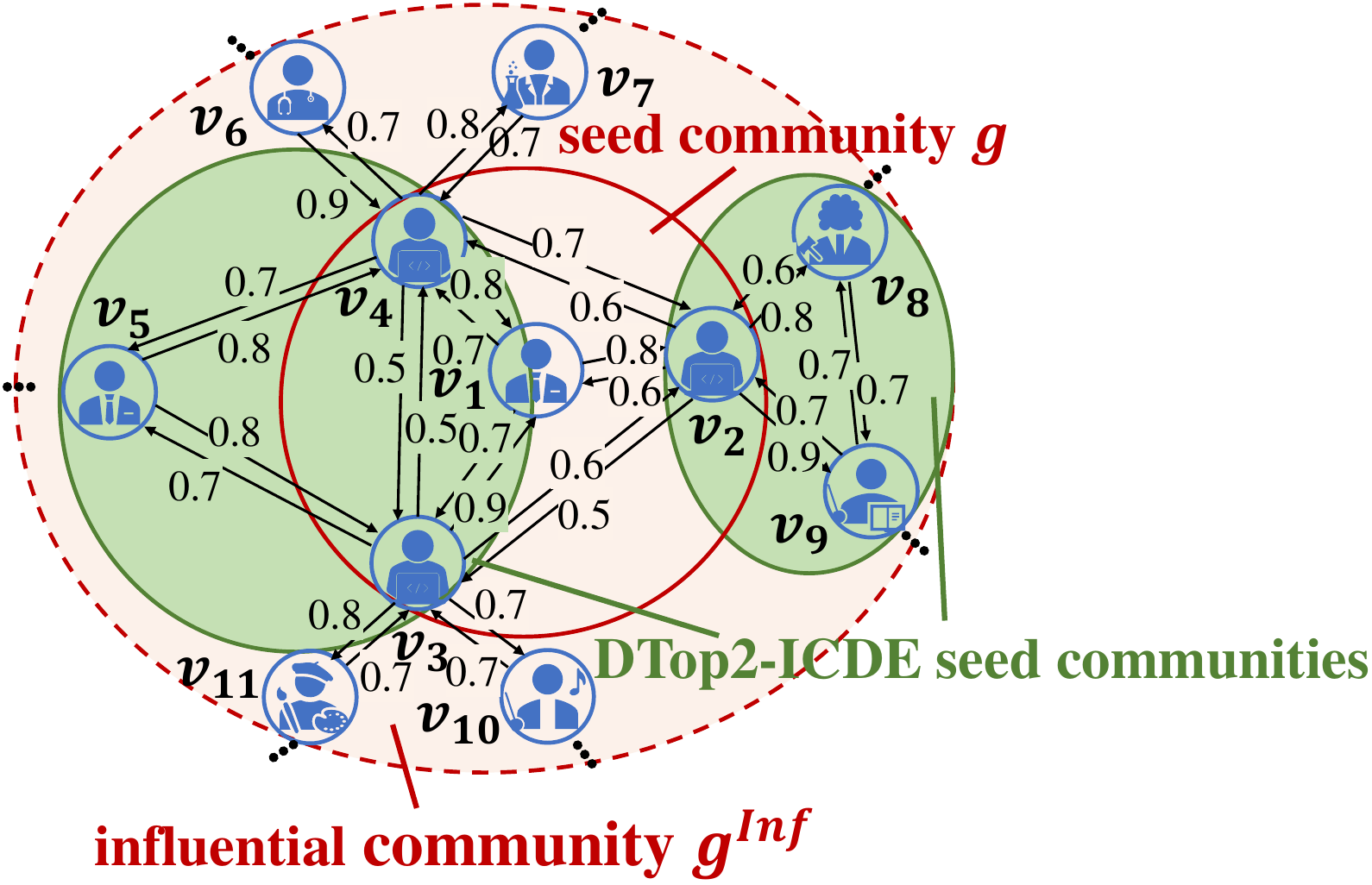}
        \label{subfig:example_graph}
    }
    \hspace{-0.4cm}
    \subfigure[keyword sets of vertices]{
        \includegraphics[height=3.5cm]{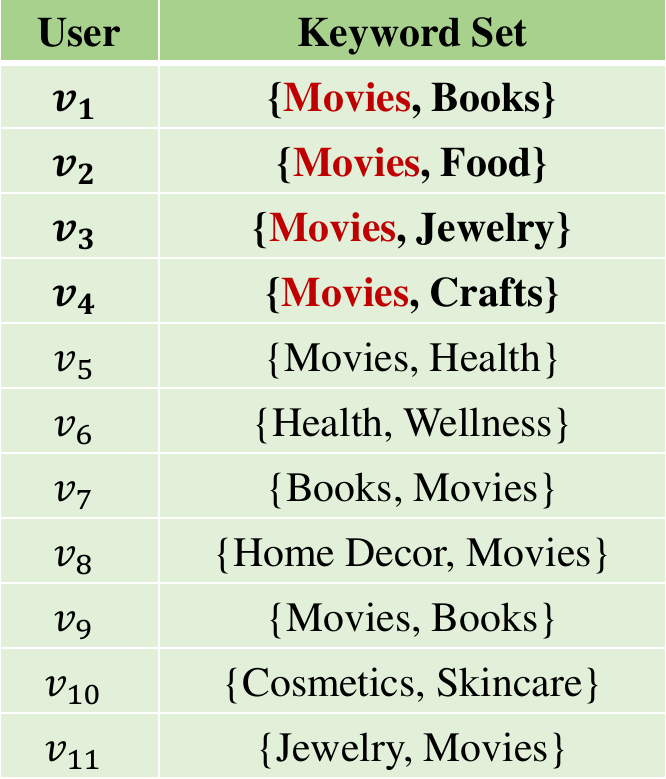}
        \label{subfig:example_table}
    }
    \caption{An example of the Top$L$-ICDE problem over social network.}
    \label{fig:motivation}
\end{figure}

\begin{example}\textbf{(Online Advertising and Marketing Over Social Networks)} 
\label{example:topl_icde_example}
In real applications of online advertising and marketing, a sales manager wants to find several communities of users on social networks who might be interested in buying a new product. 

Intuitively, people in the same communities who know each other should share a sufficient number of common friends within communities, so that they are more likely to purchase the product together via group buying (e.g., on Groupon \cite{wang2017game}).

% Intuitively, since those people in the same communities tend to know each other, they are more likely to purchase the product together via group buying (e.g., on Groupon \cite{wang2017game}). 
What is more, after such community users receive group buying coupons/discounts, the sales manager wants these users to maximally influence other users on social networks via their posts/recommendations. In this case, the sales manager needs to perform the Top$L$-ICDE operator to find some seed communities with the highest influences on social networks (also considering those influenced users). 

Figure \ref{fig:motivation} illustrates an example of a social network, $G$, which contains user vertices like $v_1 \sim v_5$, and directed edges (e.g., edge $e_{v_1,v_2}$) representing the relationships (e.g., friend, colleague, or family) between two users (e.g., $v_1$ and $v_2$). As illustrated in Figure \ref{subfig:example_graph}, each edge has an influence weight between two vertices, for example, directed edge $e_{v_1,v_2}$ has the weight $0.8$, indicating the influence from user $v_1$ to user $v_2$. Moreover, as depicted in Figure \ref{subfig:example_table}, each user is associated with a set of keywords that represent one's favorite product categories (e.g., $\{Movies, Books\}$ for user $v_1$). 

To achieve good online advertising/marketing effects, one potential seed community, $g$, is shown in Figure \ref{subfig:example_graph} (i.e., inner solid circle),  where users in $g$ are all interested in ``Movies'', with high structural connectivity (i.e., any two users in $g$ are friends and also share two other common friends), and have the highest influences on other user nodes in a larger influenced community $g^\mathit{Inf}$ (i.e., outer dashed ellipse, as shown in Figure \ref{subfig:example_graph}). \qquad $\blacksquare$

%which can lead to good online advertising/marketing effects. 

%In the social network, the influence on the two directed edges between the same users may be asymmetric. For example, the social influence of user $u_1$ on user $u_2$ (i.e., $p_{u_1,u_2}$) is given by 0.8, whereas the influence of user $u_2$ on $u_1$ is 0.6.

%In order to effectively advertise and market online, the sale manager wants to find the most influential community of users who like the products he sells, such as $\{Electronics\}$, and can highly influence a lot of people around. In this case, the manager can issue a Top$L$-ICDE query over the social network $G$ and call for the top-$L$ most influential communities of users who are interested in the given products and can influence other users around them.

%In Figure~\ref{fig:motivation}, the inside circled seed community $g$ is the answer that can meet the query keywords (i.e., Electronics). Meanwhile, users of this seed community have high influences on other users around them, so that it will form an influential community $g^{Inf}$ (circled by an outer dotted line) on the social network.
%Although the user $u_5$ has the query keyword $Electronics$ and high influence to $u_4$, he/she will not be considered. This is because, $u_5$ does not meet the conditions to belong to the selected community, such as belonging to the specified k-truss (i.e., 4-truss) or being within the specified radius (i.e., 1-hop), etc. Therefore, in this running example, the community $\{u_1,u_2,u_3,u_4\}$ will be returned as the target group for advertising. \qquad $\blacksquare$
\end{example}\vspace{-1ex}

Inspired by the example above, in this paper, we consider the Top$L$-ICDE problem, which obtains $L$ groups of highly connected users (called \textit{seed communities}) containing some query keywords (e.g., shopping preferences or preferred products) and with the highest influences (ranks) on other users. The resulting $L$ communities are potential customer groups for effective advertising/marketing with high influences to promote new products in social networks.

Formally, we will first define a seed community $g$ in social networks as a structurally dense subgraph (under the semantics of $(k, r)$-truss \cite{huang2017AttributeDrivenCommunitySearch,al2020topic}), where any two connected users in $g$ must have at least $(k-2)$ common friends (i.e., be contained in $\geq (k-2)$ triangle motifs, indicating stable user relationships), users in $g$ are close to a center vertex (i.e., within $r$ hops), and each individual vertex in $g$ contains at least one query keyword.
Then, we will consider the influenced communities under an influence propagation model from seed communities, which are ranked by influential scores of seed communities. Our Top$L$-ICDE problem aims to retrieve top-$L$ seed communities with the highest ranks.
% Specifically, our Top$L$-ICDE problem will consider the collective influences of communities on other user nodes by investigating the influence propagation from the seed communities. Based on the influence propagation model, we introduce the influenced community, which takes the following semantics into account: support of $k$-truss, indicating the formation of a strongly connected component within the social network; maximum radius, indicating frequent interactions among community members; keyword sets of individual users, indicating shared interests and topics; and an influence threshold, which identifies users with significant impact.

\nop{\color{blue} However, in real-world social network, the user groups influenced by the communities with higher influence may overlap, which lower the online advertising/marketing effects. Therefore, another valuable problem is obtaining $L$ seed communities with the highest influences and the most influenced users to enhance the promoting effect, which is called \textit{diversified top-$L$ most influential community detection over social networks} (DTop$L$-ICDE).}

Due to the large scale of social networks and the complexities of retrieving communities under various constraints, it is rather challenging to efficiently and effectively tackle the Top$L$-ICDE problem over big social networks. Furthermore, since many parameters like query keywords or thresholds (e.g., influential score, radius, and support) are online specified by users, it is not clear how to enable efficient Top$L$-ICDE processing with ad-hoc constraints (or query predicates).

In order to efficiently tackle the Top$L$-ICDE problem, in this paper, we propose a two-phase general framework for Top$L$-ICDE processing, which consists of offline pre-computation and online Top$L$-ICDE processing phases. In particular, we will present effective pruning mechanisms (w.r.t query keywords, radius, edge support, and influential scores) to safely filter out false alarms of candidate seed communities and reduce the Top$L$-ICDE problem search space. Moreover, we will design an effective indexing mechanism to facilitate our proposed pruning strategies and develop an efficient Top$L$-ICDE processing algorithm via index traversal.

Furthermore, we also consider a variant of Top$L$-ICDE, namely \textit{diversified top-$L$ most influential community detection over social networks} (DTop$L$-ICDE), which retrieves a set of $L$ diversified communities with the highest diversity score. Different from Top$L$-ICDE that returns $L$ individual communities each of which can be a candidate community for online marketing/advertising), DTop$L$-ICDE obtains one set of $L$ diversified communities that collaboratively influence other users with the highest diversity score (i.e., considering the overlaps of their influenced communities). 

In Example~\ref{example:topl_icde_example} (Figure~\ref{fig:motivation}), if a sales manager wants to enhance the promotion effect of advertising to $L$ seed communities, we need to consider those users influenced by more than one seed community. Since users usually buy the product once only, we would like to reduce the overlaps of $L$ influenced communities and maximize the influential effect (i.e., the diversity score). In this case, we can issue the DTop$L$-ICDE variant to obtain $L$ diversified communities. In Figure \ref{subfig:example_graph}, for $L=2$, subgraphs $\{v_3, v_4, v_5\}$ and $\{v_2, v_8, v_9\}$ are two DTop$2$-ICDE seed communities (with only one influenced user $v_1$).

Based on the Top$L$-ICDE algorithm, we prove that the DTop$L$-ICDE problem is NP-hard, and develop an approximation greedy algorithm to process the DTop$L$-ICDE query efficiently.

We make the following major contributions in this paper:
\begin{enumerate}
    \item We formally define the problem of the top-$L$ most influential community detection over social networks (Top$L$-ICDE) and its variant DTop$L$-ICDE in Section~\ref{sec:problem_definition}.
    \item We design a two-phase framework to efficiently tackle the Top$L$-ICDE problem in Section~\ref{sec:solution_framework}.
    \item We propose effective pruning strategies to reduce the Top$L$-ICDE search space in Section~\ref{sec:pruning_strategies}.
    %\item We devise offline pre-computation and indexing mechanisms to facilitate effective pruning in Section~\ref{sec:offline_process}.
    \item We devise offline pre-computation and indexing mechanisms in Section~\ref{sec:offline_process} to facilitate effective pruning and an efficient Top$L$-ICDE algorithm to retrieve community answers in Section~\ref{sec:online_topl_icde_process}.
    \item We prove that DTop$L$-ICDE is NP-hard, and develop an efficient DTop$L$-ICDE processing algorithm to retrieve diversified community answers in Section~\ref{sec:online_dtopl_icde_process}.
    \item We demonstrate through extensive experiments the efficiency and effectiveness of our Top$L$-ICDE processing approach over real/synthetic graphs in Section~\ref{sec:experiments}.
\end{enumerate}

%\textcolor{blue}{
Section~\ref{sec:related_work} reviews previous works on community search/detection, influence maximization, and influential / diversified community. Finally, Section~\ref{sec:conclusion} concludes this paper.
%}

\section{Problem Definition}
\label{sec:problem_definition}
Section~\ref{sec:social_network} formally defines the data model for social networks. Section~\ref{subsec:information_propagation_model} gives the definition of the information propagation model over social networks. Finally, Section~\ref{subsec:top_L_most_influential_community_detection} provides the definition of our problems.
%\textit{top-$L$ most influential community detection over social networks} (Top$L$-ICDE).

\subsection{Social Networks}
\label{sec:social_network}

First, we model a social network by an attributed, undirected, and weighted graph as follows.

\begin{definition}
\label{def:social_network}
(\textbf{Social Network, $G$}) A social network $G$ is a connected graph represented by a triple $(V(G), E(G),\Phi(G))$, where $V(G)$ and $E(G)$ are the sets of vertices and edges in $G$, respectively, and $\Phi(G)$ is a mapping function: $V(G) \times V(G) \rightarrow E(G)$. Each vertex $v_i$ has a keyword set $v_i.W$, and each edge $e_{u,v} \in E(G)$ is associated with a weight $p_{u,v}$, which indicates the probability that user $u$ activates user $v$.  
\end{definition}

In Definition \ref{def:social_network}, the keyword set $v_i.W$, e.g., $\{movies, sports, \cdots\}$, denotes the topics that user $v_i$ is interested in.

\subsection{Information Propagation Model}
\label{subsec:information_propagation_model}

To describe the influence spread over social networks $G$, we utilize the \textit{maximum influence arborescence} (MIA) model \cite{chen2010ScalableInfluenceMaximization}. Given a specific path from $u$ to $v$ (i.e., a non-cyclic user sequence), denoted as $P_{u,v} = \left\langle u=u_1, u_2,\ldots, u_m=v \right\rangle$,  the \textit{propagation probability}, ${pp}(P_{u,v})$, of path $P_{u,v}$ is given by:
\begin{equation}
\label{eq:user_to_user_path_influence}
    {pp}(P_{u,v}) = \prod_{i=1}^{m-1}{p_{u_i,u_{i+1}}}.
\end{equation}
where $p_{u_i,u_{i+1}}$ is the weight of edge $e_{u_i,u_{i+1}}$ on path $P_{u,v}$.

In Eq.~(\ref{eq:user_to_user_path_influence}), we give the probability of the influence propagation between two user vertices via a specific path $P_{u,v}$ in $G$. In reality, there are multiple possible paths from $u$ to $v$. Thus, the MIA model uses the \textit{maximum influence path} (MIP) \cite{chen2010ScalableInfluenceMaximization}, denoted as $\mathit{MIP}_{u, v}$, to evaluate the influence propagation from $u$ to $v$. The MIP is defined as a path with the highest propagation probability below:
\begin{equation}
    \label{eq:maximum_influence_path}
    \mathit{MIP}_{u, v}=\mathop{argmax}\limits_{P_{u,v}}{pp(P_{u,v})}.
\end{equation}

This way, the \textit{user-to-user propagation probability}, ${upp}(u,v)$, for all paths between users $u$ and $v$ is defined by:
\begin{equation}
    \label{eq:user_to_user_influence}
    {upp}(u,v) = {pp}(\mathit{MIP}_{u, v}).
\end{equation}

The problem of computing the influence spread $\sigma(S)$ is known to be NP-hard \cite{chen2010ScalableInfluenceMaximization}, and existing algorithms \cite{kempe2003MaximizingSpreadInfluence, feige1998ThresholdLnApproximating} can achieve an approximation factor of $(1-{1/e} +{\varepsilon})$, where $e$ is the natural constant and $\varepsilon>0$.

\subsection{
%Top-$L$ Most Influential Community Detection Over
%Social Networks (Top$L$-ICDE)
Our Top$L$-ICDE Problem}
\label{subsec:top_L_most_influential_community_detection}

\noindent {\bf Seed Community.} We first give the definition of the \textit{seed community} in social networks $G$ below. 

\begin{definition}
(\textbf{Seed Community, $g$}) Given a social network $G$, a center vertex $v_q$, an integer support $k$, the maximum radius, $r$, of seed communities, and a set, $Q$, of query keywords, a \textit{seed community}, $g$, is a connected subgraph of $G$ (denoted as $g \subseteq G$), such that:
\begin{itemize}
\item $v_q \in V(g)$;
\item for any vertex $v_l \in V(g)$, we have $dist(v_q, v_l) \leq r$;
\item $g$ is a $k$-truss \cite{cohen2008TrussesCohesiveSubgraphs}, and;
\item for any vertex $v_l$$\in$$V(g)$, its keyword set $v_l.W$ must contain at least one query keyword in $Q$ (i.e., $v_l.W \cap Q \neq \emptyset$),
\end{itemize}
where $dist(x, y)$ is the shortest path distance between $x$ and $y$ in subgraph $g$.
\label{def:seed_community}
\end{definition}

The seed community $g$ (given in Definition~\ref{def:seed_community}) is a connected subgraph that is  1) centered at $v_q$, 2) with a maximum radius $r$, 3) being a $k$-truss, and, 4) with each vertex $v_l$ containing at least one query keyword in $Q$. Here, $g$ is a $k$-truss subgraph \cite{cohen2008TrussesCohesiveSubgraphs}, if any edge in $g$ is contained in at least $(k-2)$ triangles. 

In real applications of online marketing and advertising, the seed community usually contains strongly connected users, who are provided with coupons/discounts of products to maximally influence other users.

\noindent {\bf Influenced Community.}
%Next, we define t
The \textit{influenced community}, $g^\mathit{Inf}$, is a subgraph influenced by a seed community $g$.
Based on the MIA model, we define the \textit{community-to-user propagation probability}, $cpp(g, v)$, from seed community $g$ to a vertex $v$ as follows.
\begin{eqnarray}
\label{eq:community_to_user_influence}
cpp (g, v) = 
    \begin{cases}
    \max_{\forall u\in V(g)}\{upp(u,v)\}, & v \notin V(g); \\
    1, & v\in V(g).
    \end{cases}
\label{eq:cpp}
\end{eqnarray}

We use $cpp(g, v)$ to compute the maximum possible influence (e.g., posts, tweets) from one of the users in the seed community $g$ to user $v$ through some paths. Then, we provide the definition of the \textit{influenced community} $g^\mathit{Inf}$ below.

\begin{definition}
\label{def:influenced_community}
(\textbf{Influenced Community, $g^\mathit{Inf}$}) Given a social network $G$, a seed community $g$, and an influence threshold $\theta$ ($\in [0, 1)$), the influenced community,  $g^\mathit{Inf}$, of $g$ is a subgraph of $G$, where each vertex $v$ in $V(g^\mathit{Inf})$ satisfies the condition that $cpp (g, v) \geq \theta$.
\end{definition}

\noindent {\bf The Influential Score of the Influenced Community.} To evaluate the propagation effect from a seed community $g$ to its influenced community $g^\mathit{Inf}$, we give the definition of the \textit{influential score}, $\sigma(g)$, for the influenced community $g^\mathit{Inf}$: 

\begin{equation}
\label{eq:influence_score}
    \sigma(g)=\sum_{v \in V(g^\mathit{Inf})}{cpp (g, v)}.
\end{equation}

The influential score $\sigma(g)$ (given in Eq.~(\ref{eq:influence_score})) sums up all the community-to-user propagation probabilities $cpp (g, v)$ from $g$ to vertices $v$ in the influenced community $g^\mathit{Inf}$. Intuitively, high influential score $\sigma(g)$ indicates that the seed community $g$ may influence either a few users $v$ with high community-to-user propagation probabilities $cpp(g, v)$, or many users $v$ even with low $cpp(g, v)$ values.

% Intuitively, a larger influential score indicates a higher influence of the seed community $g$.

\noindent {\bf The Problem of Top-$L$ Most Influential Community Detection Over
Social Networks.} We are now ready to define the problem of detecting top-$L$ communities with the highest influences.

\begin{definition}
\label{def:topl_icde_problem}
(\textbf{Top-$L$ Most Influential Community Detection Over Social Networks, Top$L$-ICDE})
Given a social network $G$, a positive integer $L$, a threshold $\theta$, a support, $k$, of the trusses, the maximum radius, $r$, of seed communities, and a set, $Q$, of query keywords, the problem of \textit{top-$L$ most influential community detection over
social networks} (Top$L$-ICDE) retrieves $L$ seed communities $g_i$, such that:
\begin{itemize}
\item $g_i$ satisfy the constraints of seed communities (as given in Definition \ref{def:seed_community}), and;
\item these $L$ seed communities $g_i$ have the influenced communities, $g_i^\mathit{Inf}$, with the highest influential scores $\sigma(g_i)$,
\end{itemize}
where $\sigma(g_i)$ is given by Eq.~(\ref{eq:influence_score}). 
\end{definition}

\noindent {\bf A Variant of Top$L$-ICDE (Diversified Top-$L$ Most Influential Community Detection Over Social Networks).} In Definition~\ref{def:topl_icde_problem}, the Top$L$-ICDE problem returns $L$ individual seed communities with the highest influences. Note that, these $L$ individual communities may influence the same users (i.e., with high overlaps of the influenced users). In order to achieve higher user impacts, in this paper, we also consider a variant of Top$L$-ICDE, namely \textit{diversified top-$L$ most influential community detection over social networks} (DTop$L$-ICDE), which obtains a set of $L$ diversified communities with the highest collaborative influences on other users. 

Given a set, $S$, of communities, we formally define its diversity score, $D(S)$, to evaluate the collective influence of communities in $S$ on other users:
\begin{equation}
\label{eq:diversity_score}
    D(S)=\sum_{\forall v \in V(G)}{\max_{\forall g \in S}\{cpp (g, v)}\}.
\end{equation}

The diversity score in Eq.~(\ref{eq:diversity_score}) sums up the maximum possible community-to-user propagation probabilities, $cpp(g, v)$, from any community $g$ in $S$ to the influenced users $v$. Intuitively, a higher diversity score indicates higher impacts from communities in $S$.

For simplicity, we use $\Delta D_{g_i}(S)$ to represent the increment of the diversity score for adding the subgraph $g_i$ to set $S$, i.e., $\Delta D_{g_i}(S) = D(S\cup\{g_i\}) - D(S)$.

Next, we define our DTop$L$-ICDE problem which returns $L$ diversified seed communities with the highest diversity score (i.e., collaborative influences on other users).

\begin{definition} (\textbf{Diversified Top-$L$ Most Influential Community Detection Over Social Networks, DTop$L$-ICDE})
\label{def:dtopl_icde_problem}
Given a social network $G$, a positive integer $L$, a threshold $\theta$, a support, $k$, of the trusses, the maximum radius, $r$, of seed communities, and a set, $Q$, of query keywords, the problem of \textit{diversified top-$L$ most influential community detection over
social networks} (DTop$L$-ICDE) retrieves a set, $S$ of $L$ seed communities $g_i$, such that:
\begin{itemize}
\item $g_i$ satisfy the constraints of seed communities (as given in Definition~\ref{def:seed_community}), and;
\item the set $S$ of $L$ seed communities $g_i$ has the highest diversity score $D(S)$ (as given by Eq.~(\ref{eq:diversity_score})).
\end{itemize}
\end{definition}

Intuitively, the DTop$L$-ICDE problem (given in Definition \ref{def:dtopl_icde_problem}) finds a set, $S$, of $L$ communities that have the highest collaborative influence, that is, the diversity score $D(S)$ in Eq.~(\ref{eq:diversity_score}), which is defined as the summed influence from communities $g$ in $S$ to the influenced users.

\noindent {\bf Challenges.} A straightforward method to tackle the Top$L$-ICDE problem is to first obtain all possible seed communities (subgraphs) of the data graph $G$, then check the constraints of these seed communities, and finally rank these seed communities based on their influential scores. Similarly, for DTop$L$-ICDE, we can also compute the diversity score for any combination of $L$ communities, and choose a set of size $L$ with the highest diversity score. However, such straightforward methods are quite inefficient, especially for large-scale social networks, due to the high costs of the constraint checking over an exponential number of possible seed communities (or community combinations), as well as the costly computation of influence/diversity scores (as given in Eqs.~(\ref{eq:influence_score}) and ~(\ref{eq:diversity_score})). Thus, the processing of the Top$L$-ICDE problem (and its variant DTop$L$-ICDE) raises up the efficiency and scalability issues for detecting (diversified) top-$L$ most influential communities over large-scale social networks.

\begin{table}[t!]
%\vspace{-0.1in}
\caption{Symbols and Descriptions}\scriptsize
\vspace{-0.1in}
\label{table:symbols_and_descriptions}
\begin{center}
\begin{tabular}{|l|p{6.1cm}|}
\hline
\textbf{Symbol}&{\textbf{Description}} \\
\hline\hline
$G$ & a social network\\
\hline
$V(G)$ & a set of $n$ vertices $v_i$\\
\hline
$E(G)$ & a set of edges $e(u,v)$\\
\hline
%$n$ & the number of vertices of $G$\\
%\hline
$g$ (or $g_i$) & a seed community (subgraph)\\
\hline
$p_{u,v}$& a propagation probability that user $u$ activates its neighbor $v$\\
\hline
$g^\mathit{Inf}$ (or $g^\mathit{Inf}_i$) & the influenced community of a seed community $g$ (or $g_i$) \\
\hline
$hop(v_i, r)$ & a subgraph centered at vertex $v_i$ and with radius $r$ \\
\hline
$v_i.W$ & a set of keywords that user $v_i$ is interested in \\
\hline
$v_i.BV$ & a bit vector with the hashed keywords in $v_i.W$ \\
\hline
$\sigma(g)$ & the influential score of the influenced community $g^\mathit{Inf}$\\
\hline
$Q$ & a set of query keywords\\
\hline
$Q.BV$ & a bit vector with the hashed query keywords in $Q$ \\
\hline
$k$ & the support threshold in $k$-truss for seed communities\\
\hline
$sup(e_{u,v})$ & the support of edge $e_{u,v}$\\
\hline
$r_{max}$ & the maximum possible radius of seed communities\\
\hline
$r$ & the user-specified radius of seed communities\\
\hline
$\theta$ & the influence threshold\\
\hline
\end{tabular}
\label{tab1}
\end{center}
\end{table}

\section{Our Top$L$-ICDE Processing Framework}
\label{sec:solution_framework}
Algorithm \ref{algo:the_solution_framework} presents our 
framework to efficiently answer Top$L$-ICDE queries, 
%we propose a Top$L$-ICDE framework in Algorithm \ref{algo:the_solution_framework}, 
which consists of two phases, i.e., \textit{offline pre-computation} and \textit{online Top$L$-ICDE processing phases}.

In the first offline pre-computation phase, we pre-process the social network graph by pre-computing data (e.g., influential score bounds) to facilitate online query answering and constructing an index over these pre-computed data. In particular, for each vertex $v_i$ in data graph $G$, we first hash its associated keyword set $v_i.W$ into a bit vector $v_i.BV$ (lines 1-2). Then, for each $r$-radius subgraph $hop(v_i, r)$ centered at vertex $v_i$ and with radius $r$ ($\in [1, r_{max}]$), we offline pre-compute data (e.g., support/influence bounds) to facilitate the pruning (lines 3-5). Next, we construct a tree index $\mathcal{I}$over the pre-computed data (line 6).

In the online Top$L$-ICDE processing phase, for each query, we traverse the index $\mathcal{I}$ to efficiently retrieve candidate seed communities, by integrating our proposed effective pruning strategies (w.r.t., keyword, support, radius, and influential score) (lines 7-8). Finally, we refine these candidate seed communities by computing their actual influential scores and return top-$L$ most influential communities with the highest influential scores (line 9). 

\noindent {\bf Discussions on the DTop$L$-ICDE Framework.} We will discuss the DTop$L$-ICDE framework later in Section \ref{sec:online_dtopl_icde_process}, which follows the Top$L$-ICDE framework but applies specifically designed pruning/refinement techniques to retrieve $L$ diversified communities (i.e., DTop$L$-ICDE query answers).

%{\color{blue} For the variant, DTop$L$-ICDE, we also design a two-phase algorithm, which contains the offline pre-computation for some data and index construction, and the online processing. Differently, an extra refinement phase is executed for the result set after the online processing to retrieve the answers to the DTop$L$-ICDE query.}

\begin{algorithm}[!ht]
\caption{{\bf The Top$L$-ICDE Process Framework}}\small
\label{algo:the_solution_framework}
\KwIn{
    \romannumeral1) a social network $G$;
    \romannumeral2) a set, $Q$, of query keywords;    
    \romannumeral3) the support, $k$, of the truss for each seed community;
    \romannumeral4) the maximum radius, $r$, of seed communities;
    \romannumeral5) the influence threshold $\theta$, and;
    \romannumeral6) integer parameter $L$

}
\KwOut{
    a set, $S$, of top-$L$ seed communities
}

\tcp{\bf offline pre-computation phase}
\For{each $v_i \in V(G)$}{
    hash keywords in $v_i.W$ into a bit vector $v_i.BV$\;
    \For{$r = 1$ \KwTo $r_{max}$}{
        extract $r$-hop subgraph $hop(v_i, r)$ of vertex $v_i$\;
        offline pre-compute data, $v_i.R$, w.r.t. the support upper bound $ub\_sup(\cdot)$ and influence upper bound $\mathit{Inf}_{ub}$ for subgraph $hop(v_i, r)$\;
    }
}
build a tree index $\mathcal{I}$ over graph $G$ with pre-computed data as aggregates\;

\tcp{\bf online Top$L$-ICDE processing phase}
\For{each Top$L$-ICDE query}{
    traverse the tree index $\mathcal{I}$ by applying \textit{keyword}, \textit{support}, \textit{radius}, and \textit{influential score pruning}  strategies to retrieve candidate seed communities\;
    refine candidate seed communities to obtain top-$L$ seed communities with the highest influential scores\;    
}
\end{algorithm}

\section{Pruning Strategies}
\label{sec:pruning_strategies}
In this section, we present effective pruning strategies to reduce the Top$L$-ICDE problem search space in our framework (line 8 of Algorithm \ref{algo:the_solution_framework}). 

\subsection{Keyword Pruning}
In this subsection, we first provide an effective \textit{keyword pruning} method. From Definition \ref{def:seed_community}, any vertex in the seed community $g$ must contain at least one query keyword in $Q$. Thus, our keyword pruning method filters out those candidate subgraphs $g$ containing some vertices without query keywords.

\begin{lemma}
\label{lemma:keyword_pruning}
{\bf (Keyword Pruning)} Given a set, $Q$, of query keywords and a candidate subgraph $g$, subgraph $g$ can be safely pruned, if there exists at least one vertex $v_i \in V(g)$ such that: $v_i.W \cap Q = \emptyset$ holds, where $v_i.W$ is the keyword set associated with vertex $v_i$.
\end{lemma}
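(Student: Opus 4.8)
The plan is to give a short proof by contradiction that rests entirely on the keyword constraint in Definition~\ref{def:seed_community}. First I would make precise what ``safely pruned'' must mean here: discarding a candidate subgraph $g$ is \emph{safe} if $g$ can never be a valid seed community, and hence can never appear in any Top$L$-ICDE answer (Definition~\ref{def:topl_icde_problem}); thus removing $g$ from the search space cannot alter the final top-$L$ result. Establishing this equivalence up front reduces the lemma to showing that the hypothesis forces $g$ to violate the seed-community constraints.

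The core step is then immediate. Suppose, toward a contradiction, that $g$ both contains a vertex $v_i \in V(g)$ with $v_i.W \cap Q = \emptyset$ and is nonetheless a valid seed community. By the fourth bullet of Definition~\ref{def:seed_community}, every vertex $v_l \in V(g)$ must satisfy $v_l.W \cap Q \neq \emptyset$. Instantiating this universally quantified condition at $v_l = v_i$ gives $v_i.W \cap Q \neq \emptyset$, which directly contradicts the hypothesis $v_i.W \cap Q = \emptyset$. Hence no valid seed community can contain such a vertex, so $g$ fails the seed-community requirements and is a false alarm; by the safety equivalence above, pruning it is correct.

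The only point that warrants care --- and the closest thing to an obstacle in an otherwise one-line argument --- is the safety justification rather than the contradiction itself: I must confirm that the keyword condition is a \emph{hard, per-vertex necessary} requirement (a universal quantifier over $V(g)$), so that a single offending vertex disqualifies the entire subgraph independently of the truss, radius, and center-vertex constraints. Finally, I would add a remark connecting the lemma to the offline-precomputed structures: the test $v_i.W \cap Q = \emptyset$ can be evaluated efficiently via the hashed bit vectors $v_i.BV$ and $Q.BV$ using a bitwise AND, so the pruning is cheap to apply during index traversal. This is an implementation note and not part of the logical proof.
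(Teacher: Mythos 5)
Your proof is correct and follows essentially the same route as the paper's: both reduce the claim to the fourth bullet of Definition~\ref{def:seed_community}, observing that a single vertex with $v_i.W \cap Q = \emptyset$ violates the universally quantified keyword requirement, so $g$ cannot be a seed community and discarding it cannot affect the answer set. Your version is merely more explicit about what ``safely pruned'' means and phrases the one-line argument as a contradiction, but the substance is identical.
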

\begin{proof}
\label{proof:keyword_pruning_lemma}
Intuitively, if $v_i.W \cap Q = \emptyset$ holds for any vertex in $g$, it means that user $v_i$ in the social network is not interested in any keyword in the query keyword set. It does not satisfy the Definition~\ref{def:seed_community} so that subgraph $g$ can be safely pruned, which completes the proof. \qquad $\square$
\end{proof}
\nop{
\begin{proof}
Please refer to Appendix \ref{proof:keyword_pruning_lemma} for the detailed proof.
\end{proof}
}

\subsection{Support Pruning}
According to Definition \ref{def:seed_community}, the seed community $g$ should be a $k$-truss \cite{cohen2008TrussesCohesiveSubgraphs}, that is, the support $sup(e_{u, v})$ of each edge $e_{u,v} \in E(g)$ (defined as the number of triangles that contain edge $e_{u,v}$ in the seed community $g$) must be at least $(k-2)$.

Assume that we can offline obtain an upper bound, $ub\_sup(e_{u, v})$, of the support $sup(e_{u, v})$ on edge $e_{u,v}$ in $g$. Then, we have the following lemma to discard those candidate seed communities $g$ containing some edges with low support.
\begin{lemma}
\label{lemma:support_pruning}
    {\bf (Support Pruning)} Given a seed community $g$ and a parameter $k$, subgraph $g$ can be safely pruned if there exists 
    %at least 
    an edge $e_{u,v}\in E(g)$ satisfying 
    $ub\_sup(e_{u,v}) < (k-2)$.
\end{lemma}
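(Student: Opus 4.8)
The plan is to argue by contradiction, leveraging the soundness of the offline upper bound $ub\_sup(\cdot)$ together with the defining $k$-truss requirement of a seed community. First I would recall from Definition~\ref{def:seed_community} that every valid seed community $g$ must be a $k$-truss, which by the definition of $k$-truss means each edge $e_{u,v} \in E(g)$ is contained in at least $(k-2)$ triangles within $g$; equivalently, $sup(e_{u,v}) \geq k-2$ must hold for all edges of $g$. This is the structural constraint that any candidate surviving the pruning stage must eventually satisfy.

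Next I would invoke the key property of the pre-computed quantity, namely that $ub\_sup(e_{u,v})$ is a genuine upper bound on the in-community support, i.e. $sup(e_{u,v}) \leq ub\_sup(e_{u,v})$. Suppose, toward a contradiction, that $g$ is a valid seed community yet contains an edge $e_{u,v}$ with $ub\_sup(e_{u,v}) < k-2$. Chaining the two inequalities gives $sup(e_{u,v}) \leq ub\_sup(e_{u,v}) < k-2$, so $sup(e_{u,v}) < k-2$, which directly violates the $k$-truss condition recalled above. Hence no such $g$ can be a valid seed community, and any candidate $g$ containing an edge whose upper bound falls below $k-2$ may be discarded without losing any true answer — this is precisely the claimed \emph{safe} pruning.

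The only genuinely nontrivial step — and the one I expect to require care — is justifying that $ub\_sup(e_{u,v})$ really dominates $sup(e_{u,v})$. Since the bound is pre-computed offline over an ambient subgraph (e.g., the $r$-hop neighborhood $hop(v_i,r)$, or the whole graph $G$) whereas the support in Definition~\ref{def:seed_community} is measured inside the smaller community $g$, the argument hinges on the monotonicity of triangle counts under subgraph inclusion: because $g$ is a subgraph of that ambient graph, every triangle through $e_{u,v}$ in $g$ is also a triangle through $e_{u,v}$ in the larger graph, so the in-community count cannot exceed the ambient count. Establishing this monotonicity (and confirming that the offline construction of $ub\_sup$ indeed yields such an ambient triangle count) is the crux of the argument; once it is in place, the inequality chain and the pruning conclusion follow immediately.
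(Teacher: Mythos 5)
Your proof is correct and follows essentially the same route as the paper's: invoke the $k$-truss requirement $sup(e_{u,v}) \geq k-2$ from Definition~\ref{def:seed_community} and combine it with the domination $sup(e_{u,v}) \leq ub\_sup(e_{u,v})$. In fact you are more explicit than the paper, whose proof of the lemma only restates the $k$-truss condition and defers the justification that $ub\_sup(\cdot)$ truly dominates the in-community support (via monotonicity of triangle counts under subgraph inclusion) to a separate discussion paragraph — the very step you correctly identify as the crux.
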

\begin{proof}
\label{proof:support_pruning_lemma}
For a subgraph $g$, we can calculate the support $sup(e_i)$ for each $e_i \in E(g)$. In the definition of k-truss\cite{cohen2008TrussesCohesiveSubgraphs}, the value of $sup(e_i)$ is the number of triangles made by $e_i$ and other pairs of edges. Moreover, in a $k$-truss subgraph, each edge $e$ must be reinforced by at least $k - 2$ pairs of edges, making a triangle with edge $e$. According to our definition of seed community (Definition~\ref{def:seed_community}), subgraph $g$ can be safely pruned by a specific $k$, which completes the proof. \qquad $\square$
\end{proof}
\nop{
\begin{proof}
Please refer to Appendix \ref{proof:support_pruning_lemma} for the detailed proof.
\end{proof}
}

\noindent {\bf Discussions on How to Obtain the Support Upper Bound $\mathbf{ub\_sup(e_{u,v})}$.} To enable the support pruning, we need to calculate the support upper bound $ub\_sup(e_{u,v})$ of edge $e_{u,v}$ in a seed community $g$. Since the seed community $g$ is a subgraph of the data graph $G$, the support of edge $e_{u,v}$ in $g$ is thus smaller than or equal to that in $G$ (in other words, the number of triangles containing $e_{u,v}$ in $g$ is less than or equal to that in $G$). Therefore, we can use the edge support in the data graph $G$ (or any supergraph of $g$) as the support upper bound $ub\_sup(e_{u,v})$ of edge $e_{u,v}$.

\subsection{Radius Pruning}
From Definition~\ref{def:topl_icde_problem}, the maximum radii, $r$, of seed communities are online specified by users, which limits the shortest path distance between the center vertex and any other vertices to be not more than $r$. We provide the following pruning lemma with respect to radius $r$.

\begin{lemma}
\label{lemma:radius_pruning}
{\bf (Radius Pruning)} Given a subgraph $g$ (centered at vertex $v_i$) and the maximum radius, $r$, of seed communities, subgraph $g$ can be safely pruned, if there exists a vertex $v_l\in V(g)$ such that $dist(v_i, v_l) > r$, where function $dist(x, y)$ outputs the number of hops between vertices $x$ and $y$ in $g$.
\end{lemma}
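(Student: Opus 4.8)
The plan is to derive the result directly from the radius constraint built into Definition~\ref{def:seed_community}, in exactly the same spirit as the proofs of Lemma~\ref{lemma:keyword_pruning} and Lemma~\ref{lemma:support_pruning}. The center vertex $v_i$ of $g$ plays the role of $v_q$ in Definition~\ref{def:seed_community}, and that definition requires \emph{every} vertex $v_l \in V(g)$ to satisfy $dist(v_q, v_l) \le r$. Thus the whole argument reduces to observing that the hypothesis of the lemma directly negates one of the four defining conditions of a seed community, so the offending subgraph cannot be a valid answer.

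Concretely, I would first restate the relevant clause: for $g$ to qualify as a seed community, the shortest-path distance in $g$ from the center to each member vertex must be at most $r$. Second, I would assume the hypothesis, i.e.\ there exists some $v_l \in V(g)$ with $dist(v_i, v_l) > r$. Third, I would conclude that this single witness vertex already violates the radius clause, so $g$ fails Definition~\ref{def:seed_community} and hence can never appear in any Top$L$-ICDE answer. Since discarding a subgraph that is provably not a seed community cannot remove any correct answer, the pruning is safe, which completes the argument.

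The one point that warrants care — and the only place this proof differs substantively from the keyword and support versions — is that $dist(\cdot,\cdot)$ is measured \emph{inside} $g$ rather than in the full graph $G$. I would note explicitly that this is consistent: the lemma's hypothesis and the definition both refer to distances within $g$, so there is no mismatch to reconcile. I do not expect any genuine obstacle; the effort lies entirely in phrasing the contrapositive cleanly and making the safety (no-false-negatives) guarantee explicit, not in any nontrivial reasoning about distances or graph structure.
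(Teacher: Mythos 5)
Your proposal is correct and matches the paper's own proof: both argue that a vertex $v_l$ with $dist(v_i, v_l) > r$ directly violates the radius clause of Definition~\ref{def:seed_community}, so $g$ cannot be a seed community and pruning it is safe. Your explicit remark that distances are measured inside $g$ is a minor clarification the paper omits, but it does not change the argument.
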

\begin{proof}
\label{proof:radius_pruning_lemma}
According to the second bullet in Definition~\ref{def:seed_community}, a subgraph with maximum radius $r$ means that there exists a center vertex whose \textit{the shortest path distance}($dist(.,.)$) to any other vertex is less than $r$. For a subgraph $g$ and vertex $v_i \in g$, if there exists vertex $v_j \in g$ whose $dist(v_i, v_j)$ is greater than $r$, it does not satisfy the definition of seed community (Definition~\ref{def:seed_community}) so that subgraph $g$ can be safely pruned, which completes the proof. \qquad $\square$
\end{proof}

\nop{
\begin{proof}
Please refer to Appendix \ref{proof:radius_pruning_lemma} for the detailed proof.
\end{proof}
}

If a subgraph $g$ (centered at vertex $v_i$) has the radius greater than $r$, then it does not satisfy the radius constraint of the seed community.

% If a subgraph $g$ (centered at vertex $v_i$) with a radius greater than $r$, then it does not satisfy the radius constraint of the seed community. 

This radius pruning method can enable the offline pre-computation by extracting subgraphs for any possible radius $r\in [1, r_{max}]$, that is, $hop(v_i, r)$. In other words, those vertices with distance to vertex $v_i$ greater than radius $r$ can be safely ignored, so we only need to focus on $r$-hop subgraphs $hop(v_i, r)$.

\subsection{Influential Score Pruning}
\label{sec:influential_score_pruning}
Next, we discuss the \textit{influential score pruning} method below, which filters out those seed communities with low influential scores.

%{\color{blue}

\begin{lemma}
\label{lemma:influential_score_pruning}
{\bf (Influential Score Pruning)} Assume that we have obtained $L$ seed communities $g_i$ so far, and let $\sigma_L$ be the smallest influential score among these $L$ seed communities $g_i$. Any subgraph $g$ can be safely pruned, if it holds that $ub\_\sigma(g) \leq \sigma_L$, where $ub\_\sigma(g)$ is the upper bound of the influential score $\sigma(g)$. 
\end{lemma}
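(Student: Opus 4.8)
The plan is to prove the lemma by a short chain of inequalities that reduces the claim to the defining property of the bound $ub\_\sigma(g)$ together with the meaning of $\sigma_L$ as the current $L$-th largest influential score. First I would invoke the definition of $ub\_\sigma(g)$ as an \emph{upper bound} on the true influential score, which by construction gives $\sigma(g) \leq ub\_\sigma(g)$ for every candidate subgraph $g$. Combining this with the pruning hypothesis $ub\_\sigma(g) \leq \sigma_L$ and transitivity of $\leq$ immediately yields $\sigma(g) \leq \sigma_L$.

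Next I would argue that this inequality makes $g$ redundant for the Top$L$-ICDE answer. By construction, $\sigma_L$ is the smallest influential score among the $L$ seed communities $g_i$ collected so far, so each of those $L$ communities satisfies $\sigma(g_i) \geq \sigma_L \geq \sigma(g)$. Hence there already exist $L$ valid seed communities whose influential scores are no smaller than $\sigma(g)$, and admitting $g$ could at best tie the current $L$-th community without increasing the set of the $L$ highest scores. Since Top$L$-ICDE (Definition~\ref{def:topl_icde_problem}) only requires returning $L$ communities with the highest influential scores, and ties may be broken arbitrarily, discarding $g$ cannot remove any community that is strictly required in some optimal answer; therefore $g$ is safely pruned, which completes the proof.

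I expect the genuinely substantive work to lie in the one step I am allowed to take as given here, namely establishing that $ub\_\sigma(g)$ is in fact a valid upper bound on $\sigma(g)$ (to be supplied by the offline pre-computation); once that property holds, the inequality chain above is routine. The only subtlety internal to this proof is the boundary case of equality $ub\_\sigma(g) = \sigma_L$, which I would dispatch explicitly via the tie-breaking observation: equality still leaves $L$ communities that are at least as good as $g$, so no strict improvement is possible and pruning remains safe. A secondary remark worth stating is that the lemma only becomes operative once at least $L$ candidate communities have been found, so that $\sigma_L$ is well defined; before that threshold, no influential-score pruning is applied.
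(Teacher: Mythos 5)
Your proof is correct and follows essentially the same route as the paper's: invoke $\sigma(g)\leq ub\_\sigma(g)$, chain it with the hypothesis to get $\sigma(g)\leq\sigma_L$, and conclude that $g$ cannot displace any of the $L$ communities already found. In fact your version is slightly more careful than the paper's, which writes the bound as a strict inequality and only treats the case $\sigma_L > ub\_\sigma(g)$, whereas you explicitly dispatch the equality/tie-breaking boundary case that the lemma statement ($ub\_\sigma(g)\leq\sigma_L$) actually requires.
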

\begin{proof}
\label{proof:influence_upperbound_pruning_lemma}
Given the smallest influential score among $L$ seed communities $g_i$, $\sigma_L$, and a seed community $g$, the influential score of $g$, $\sigma(g)$, and the upper bound of $\sigma(g)$, $ub\_\sigma(g)$. It holds that $ub\_\sigma(g) > \sigma(g)$ in accordance with the definition of upper bound. If $\sigma_L > ub\_\sigma(g)$, there is $\sigma_L > ub\_\sigma(g) > \sigma(g_i)$. So $g$ can not be added to the top $L$ set and can be safely pruned, which completes the proof. \qquad $\square$
\end{proof}

\nop{
\begin{proof}
Please refer to Appendix \ref{proof:influence_upperbound_pruning_lemma} for the detailed proof.
\end{proof} 
}

%[the notation $g_i$ stands for a set of seed communities? should we delete this notation?]}

\noindent {\bf Discussions on How to Obtain the Upper Bound $ub\_\sigma(g)$ of the Influential Score.} Based on Eq.~(\ref{eq:influence_score}), the influential score $\sigma(g)$ of seed community $g$ is given by summing up the community-to-user propagation probabilities $cpp(g, v)$, for $v\in g^\mathit{Inf}$, where $cpp(g, v) \geq \theta$. Since threshold $\theta$ is online specified by the user, we can offline pre-select $m$ thresholds $\theta_1$, $\theta_2$, ..., and $\theta_m$ (assuming $\theta_1 < \theta_2< ... <\theta_m$), and pre-calculate the influential scores $\sigma_z(g)$ w.r.t. thresholds $\theta_z$ (for $1\leq z\leq m$). 
Given an online threshold $\theta$, if $\theta \in [\theta_z, \theta_{z+1})$ holds, we will use $\sigma_z(g)$ as the influential score upper bound $ub\_\sigma(g)$, where $\sigma_z(g)$ is the influential score of $g$ using threshold $\theta_z$.

\section{Offline Pre-Computation and Indexing}
\label{sec:offline_process}
In this section, we discuss how to offline pre-compute data for social networks to facilitate effective pruning in Section \ref{subsec:pre-computation}, and construct indexes over these pre-computed data to help with online Top$L$-ICDE processing in Section \ref{subsec:indexing_mechanism}.

\begin{algorithm}[!ht]
\caption{{\bf Offline Pre-Computation}}\small
\label{algo:offline_precomputation}
\KwIn{
    \romannumeral1) a social network $G$;
    \romannumeral2) the maximum value of radius $r_{max}$;
    \romannumeral3) $m$ influence thresholds \{$\theta_1, \theta_2,$ $\ldots,$ $\theta_m$\};
}
\KwOut{
    pre-computed data $v_i.R$ for each vertex $v_i$;
}

\For{each $v_i \in V(G)$}{
    \tcp{keyword bit vectors}
    hash all keywords in the keyword set $v_i.W$ into a bit vector $v_i.BV$\;
    $v_i.R = \{v_i.BV\}$\;
    \tcp{edge support upper bounds}
    \For{each $e_{u,v} \in E(hop(v_i, r_{max}))$}{
        compute edge support upper bounds $ub\_sup(e_{u,v})$ w.r.t. $hop(v_i, r_{max})$\;
    }
} 

\For{each $v_i \in V(G)$}{
    \For{each $r=1$ \KwTo $r_{max}$}{
        $v_i.BV_r = \bigvee_{\forall v_l\in V(hop(v_i, r))} v_l.BV$\;
        $v_i.ub\_sup_r = \max_{\forall e_{u,v} \in E(hop(v_i, r))} ub\_sup(e_{u,v})$\;
        \tcp{influential score upper bounds}
        \For{each $\theta_z \in \{\theta_1, \theta_2, \ldots, \theta_m\}$}{
            $\sigma_z(hop(v_i,r)) = {\sf calculate\_influence}(hop(v_i,r), \theta_z)$\;
            add $(\sigma_z(hop(v_i,r)), \theta_z)$ to $v_i.R$\;
        }
        add $v_i.BV_r$ and $v_i.ub\_sup_r$ to $v_i.R$\;
    }
}
\Return $v_i.R$\;
\end{algorithm}

\subsection{Offline Pre-Computed Data for Top$L$-ICDE Processing}
\label{subsec:pre-computation}

In order to facilitate online Top$L$-ICDE processing, we perform offline pre-computations over data graph $G$ and obtain some aggregate information of potential seed communities, which can be used for our proposed pruning strategies to reduce the problem search space.  
Specifically, for each vertex $v_i\in V(G)$, we first hash the keyword set $v_i.W$ into a bit vector $v_i.BV$ of size $B$. Each edge $e_{u,v}$ is associated with the edge support upper bound $ub\_sup(e_{u,v})$ in $r_{max}$-hop subgraph $hop(v_i, r_{max})$ (i.e., a pair of $(v_i, ub\_sup(e_{u,v})$).
Next, we use the radius pruning (as given in Lemma \ref{lemma:radius_pruning}) to enable offline pre-computations of $r$-hop subgraphs. In particular, starting from each vertex $v_i$, we traverse the data graph $G$ in a breadth-first manner (i.e., BFS), and obtain $r$-hop subgraphs, $hop(v_i, r)$, centered at $v_i$ with radii $r\in [1, r_{max}]$. For each $r$-hop subgraph, we calculate and store aggregates in a list $v_i.R$, in the form $(v_i.BV_r, v_i.ub\_sup_r, [(\sigma_z, \theta_z)])$ as follows:

\begin{itemize}
    \item a bit vector, $v_i.BV_r$, which is obtained by hashing all keywords in the keyword sets $v_l.W$ of vertices $v_l$ in the $r$-hop subgraph $hop(v_i, r)$ into a position in the bit vector (i.e., $v_i.BV_r = \bigvee_{\forall v_l\in V(hop(v_i, r))} v_l.BV$);
    \item an upper bound, $v_i.ub\_sup_r$, of all support bounds $ub\_sup(e_{u,v})$ for edges $e_{u,v}$ in the $r$-hop subgraph $hop(v_i, r)$ (i.e., $v_i.ub\_sup_r = \max_{\forall e_{u,v} \in E(hop(v_i, r))} ub\_sup(e_{u,v})$), and;
    \item $m$ pairs of influential score upper bounds and influence thresholds $(\sigma_z(hop(v_i, r)), \theta_z)$ (for $1\leq z\leq m$).
\end{itemize}

More details of the aggregates are provided below:

\noindent{\bf The Computation of Keyword Bit Vectors $v_i.BV_r$:} We first obtain the keyword bit vector $v_i.BV$ of size $B$ for each vertex $v_i\in V(G)$, and then compute the one, $v_i.BV_r$, for $r$-hop subgraph $hop(v_i, r)$.
Specifically, for each vertex $v_i$, we first initialize all bits in a vector $v_i.BV$ with zeros. Then, for each keyword $w$ in the keyword set $v_i.W$, we use a hashing function  $f(w)$ that maps a keyword $w$ to an integer between $[0, B-1]$ and set the $f(w)$-th bit position to 1 (i.e., $v_i.BV[f(w)] = 1$).
Next, for all vertices $v_l$ in $r$-hop subgraph $hop(v_i, r)$, we perform a bit-OR operator over their bit vectors $v_l.BV$. That is, we have $v_i.BV_r = \bigvee_{\forall v_l\in V(hop(v_i, r))} v_l.BV$.

\noindent{\bf The Computation of Support Upper Bounds $v_i.ub\_sup_r$:} For each radius $r\in [1, r_{max}]$, we compute a support upper bound $v_i.ub\_sup_r$ as follows. We first calculate the support upper bound, $ub\_sup(e_{u,v})$, for each edge $e_{u,v}$ in the $r_{max}$-hop subgraph $hop(v_i, r_{max})$. Then, the maximum $ub\_sup(e_{u,v})$ among all edges in $hop(v_i, r_{max})$ is selected as the support upper bound $v_i.ub\_sup_r$.

\noindent{\bf The Computation of Influential Score Upper Bounds $\sigma_z(hop(v_i, r))$ (w.r.t. $\theta_z$):} Since seed communities $g$ are subgraphs of some $r$-hop subgraphs $hop(v_i, r)$, as given in Eq.~(\ref{eq:influence_score}), the influential score $\sigma_z(hop(v_i, r))$ (w.r.t. influence threshold $\theta_z$) must be greater than or equal to $\sigma_z(g)$ and is thus an influential score upper bound. In other words, we overestimate the influence of a seed community $g$ inside $r$-hop subgraph $hop(v_i, r)$, by assuming that $g=hop(v_i, r)$.

To calculate the influential score $\sigma_z(hop(v_i, r))$, we first start from each $r$-hop subgraph $hop(v_i, r)$ ($=g$) and then expand $hop(v_i, r)$ to obtain its influenced community $g^\mathit{Inf}$ via a graph traversal. A vertex $u$ is included in $g^\mathit{Inf}$, if it holds that $cpp(g, u) \geq \theta_z$, where $cpp(g, u)$ is given by Eq.~(\ref{eq:cpp}). The graph traversal algorithm terminates, when $cpp(g, u) < \theta_z$ holds. Finally, we use Eq.~(\ref{eq:influence_score}) to calculate the influential score $\sigma_z(hop(v_i, r))$ of the expanded graph (w.r.t., $\theta_z$).

\noindent{\bf Offline Computation Algorithm:} Algorithm \ref{algo:offline_precomputation} illustrates the epseudo code of offline data pre-computation in the data graph $G$ that can facilitate online Top$L$-ICDE processing. In particular, for each vertex $v_i \in V(G)$, all the keywords of the keyword set $v_i.W$ are hashed into a bit vector $v_i.BV$ and stored in a list $v_i.R$ (lines 1-3). Then, we compute the support for each edge $e_{u,v}$ in $hop(v_i,r_{max})$ and use the maximum one as the support upper bound $ub\_sup(e_{u,v})$ for the edge $e_{u,v}$ (lines 4-5). 
Next, for each vertex $v_i$ and pre-selected radius $r$ ranging from $1$ to $r_{max}$, we calculate the pre-computed data for subgraph $hop(v_i, r)$, including keyword bit vector $v_i.BV_r$ (lines 6-8), edge support upper bound $v_i.ub\_sup_r$ (line 9), and the upper bound of influential score $\sigma_z(hop(v_i, r))$ w.r.t. $\theta_z$ (lines 10-11). All these pre-computed data are added to the list $v_i.R$ (lines 12-13), which is returned as the output of the algorithm (line 14).

\noindent{\bf Complexity Analysis:}
In Algorithm~\ref{algo:offline_precomputation}, for each $v_i \in V(G)$ in the first loop, the time complexity of computing a keyword bit vector $v_i.BV$ is given by $O(|W|)$ (lines 2-3). Let $avg\_deg$ denote the average number of vertex degrees. Since there are ${avg\_deg}^{r_{max}}$ edges in $r_{max}$-hop and the cost of the support upper bound computation is a constant (counting the common neighbors), the time complexity of obtaining edge support upper bounds is given by $O({avg\_deg}^{r_{max}})$ (lines 4-5).

In the second loop of each $v_i \in V(G)$, there are ${avg\_deg}^{r-1}$ vertices in the $r$-hop w.r.t. $r$. Thus, for each $r\in[1,r_{max}]$, the time complexity of computing $v_i.BV_r$ and $v_i.ub\_sup_r$ is given by $O(B{avg\_deg}^{r-1})$ and $O({avg\_deg}^{r-1})$, respectively (lines 8-9). Let $g_r^\mathit{inf}$ as the subgraph containing the influenced users from $hop(v_i,r)$. As described in Section \ref{subsec:online_topl_icde_algorithm}, the time complexity of {\sf calculate\_influence($\cdot$,$\cdot$)} is given by $O((|E(g_r^\mathit{inf})|+|V(g_r^\mathit{inf})|)log|V(g_r^\mathit{inf})|)$, the time complexity of obtaining influential score upper bounds is given by $O(m \cdot ((|E(g_r^\mathit{inf})|+|V(g_r^\mathit{inf})|)log|V(g_r^\mathit{inf})|))$ (lines 10-13).

Therefore, the time complexity of the total offline pre-computation is $O(|V(G)| \cdot (|W|+{avg\_deg}^{r_{max}} + r_{max}\cdot((B+1){avg\_deg}^{r-1} + m \cdot ((|E(g_r^\mathit{inf})|+|V(g_r^\mathit{inf})|)log|V(g_r^\mathit{inf})|))))$.

On the other hand, the space complexity of the two loops is $O(|V(G)|+|E(G)|)$ and $O(|V(G)|\cdot r_{max}\cdot (B+2m+1))$, respectively. Thus, the total space complexity of the offline pre-computation is given by $O(|V(G)|+|E(G)|+|V(G)|\cdot r_{max}\cdot (B+2m+1))$.

\subsection{Indexing Mechanism}
\label{subsec:indexing_mechanism}

In this subsection, we illustrate the details of building a tree index, $\mathcal{I}$, over pre-computed data of social networks $G$, which can be used for performing online Top$L$-ICDE processing. 

\noindent{\bf The Data Structure of Index $\mathcal{I}$:} We will construct a hierarchical tree index, $\mathcal{I}$, over social networks $G$, where each index node $N$ contains multiple entries $N_i$, each corresponding to a subgraph of $G$. 

Specifically, the tree index $\mathcal{I}$ contains two types of nodes, leaf and non-leaf nodes.

\noindent\underline{\it Leaf Nodes:} Each leaf node $N$ in index $\mathcal{I}$ contains multiple vertices $v_i \in V(G)$. Each vertex $v_i$ is associated with the following pre-computed data in $v_i.R$ (w.r.t. each possible radius $r \in [1, r_{max}]$).
\begin{itemize}
    \item a keyword bit vector $v_i.BV_r$;
    \item edge support upper bound $v_i.ub\_sup_r$, and;
    \item $m$ pairs of influential score upper bounds and influence thresholds ($\sigma_z(hop(v_i, r))$, $\theta_z$).
\end{itemize}

\noindent\underline{\it Non-Leaf Nodes:} Each non-leaf node $N$ in index $\mathcal{I}$ has multiple index entries, each of which, $N_i$, is associated with the following aggregate data below (w.r.t. each radius $r \in [1, r_{max}]$).
\begin{itemize}
    \item an aggregated keyword bit vector $N_i.BV_r = \bigvee_{\forall v_l\in N_i} v_l.BV_r$; 
    \item the maximum edge support upper bound $N_i.ub\_sup_r = \max_{\forall v_l\in N_i} v_l.ub\_sup_r$;
    \item $m$ pairs, $(N_i.\sigma_z, \theta_z)$, of maximum influential score upper bounds and influence thresholds (for $N_i.\sigma_z = \max_{\forall v_l\in N_i} \sigma_z(hop(v_l, r))$), and;
    \item a pointer, $N_i.ptr$, pointing to a child node.
\end{itemize}

\noindent{\bf Index Construction:} To construct the tree index $\mathcal{I}$, we sorted all vertices by their average of $ub\_sup_r$ and $\sigma_z$ and recursively divided sorted vertices array into partitions of the similar sizes, and then obtain index nodes on different levels of the tree index. Then, we associate each index entry in non-leaf nodes (or each vertex in leaf nodes) with its corresponding aggregates (or pre-computed data). 

\noindent{\bf Complexity Analysis:} For the tree index $\mathcal{I}$, we denote the fanout of each non-leaf node $N$ as $\gamma$. According to our definition, since the number of leaf nodes is equal to the number of vertices $|V(G)|$, the depth of the tree index $\mathcal{I}$ is $\lceil \log_{\gamma}{|V(G)|} \rceil + 1$. On the other hand, the time complexity of recursive tree index construction is $O((\gamma^{dep}-1)/(\gamma-1))$. Therefore, the time complexity of our tree index construction is given by $O((\gamma^{\lceil \log_{\gamma}{|V(G)|} \rceil + 1}-1)/(\gamma-1))$.

For the space complexity, in the tree index $\mathcal{I}$, each non-leaf node $N$ has a list of child nodes pointers with the size $\gamma$ and $r_{max}$ tuples that contain a bit vector with size $B$, a support upper bound, $m$ pairs of threshold and score upper bound. 
Since the depth of the tree index is $\lceil \log_{\gamma}{|V(G)|} \rceil + 1$, the space complexity of all non-leaf nodes is $O(r_{max} \cdot (B+2m+\gamma) \cdot ((\gamma^{\lceil \log_{\gamma}{|V(G)|} \rceil}-1)/(\gamma-1)))$. On the other hand, each leaf node contains a keyword bit vector with size $B$, an edge support upper bound, and $m$ pairs of influential score upper bounds and influence thresholds. Thus, the space complexity of all leaf nodes is $O(|V(G)|\cdot (B+2m))$. In summary, the total space cost of our tree index is $O(r_{max} \cdot (B+2m+\gamma) \cdot ((\gamma^{\lceil \log_{\gamma}{|V(G)|} \rceil}-1)/(\gamma-1)))+O(|V(G)|\cdot (B+2m))$.

\section{Online Top$L$-ICDE Processing}
\label{sec:online_topl_icde_process}
For the online Top$L$-ICDE processing phase (see Algorithm ~\ref{algo:the_solution_framework}), we utilize the constructed tree indexes to conduct the Top$L$-ICDE processing, by integrating our effective pruning strategies and returning  top-$L$ most influential seed communities. 

%Section \ref{subsec:index_pruning} discusses effective pruning strategies over index nodes during the index traversal. Section \ref{subsec:online_topl_icde_algorithm} illustrates our proposed Top$L$-ICDE processing algorithm.

\subsection{Index Pruning}
\label{subsec:index_pruning}
In this subsection, we provide effective pruning heuristics on index nodes, which can filter out all candidate seed communities under index nodes. Proofs of lemmas are omitted here due to space limitations.

\noindent{\bf Keyword Pruning for Index Entries:} We utilize the aggregated keyword bit vector, $N_i.BV_r$, of an index entry $N_i$, and discard an index entry $N_i$ if none of $r$-hop subgraphs under $N_i$ contain some keyword in the query keyword set $Q$.
\begin{lemma}
    {\bf (Index-Level Keyword Pruning)} Given an index entry $N_i$ and a set, $Q$, of query keywords, entry $N_i$ can be safely pruned, if it holds that $N_i.BV_r \wedge Q.BV = \boldsymbol{0}$, where $Q.BV$ is a bit vector hashed from the query keyword set $Q$.
    \label{lemma:index_keyword_pruning}
\end{lemma}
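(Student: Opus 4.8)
The plan is to reduce the index-level test to the per-vertex keyword condition of Definition~\ref{def:seed_community} by exploiting the one-sided (Bloom-filter-style) nature of the hashed bit vectors. Recall that for any vertex $v$ and keyword $w \in v.W$ we have $v.BV[f(w)] = 1$, and that both aggregate vectors are formed by bitwise OR: $v_l.BV_r = \bigvee_{\forall v_m \in V(hop(v_l, r))} v_m.BV$ and $N_i.BV_r = \bigvee_{\forall v_l \in N_i} v_l.BV_r$. Unfolding these, $N_i.BV_r$ is the OR over all vertices appearing in any $r$-hop subgraph $hop(v_l, r)$ centered at a vertex $v_l$ indexed under $N_i$; hence a bit position is set in $N_i.BV_r$ exactly when some such vertex carries a keyword hashing to that position.

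First I would translate the hypothesis $N_i.BV_r \wedge Q.BV = \boldsymbol{0}$ into a statement about query keywords. Since $Q.BV$ has its bit set at every position $f(q)$ for $q \in Q$, the assumption forces $N_i.BV_r[f(q)] = 0$ for each $q \in Q$. By the OR-decomposition above, $N_i.BV_r[f(q)] = 0$ means that no vertex $v_m$ in any $r$-hop subgraph under $N_i$ has $v_m.BV[f(q)] = 1$; in particular, by the one-sided property of $f$, no such vertex contains the keyword $q$ itself. As this holds for every $q \in Q$, I conclude that every vertex appearing in any candidate $r$-hop subgraph under $N_i$ satisfies $v_m.W \cap Q = \emptyset$.

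Next I would invoke the per-vertex keyword pruning of Lemma~\ref{lemma:keyword_pruning}. Any candidate seed community $g$ that the traversal could extract beneath $N_i$ (at radius $r$) is a subgraph of some $hop(v_l, r)$ with $v_l \in N_i$, so all of its vertices lie among those just analyzed and therefore violate the requirement $v_m.W \cap Q \neq \emptyset$ of Definition~\ref{def:seed_community}. Lemma~\ref{lemma:keyword_pruning} then discards every such $g$, so no valid Top$L$-ICDE answer can be centered under $N_i$, and the entire entry $N_i$ may be pruned without loss.

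The main obstacle is ensuring that the bit-vector test is \emph{safe}, i.e. that collisions of the hash $f$ never cause a true answer to be dropped. The direction I rely on---a present keyword guarantees its bit is set, so a cleared bit guarantees the keyword's absence---is precisely the collision-free direction of Bloom-filter hashing; the converse (a set bit implying keyword presence) may fail, but that only produces false positives that survive pruning and are eliminated later during refinement, never false negatives. I would state this one-sidedness explicitly, since it is exactly what makes the implication ``$N_i.BV_r \wedge Q.BV = \boldsymbol{0}$ $\Rightarrow$ keyword absent'' valid and hence guarantees the pruning loses no genuine seed community.
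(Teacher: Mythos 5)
Your proof is correct and follows essentially the same route as the paper's: unfold the OR-aggregation of the bit vectors, use the one-sided guarantee that a cleared bit at position $f(q)$ certifies absence of keyword $q$ from every vertex under $N_i$, and conclude via the per-vertex keyword condition of Definition~\ref{def:seed_community} that no valid seed community can survive beneath the entry. Your explicit treatment of hash-collision safety (cleared bit implies absence, while set bits may be false positives caught at refinement) is a welcome elaboration of a point the paper's proof leaves implicit.
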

\begin{proof}
\label{proof:index_keyword_pruning_lemma}
The $N_i.BV_r$ is an aggregated keyword bit vector computed by $\bigvee_{\forall v_l\in N_i} v_l.BV_r$. According to the computation of keyword bit vector, the $f(w)$-th bit position valued $0$ means that any vertex $v_l\in N_i$ holds that $w \notin v_l.W$, where $w$ is a keyword and the $f(\cdot)$ is a hash function. Therefore, if $N_i.BV_r \wedge Q.BV = \boldsymbol{0}$ holds, each keyword in $Q$ is not included by any vertex belonging to $N_i$. $N_i$ can be safely pruned, which completes the proof. \qquad $\square$
\end{proof}
\nop{
\begin{proof}
Please refer to Appendix \ref{proof:index_keyword_pruning_lemma} for the detailed proof.
\end{proof}
}

\noindent{\bf Support Pruning for Index Entries:} We next use the support parameter $k$ in the $k$-truss constraint (as given in Definition~\ref{def:seed_community}) to prune an index entry $N_i$ with low edge supports.
\begin{lemma}
    {\bf (Index-Level Support Pruning)} Given an index entry $N_i$ and a support parameter $k$, entry $N_i$ can be safely pruned, if it holds that $N_i.ub\_sup_r < k$, where $N_i.ub\_sup_r$ is the maximum edge support upper bound in all $r$-hop subgraphs under $N_i$.
    \label{lemma:index_support_pruning}
\end{lemma}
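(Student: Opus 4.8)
The plan is to lift the vertex-level support pruning of Lemma~\ref{lemma:support_pruning} to an entire index entry, exploiting the fact that $N_i.ub\_sup_r$ is a \emph{max}-aggregation of per-edge support upper bounds. First I would unfold the two levels of aggregation fixed during offline pre-computation: $N_i.ub\_sup_r = \max_{\forall v_l \in N_i} v_l.ub\_sup_r$ and, for each such $v_l$, $v_l.ub\_sup_r = \max_{\forall e_{u,v} \in E(hop(v_l, r))} ub\_sup(e_{u,v})$. Composing these, $N_i.ub\_sup_r$ dominates the support upper bound $ub\_sup(e_{u,v})$ of \emph{every} edge occurring in \emph{every} $r$-hop subgraph $hop(v_l, r)$ indexed beneath $N_i$.

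Second, I would invoke the subgraph-monotonicity of support noted in the discussion after Lemma~\ref{lemma:support_pruning}: any candidate seed community $g$ indexed under $N_i$ is a subgraph of some $hop(v_l, r)$ with $v_l \in N_i$, so every edge $e_{u,v} \in E(g)$ also lies in $hop(v_l, r)$, and its support within $g$ satisfies the chain $sup_g(e_{u,v}) \le ub\_sup(e_{u,v}) \le v_l.ub\_sup_r \le N_i.ub\_sup_r$. The effect of this chain is to turn the single stored aggregate $N_i.ub\_sup_r$ into a simultaneous upper bound on the support of every edge of every community housed under $N_i$; this is precisely the step that upgrades the argument from per-community to index-level pruning.

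Third, I would combine the chain with the $k$-truss requirement of Definition~\ref{def:seed_community}. Under the hypothesis $N_i.ub\_sup_r < k$, the inequality chain forces $sup_g(e_{u,v})$ below the required support for \emph{every} edge of \emph{every} candidate $g$; hence by the argument of Lemma~\ref{lemma:support_pruning} each such $g$ fails the truss constraint, so no seed community survives beneath $N_i$ and the whole entry may be discarded without loss. Structurally this mirrors the index-level keyword argument of Lemma~\ref{lemma:index_keyword_pruning}, where an aggregate bit vector certifies that no child subgraph can carry a query keyword.

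The main obstacle, and the point I would treat most carefully, is reconciling the threshold constant. Lemma~\ref{lemma:support_pruning} prunes an edge once its support upper bound drops below $(k-2)$, the true $k$-truss support bound, whereas the stated index condition uses $N_i.ub\_sup_r < k$. I would make the support convention explicit and verify that pruning at the stated value never discards a genuine $k$-truss; if the two thresholds really differ by the constant offset of $2$, the clean resolution is to state the index condition as $N_i.ub\_sup_r < (k-2)$, so that the max-aggregate inequality aligns exactly with Lemma~\ref{lemma:support_pruning} and safety is guaranteed.
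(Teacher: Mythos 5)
Your argument is essentially the paper's own proof: unfold the two max-aggregations $N_i.ub\_sup_r = \max_{v_l \in N_i} v_l.ub\_sup_r$ and $v_l.ub\_sup_r = \max_{e_{u,v}} ub\_sup(e_{u,v})$, use subgraph-monotonicity of edge support, and conclude via Lemma~\ref{lemma:support_pruning} that no $k$-truss can survive under $N_i$; the only difference is that you spell out the inequality chain more carefully than the paper does. Your worry about the threshold constant is well founded and is not resolved by the paper's proof, which silently compares the aggregate against $k$ while Lemma~\ref{lemma:support_pruning} and Definition~\ref{def:seed_community} require edge support at least $(k-2)$: with support defined as the number of containing triangles, the stated condition $N_i.ub\_sup_r < k$ would discard entries with $k-2 \le N_i.ub\_sup_r < k$ that could still house a genuine $k$-truss, so the safe condition is indeed $N_i.ub\_sup_r < (k-2)$ as you propose (unless the aggregate is tacitly storing trussness, i.e., support plus two, which the offline pre-computation in Algorithm~\ref{algo:offline_precomputation} does not indicate).
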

\begin{proof}
\label{proof:index_support_pruning_lemma}
Since $N_i.ub\_sup_r$ is the maximum edge support upper bound in $\forall v_l \in N_i$, it holds that $N_i.ub\_sup_r \geq v_l.ub\_sup_r$ and $v_l.ub\_sup_r$ is greater than the support of any subgraph included $v_l$ satisfied truss structure. Therefore, if $N_i.ub\_sup_r < k$ holds, the support of any subgraph included $v_l$ is less than $k$ for each $v_l \in N_i$ so that $N_i$ can be safely pruned, which completes the proof. \qquad $\square$
\end{proof}
\nop{
\begin{proof}
Please refer to Appendix \ref{proof:index_support_pruning_lemma} for the detailed proof.
\end{proof}
}

\noindent{\bf Influential Score Pruning for Index Entries:} Since the Top$L$-ICDE problem finds $L$ seed communities with the highest influential scores, we can employ the following pruning lemma to rule out an index entry $N_i$ whose influential score upper bound lower than that of $L$ candidate seed communities we have seen so far.
\begin{lemma}
{\bf (Index-Level Influential Score Pruning)} Assume that we have obtained $L$ candidate seed communities with the smallest influential score $\sigma_L$. Given an index entry $N_i$ and an influence threshold $\theta \in [\theta_z, \theta_{z+1})$, entry $N_i$ can be safely pruned, if it holds that $N_i.\sigma_z \leq \sigma_L$.
%, where $\sigma_z$ can be selected from $(\sigma_z, \theta_z)$ pairs by $\theta_z = \theta$.
\label{lemma:index_influence_pruning}
\end{lemma}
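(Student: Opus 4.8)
The plan is to reduce this index-level claim to the vertex-level \emph{Influential Score Pruning} of Lemma~\ref{lemma:influential_score_pruning}. Concretely, I would argue that \emph{every} candidate seed community that could be generated from the subtree rooted at $N_i$ is dominated by the aggregate $N_i.\sigma_z$, so that if $N_i.\sigma_z \leq \sigma_L$ then each such community already fails the vertex-level pruning test and none of them can enter the current top-$L$ set. The whole argument is a chain of inequalities $\sigma(g) \leq \sigma_z(g) \leq \sigma_z(hop(v_l,r)) \leq N_i.\sigma_z \leq \sigma_L$, where $g$ ranges over the candidate seed communities under $N_i$.

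First I would fix an arbitrary candidate seed community $g$ lying under $N_i$. By construction of the index, $g$ is centered at some vertex $v_l$ with $v_l \in N_i$ and has radius at most $r$; by the radius constraint in Definition~\ref{def:seed_community} (and because shortest-path distances inside $g$ are no smaller than those in $G$), every vertex of $g$ lies within $r$ hops of $v_l$ in $G$, so $g \subseteq hop(v_l,r)$. The next step is to establish the monotonicity $\sigma_z(g) \leq \sigma_z(hop(v_l,r))$ under this containment. Since $cpp(g,v) = \max_{u \in V(g)} upp(u,v)$ is a maximum taken over the seed's vertex set, enlarging the seed from $g$ to $hop(v_l,r)$ can only increase each $cpp(\cdot,v)$; hence every vertex counted in $g$'s influenced community (w.r.t. $\theta_z$) is also counted for $hop(v_l,r)$ with a term at least as large, and the summed score in Eq.~(\ref{eq:influence_score}) cannot decrease. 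Aggregation over the entry then gives $\sigma_z(hop(v_l,r)) \leq \max_{\forall v_l \in N_i} \sigma_z(hop(v_l,r)) = N_i.\sigma_z$.

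Second, I would connect $\sigma_z(g)$ to the true influential score $\sigma(g)$ computed with the online threshold $\theta$. Because $\theta \in [\theta_z,\theta_{z+1})$ implies $\theta_z \leq \theta$, lowering the threshold from $\theta$ to $\theta_z$ only admits more vertices into the influenced community (each contributing a nonnegative $cpp$ term), so $\sigma(g) \leq \sigma_z(g)$; this is exactly the choice $ub\_\sigma(g) = \sigma_z(g)$ justified in the discussion following Lemma~\ref{lemma:influential_score_pruning}. Combining with the previous paragraph yields $\sigma(g) \leq \sigma_z(g) \leq N_i.\sigma_z \leq \sigma_L$, hence $ub\_\sigma(g) \leq \sigma_L$, and Lemma~\ref{lemma:influential_score_pruning} lets us discard $g$. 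As $g$ was an arbitrary candidate under $N_i$, no descendant of $N_i$ can belong to the top-$L$ answer, so $N_i$ may be pruned.

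The step I expect to require the most care is the subgraph-monotonicity of $\sigma_z$: one must verify both that the vertex set of the influenced community only grows under seed enlargement and that the per-vertex contributions are non-decreasing, which hinges on $cpp$ being a pointwise maximum over the seed's vertices. A secondary subtlety is confirming that a candidate seed community centered at $v_l$ is genuinely contained in $hop(v_l,r)$ even though the distance in Definition~\ref{def:seed_community} is measured inside $g$ rather than in $G$; this containment is precisely what makes the precomputed $hop(v_l,r)$ a legitimate over-approximation of every seed community it can host, and hence what validates the aggregate $N_i.\sigma_z$ as a sound upper bound.
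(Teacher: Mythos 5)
Your proof is correct and follows essentially the same route as the paper's: both chain the inequalities $\sigma(g) \leq \sigma_z(g) \leq \sigma_z(hop(v_l,r)) \leq N_i.\sigma_z \leq \sigma_L$ and reduce to the community-level pruning of Lemma~\ref{lemma:influential_score_pruning}. You are in fact more careful than the paper, which asserts $N_i.\sigma_z \geq \sigma_z(hop(v_l,r)) \geq \sigma(hop(v_l,r))$ and leaves implicit the two steps you verify explicitly --- the containment $g \subseteq hop(v_l,r)$ and the monotonicity of $\sigma_z$ under seed enlargement via the pointwise-maximum structure of $cpp$.
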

\begin{proof}
\label{proof:index_influence_pruning_lemma}
The influential score upper bound $N_i.\sigma_z$ is the maximum influential score upper bound w.r.t. threshold $\theta_z$ in $\forall v_l \in N_i$, so it holds that $N_i.\sigma_z \geq \sigma_z(hop(v_l, r)) \geq \sigma(hop(v_l, r))$, where $\sigma(hop(v_l, r))$ is the accurate influential score of $hop(v_l, r)$. The smallest influential score $\sigma_L$ holds that $\sigma_L \leq \sigma(g_i)$ for $\forall g_i \in S_L$. Therefore, if it holds that $N_i.\sigma_z \leq \sigma_L$, we obtain that $\sigma(g_i) \geq \sigma(hop(v_l, r))$ for $\forall g_i \in S_L$ and $\forall v_l \in N_i$ so that the $N_i$ can be safely pruned, which completes the proof. \qquad $\square$
\end{proof}
\nop{
\begin{proof}
Please refer to Appendix \ref{proof:index_influence_pruning_lemma} for detailed proof.
\end{proof}
}

\subsection{Top$L$-ICDE Processing Algorithm}
\label{subsec:online_topl_icde_algorithm}

Algorithm~\ref{algo:online_calculation} gives the pseudo-code to answer a Top$L$-ICDE query over a social network $G$ via the index $\mathcal{I}$. Specifically, the algorithm first initializes some data structure/variables (lines 1-4), then traverses the index $\mathcal{I}$ (lines 5-27), and finally returns actual Top$L$-ICDE query answers (line 28).

\begin{algorithm}[!ht]
\caption{{\bf Online Top$L$-ICDE Processing}}
\label{algo:online_calculation}\small
\KwIn{
    \romannumeral1) a social network $G$;
    \romannumeral2) a set, $Q$, of query keywords;
    \romannumeral3) the support, $k$, of the truss for each seed community;
    \romannumeral4) the maximum radius, $r$, of seed communities;
    \romannumeral5) the influence threshold $\theta \in [\theta_z, \theta_{z+1})$;
    \romannumeral6) an integer parameter $L$, and;
    \romannumeral7) a tree index $\mathcal{I}$ over $G$
}
\KwOut{
    a set, $S$, of top-$L$ most influential communities
}

\tcp{initialization}
hash all keywords in the query keyword set $Q$ into a query bit vector $Q.{BV}$\;

initialize a maximum heap $\mathcal{H}$ in the form of $(N, key)$\;

insert $(root(\mathcal{I}), 0)$ into heap $\mathcal{H}$\;

$S = \emptyset$; $cnt = 0$; $\sigma_L = -\infty$\;

\tcp{index traversal}

\While{$\mathcal{H}$ is not empty}{
    $(N, key)$ = de-heap$(\mathcal{H})$\;

    \If{$key \leq \sigma_L$}{terminate the loop\;}

    \eIf{$N$ is a leaf node}{
        \For{each vertex $v_i \in N$}{
            \If{$r$-hop subgraph $hop(v_i, r)$ cannot be pruned by Lemma~\ref{lemma:keyword_pruning}, \ref{lemma:support_pruning}, or \ref{lemma:influential_score_pruning}}{
                obtain seed communities $g \subseteq hop(v_i, r)$ satisfying the constraints\;
                compute influential score $\sigma(g)$ $=$ ${\sf calculate\_influence}(g, \theta)$\;
                \eIf{$cnt < L$}{
                    add $(g, \sigma(g))$ to $S$\;
                    $cnt = cnt + 1$\;
                    \If{$cnt = L$}{
                        set $\sigma_L$ to the smallest influential score in $S$\;
                        %$cnt = cnt + 1$\;
                    }   
                }{
                    \If{$\sigma(g) > \sigma_L$}{
                        add $(g, \sigma(g))$ to $S$\;
                        remove a candidate seed community with the lowest influential score from $S$\;
                        update influence threshold $\sigma_L$\;
                    }
                    
                }
                
            }
        }
    
    }(\tcp*[h]{$N$ is a non-leaf node}){
        \For{each entry $N_i \in N$}{
            \If{$N_i$ cannot be pruned by Lemma~\ref{lemma:index_keyword_pruning}, \ref{lemma:index_support_pruning}, or \ref{lemma:index_influence_pruning}}{
                insert entry $(N_i, N_i.\sigma_z)$ into heap $\mathcal{H}$\;
            }
        }
    }

}
\Return $S$\;
\end{algorithm}

\noindent{\bf Initialization:} Given a query keyword set $Q$, we first hash all the keywords in $Q$ into a query bit vector $Q.BV$ (line 1). Then, we use a \textit{maximum heap} $\mathcal{H}$ to traverse the index, which contains heap entries in the form of $(N, key)$, where $N$ is an index node and $key$ is the key of node $N$ (defined as maximum influential score upper bound $N.\sigma_z$, mentioned in Section \ref{subsec:indexing_mechanism}). Intuitively, if a node $N$ has a higher influential score upper bound, it is more likely that $N$ contains seed communities with high influential scores (ranks). We thus always use the maximum heap $\mathcal{H}$ to access nodes with higher influential scores earlier. We initialize heap $\mathcal{H}$ by inserting the index root in the form $(root(\mathcal{I}), 0)$ (lines 2-3). 
In addition, we use a result set, $S$, to store candidate seed communities (initialized with an empty set) whose entries are in the form of $(g, \sigma(g))$, a variable $cnt$ (initially set to $0$) to record the size of set $S$, and an influence threshold $\sigma_L$ (w.r.t. $S$, initialized to $-\infty$) (line 4).

\noindent{\bf Index Traversal:} Next, we employ the maximum heap $\mathcal{H}$ to traverse the index $\mathcal{I}$ (lines 5-27). Each time we pop out an entry $(N, key)$ with node $N$ and the maximum key, $key$, in the heap (lines 5-6). If $key$ is not greater than the smallest influential score, $\sigma_L$, among $L$ seed communities in $S$, then all entries in heap $\mathcal{H}$ have influential score upper bounds not greater than $\sigma_L$. Therefore, we can safely prune the remaining (unvisited) entries in the heap and terminate the index traversal (lines 7-8).
When we encounter a leaf node $N$, we consider $r$-hop subgraphs $hop(v_i, r)$ for all vertices $v_i$ under node $N$ (lines 9-10). Then, we apply the community-level pruning strategies, \textit{keyword pruning}, \textit{support pruning}, and \textit{influential score pruning}. If an $r$-hop subgraph $hop(v_i, r)$ cannot be pruned, we obtain seed communities, $g$, within $hop(v_i, r)$ that satisfy the constraints given in Definition~\ref{def:seed_community} and compute their accurate influential scores $\sigma(g)$ w.r.t. threshold $\theta$, by invoking the function {\sf calculate\_influence}$(g, \theta)$ (lines 11-13). Then, we will update the result set $S$ with $g$, by considering the following two cases.

\noindent\underline{\it Case 1:} If the size, $cnt$, of set $S$ is less than $L$, a new entry $(g, \sigma(g))$ will be added to $S$ (lines 14-16). 
If the set size $cnt$ reaches $L$, we will set $\sigma_L$ to the smallest influential score in $S$ (lines 17-18).

\noindent\underline{\it Case 2:} If the size, $cnt$, of set $S$ is equal to $L$ and the influential score $\sigma(g)$ is greater than $\sigma_L$, we will add the new entry $(g, \sigma(g))$ to $S$, and remove a seed community with the lowest influential score from $S$ (lines 19-22). Accordingly, threshold $\sigma_L$ will be updated with the new set $S$ (line 23).

On the other hand, when we visit a non-leaf node $N$, for each child entry $N_i \in N$, we will apply the index-level pruning strategies, including \textit{index-level keyword pruning}, \textit{index-level support pruning}, and \textit{index-level influential score pruning} (lines 24-26). If $N_i$ cannot be pruned, we insert a heap entry $(N_i, N_i.\sigma_z)$ into heap $\mathcal{H}$ for further investigation (line 27).

Finally, after the index traversal, we return actual Top$L$-ICDE query answers in $S$ (line 28).

%{\color{blue}
\noindent{\bf Discussions on the Influential Score Calculation Function {\sf calculate\_influence}$(g, \theta)$:} To calculate the influential score (via Eqs.~(\ref{eq:community_to_user_influence}) and (\ref{eq:influence_score})), we need to obtain the influenced community $g^\mathit{Inf}$ from seed community $g$, whose process is similar to the single-source shortest path algorithm. We will first compute $1$-hop neighbors, $v_k$, of boundary vertices in seed community $g$, and include $v_k$ (satisfying $cpp(g, v_k) \geq \theta$) in the influenced community $g^\mathit{Inf}$. Then, each time we expand one hop from the current influenced community $g^\mathit{Inf}$, by adding to $g^\mathit{Inf}$ new vertices $v_{new}$ if it holds that $cpp(g, v_{new}) \geq \theta$, where $cpp(g, v_{new}) = \max_{\forall u\in V(g^\mathit{Inf})}\{upp(u,v_{new})\}  = \max_{\forall u\in V(g^\mathit{Inf})}\{cpp(g,u) \cdot p_{u, v_{new}}\}$.

%}

\noindent{\bf Complexity Analysis:} 
Let $PP^{(j)}$ be the pruning power (i.e., the percentage of node entries that can be pruned) on the $j$-th level of index $\mathcal{I}$, where $0 \leq j \leq h$ (here $h$ is the height of the tree $\mathcal{I})$. Denote $f$ as the average fanout of index nodes in $\mathcal{I}$. For the index traversal, the number of nodes that need to be accessed is given by $\sum_{j=1}^{h}f^{h-j+1}\cdot(1 - {PP}^{(j)})$.
Let $\overline{n_r}$ be the average number of seed communities $g$ obtained from $hop(v_l,r)$. The BFS part of influential score computation takes $O(|V(g)| + |E(g)|)$, and the traversal (similar to the Dijkstra algorithm) implemented by the heap takes $O((|E(g^\mathit{Inf})|+|V(g^\mathit{Inf})|)log|V(g^\mathit{Inf})|)$. Moreover, the cost of the update operation on $S$ and $\sigma_L$ is a constant. Thus, it takes $O(f^{h+1}\cdot(1-{PP}^{(0)}) \cdot \overline{n_r} \cdot (|E(g^\mathit{Inf}|+|V(g^\mathit{Inf})|)log|V(g^\mathit{Inf})|)$  to refine candidate seed communities, where ${PP}^{(0)}$ is the pruning power over $r$-hop subgraphs in leaf nodes.
Therefore, the total time complexity of Algorithm~\ref{algo:online_calculation} is given by $O(\sum_{j=1}^{h}f^{h-j+1}\cdot(1 - {PP}^{(j)})+f^{h+1}\cdot(1-{PP}^{(0)}) \cdot \overline{n_r} \cdot (|E(g^\mathit{Inf}|+|V(g^\mathit{Inf})|)log|V(g^\mathit{Inf})|)$.

\section{Online Diversified Top$L$-ICDE Processing}
\label{sec:online_dtopl_icde_process}

In this section, we discuss how to efficiently tackle the Top$L$-ICDE variant, that is, the DTop$L$-ICDE problem.

\subsection{NP-Hardness of the DTop$L$-ICDE Problem}

First, we prove the NP-hardness of our DTop$L$-ICDE problem (given in Definition \ref{def:dtopl_icde_problem}) in the following lemma.
\begin{lemma}
\label{lemma:dtop_icde_np_hard}
The DTop$L$-ICDE problem is NP-hard.
\end{lemma}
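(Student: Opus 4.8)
The plan is to prove NP-hardness via a polynomial-time reduction from the \emph{Maximum Coverage} problem, which is NP-complete. An instance of Maximum Coverage consists of a universe $U=\{u_1,\ldots,u_m\}$, a family $\mathcal{C}=\{C_1,\ldots,C_n\}$ of subsets $C_j\subseteq U$, and integers $L$ and $t$; the question is whether some $L$ of the subsets cover at least $t$ elements. The guiding observation is that the diversity score $D(S)$ in Eq.~(\ref{eq:diversity_score}) is itself a (weighted) coverage objective: the inner term $\max_{g\in S}\{cpp(g,v)\}$ credits each vertex $v$ with the best influence any selected community exerts on it, so choosing $L$ communities to maximize $D(S)$ mirrors choosing $L$ sets to maximize coverage.

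For the construction, given a Maximum Coverage instance I build a social network $G$ as follows. For each subset $C_j$ I create a \emph{community gadget} $\Gamma_j$ --- a fixed-size clique (hence a legal $k$-truss of radius $1$) whose vertices all carry one common query keyword $w^\ast\in Q$, so that $\Gamma_j$ satisfies every requirement of a seed community in Definition~\ref{def:seed_community}. For each element $u_i\in U$ I create one \emph{element vertex} $x_i$, and whenever $u_i\in C_j$ I connect $x_i$ to a vertex of $\Gamma_j$. I then assign activation probabilities so that a single hop from a gadget keeps $cpp(\Gamma_j,x_i)$ at a fixed high value while the probability decays across the additional hops needed to reach element vertices not covered by $C_j$ (equivalently, I exploit the activation directionality so that element vertices do not relay influence back into the network); this makes $cpp(\Gamma_j,x_i)$ large exactly when $u_i\in C_j$ and negligible otherwise. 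Taking the gadgets vertex-disjoint and of equal size $s$, the vertices internal to the $L$ chosen gadgets contribute a constant $L\cdot s$ to $D(S)$, so that $D(S)=L\cdot s + c\cdot\big|\bigcup_{\Gamma_j\in S}C_j\big|$ for a fixed constant $c>0$.

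The equivalence then follows immediately: $L$ subsets cover at least $t$ elements if and only if the corresponding $L$ seed communities achieve $D(S)\ge L\cdot s + c\cdot t$, and the whole construction is clearly polynomial in $m$ and $n$; combined with the NP-completeness of Maximum Coverage this yields the claim. I expect the principal obstacle to lie not in the coverage bookkeeping but in \emph{controlling the space of admissible seed communities}: because the $k$-truss, radius, and keyword constraints of Definition~\ref{def:seed_community} may also be met by subgraphs other than the intended gadgets (e.g.\ proper sub-trusses of a gadget, or subgraphs straddling a gadget together with adjacent element vertices), I must engineer the gadget topology, keyword assignment, and edge probabilities so that no such spurious community can contribute more than a full gadget. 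Establishing that these structural side-conditions force every optimal DTop$L$-ICDE solution to consist of entire gadgets --- and hence to correspond to a genuine choice of $L$ subsets --- is the delicate part of the argument.
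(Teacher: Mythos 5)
Your reduction source and key insight --- that the diversity score in Eq.~(\ref{eq:diversity_score}) is a coverage objective, so selecting $L$ communities to maximize $D(S)$ amounts to selecting $L$ sets to maximize coverage --- are exactly those of the paper, but the two arguments are executed differently. The paper argues by restriction: it declares a special case of DTop$L$-ICDE in which the candidate seed communities and their influenced communities $g^{Inf}$ are taken as given, all propagation probabilities $cpp(g,v)$ equal $1$, and $D(S)$ degenerates to $\left|\bigcup_{g\in S} V(g^{Inf})\right|$, and it identifies that special case with Maximum Coverage; it never constructs a social network realizing an arbitrary set system. You instead build an explicit instance (clique gadgets for sets, element vertices for universe elements, calibrated edge probabilities), which is the more self-contained route but also the one that incurs the obligation you flag at the end: showing that no spurious seed community outperforms a full gadget. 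That obligation is dischargeable with the tools already in your construction --- give element vertices no query keyword so Definition~\ref{def:seed_community} excludes them (and note that a pendant edge to an element vertex has support $0$, so it cannot lie in any $k$-truss for $k\geq 3$), and observe that $cpp(g,v)$ is monotone under adding vertices to $g$, so every proper sub-community of a gadget is dominated by the full gadget. With that paragraph added, your sketch closes into a complete proof, and arguably a tighter one than the paper's, which leaves the realizability of arbitrary set systems as influenced-community structures entirely implicit.
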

\begin{proof}
\label{proof:dtop_icde_np_hard}%a ground set $U = \{u_1, u_2, \ldots, u_n\}$;
Consider an NP-hard Maximum Coverage problem \cite{feige1996ThresholdLnApproximating}. Given an integer $k$ and a collection of sets $\mathcal{U}=\{U_1, U_2, \ldots, U_m\}$, the goal of the Maximum Coverage problem is to select a subset $\mathcal{U}'$ ($\subseteq \mathcal{U}$) with size $k$, such that the number of elements covered by $\mathcal{U}'$, $|\bigcup_{U_i \in \mathcal{U}'}{U_i}|$, is maximized. 

A special case of our DTop$L$-ICDE problem is as follows. Given all the influenced communities, $g^{Inf}$, of $|V(G)|$ candidate seed communities $g$ (centered at vertices in $V(G)$), assume that the influential scores $\sigma(g)$ are given by the number of vertices in $g^{Inf}$ (i.e., $cpp(g, v) = 1$ for vertices $v$ within $g^{Inf}$). The DTop$L$-ICDE problem finds a subset, $S$, of $L$ influenced communities with the highest influential scores.

When $\mathcal{U}$ is a set of all the $|V(G)|$ influenced communities $g^{Inf}$ (i.e., $m= |V(G)|$), $k=L$, and $\mathcal{U}' = S$, we can reduce the \textit{Maximum Coverage} problem \cite{feige1996ThresholdLnApproximating}, which is NP-hard, to the special case of our DTop$L$-ICDE problem (as mentioned above). Hence, our DTop$L$-ICDE problem is also NP-hard, which completes the proof. \qquad $\square$

%A special case of our DTop$L$-ICDE problem is that the activation probability on each edge, $p_{u,v}$ is equal to $1$. When $U$ is the set of vertices $V(G)$, $k=L$ and $\mathcal{S}$ is the set of influenced communities, this instance is a special case of our DTop$L$-ICDE problem, which proves that the DTop$L$-ICDE problem is NP-hard. \qquad $\square$
\end{proof}

\subsection{The Greedy Algorithm for DTop$L$-ICDE Processing}
\label{subsec:online_dtopl_icde_algorithm}

Due to its NP-hardness (as given by Lemma \ref{lemma:dtop_icde_np_hard}), our DTop$L$-ICDE problem is not tractable. Therefore, alternatively, we will propose a greedy algorithm to process the DTop$L$-ICDE query with an approximation bound.

\noindent {\bf A Framework for the DTop$L$-ICDE Greedy Algorithm.} Our greedy algorithm has two steps. First, we invoke \textit{online Top$L$-ICDE processing algorithm} (Algorithm~\ref{algo:online_calculation}) to obtain a set, $T$, of top-$(nL)$ candidate communities with the highest influence scores, where $n$ ($>1$) is a user-specified parameter. Intuitively, communities with high influences are more likely to contribute to the DTop$L$-ICDE community set $S$ with high diversity scores. 

Next, we will identify $L$ out of these $(nL)$ candidate communities in $T$ with high diversity score (forming a set $S$ of size $L$). To achieve this, we give a naive method of our greedy algorithm without any pruning, denoted as {\sf Greedy\_WoP}, as follows. Given $(nL)$ candidate communities with the highest influence scores in a set $T$, we first add the candidate community $g$ in $T$ with the highest influence to $S$ (removing $g$ from $T$). Then, each time we select one candidate community $g \in T$ with the highest diversity score increment $\Delta_{g}(S)$, among all communities in $T$, and move $g$ from $T$ to $S$. This process repeats until $L$ communities are added to $S$.

\noindent {\bf Effective Pruning Strategy w.r.t. Diversity Score.} In the greedy algorithm without any pruning {\sf Greedy\_WoP}, we have to check all the $nL$ candidate communities in $T$, compute their diversity score increments $\Delta_{g_m}(S)$, and select the one with the highest diversity score increment, which is quite costly with the time complexity $O(nL^2)$. 

To reduce the search space, we will propose an effective \textit{diversity score pruning} method, which can avoid scanning all communities in $T$ in each round (i.e., those communities with low diversity score increments can be safely pruned).

Before introducing our pruning strategy, we will first give two properties of the diversity score $D(S)$ below:
\begin{itemize}
    \item \textbf{Monotonicity:} given two subgraph sets $S$ and $S'$, satisfying that $S' \subseteq S$, it holds that $D(S') \leq D(S)$, and;
    \item \textbf{Submodularity:} given two subgraph sets $S$ and $S'$ and a subgraph $g$, satisfying that  $S' \subseteq S$ and $g \notin S'$, it holds that $D(S'\cup\{g\}) - D(S') \geq D(S\cup\{g\}) - D(S)$ (i.e., $\Delta D_g(S') \geq \Delta D_g(S)$).
\end{itemize}

% After the obtainment of the candidates, the processing algorithm needs to refine the candidates and return a result set as the answer.
% A straightforward greedy method for the refinement is computing the increment of the diversity score of the result set for each candidate and then selecting the one with the maximum diversity score, and after $L$ iterations, the algorithm produces $S$ as the result, which is costly.

By utilizing the two properties above, we have the following pruning lemma:
\begin{lemma}
\label{lemma:diversity_score_pruning}
{\bf (Diversity Score Pruning)} Assume that we have a set, $T$, of candidate seed communities $g$, and a set, $S$, of the currently selected DTop$L$-ICDE answers. Given a subset $S'\subseteq S$ and a subgraph $g \in T$ with the diversity score increment $\Delta D_{g}(S)$, any subgraph $g_m \in T$ can be safely pruned, if it holds that $ub\_\Delta D_{g_m}(S) < \Delta D_{g}(S)$, where $ub\_\Delta D_{g_m}(S)$ is an upper bound of the diversity score increment $\Delta D_{g_m}(S)$ (which can equal to either $\Delta D_{g_m}(S')$ or $\sigma(g_m)$).
\end{lemma}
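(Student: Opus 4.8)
The plan is to reduce the pruning guarantee to a single transitive inequality, after first confirming that the two proposed quantities are genuine upper bounds of the true increment $\Delta D_{g_m}(S) = D(S \cup \{g_m\}) - D(S)$. Recall that in each greedy round we select the community of $T$ maximizing $\Delta D_{g_m}(S)$; hence ``$g_m$ can be safely pruned'' means precisely that $g_m$ cannot attain an increment at least as large as that of the already-examined $g$, so discarding $g_m$ never alters the greedy choice. The whole argument therefore rests on the \emph{Monotonicity} and \emph{Submodularity} properties of $D(\cdot)$ stated immediately above the lemma.

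First I would verify the two admissible forms of $ub\_\Delta D_{g_m}(S)$ via submodularity. For the first form, since $S' \subseteq S$ and $g_m \in T$ is still unselected (so $g_m \notin S \supseteq S'$), submodularity applied to the pair $S' \subseteq S$ gives directly $\Delta D_{g_m}(S') \geq \Delta D_{g_m}(S)$, so $\Delta D_{g_m}(S')$ is a valid upper bound. For the second form, I would identify $\sigma(g_m)$ with $D(\{g_m\}) = \sum_{v \in V(G)} cpp(g_m, v)$, using the convention that $cpp(g_m, v) = 0$ whenever $v \notin g_m^{\mathit{Inf}}$ (i.e. $cpp(g_m,v) < \theta$), which makes the sum over $g_m^{\mathit{Inf}}$ in Eq.~(\ref{eq:influence_score}) agree with the sum over $V(G)$ in Eq.~(\ref{eq:diversity_score}). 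Then $\sigma(g_m) = D(\{g_m\}) = \Delta D_{g_m}(\emptyset)$, and applying submodularity to the chain $\emptyset \subseteq S$ yields $\sigma(g_m) = \Delta D_{g_m}(\emptyset) \geq \Delta D_{g_m}(S)$. Thus either choice upper-bounds the true increment.

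With the upper-bound property established, the pruning follows from a short chain of inequalities. By definition of the upper bound, $\Delta D_{g_m}(S) \leq ub\_\Delta D_{g_m}(S)$; combining this with the hypothesis $ub\_\Delta D_{g_m}(S) < \Delta D_{g}(S)$ gives $\Delta D_{g_m}(S) < \Delta D_{g}(S)$. Hence $g_m$ has a strictly smaller diversity-score increment than $g$, cannot be the maximizer selected in this round, and is safely pruned.

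I expect the main obstacle to lie not in the final transitive step (which is routine) but in justifying the second upper bound, which hinges on correctly identifying $\sigma(g_m)$ with $\Delta D_{g_m}(\emptyset)$ under the above threshold convention and then invoking submodularity with the empty set as the base. If a self-contained proof of the two properties were desired, I would note that each per-vertex term $\max_{g \in S} cpp(g, v)$ is a pointwise maximum of nonnegative values, hence nondecreasing in $S$ and submodular as a set function, and that summing over $v \in V(G)$ preserves both properties; here, however, I would simply cite the two properties asserted just above the lemma.
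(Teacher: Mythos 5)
Your proof is correct and follows essentially the same route as the paper's: use submodularity of $D(\cdot)$ to certify that the stated quantities upper-bound $\Delta D_{g_m}(S)$, then conclude by the transitive chain $\Delta D_{g_m}(S) \leq ub\_\Delta D_{g_m}(S) < \Delta D_{g}(S)$. In fact yours is somewhat more complete, since the paper's proof only treats the nested-sets case $\Delta D_{g_m}(S')$ and never justifies the second admissible bound $\sigma(g_m)$, which you correctly handle by identifying $\sigma(g_m)$ with $\Delta D_{g_m}(\emptyset)$ under the convention that $cpp(g_m,v)=0$ for $v\notin V(g_m^{\mathit{Inf}})$.
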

\begin{proof}
\label{proof:diversity_score_pruning_lemma}
In the given sequence $\mathcal{S}$, it holds $S_i \subseteq S_j$ if $i < j$. According to the submodularity of the computation of the diversity score $D(S)$, it holds that $\Delta D_{g}(S_i) \geq \Delta D_{g}(S_j)$ for a candidate $g$ if $i \leq j$. Therefore, given the latest result set $S_n$, a candidate $g_m$ and any other candidate $g$, if it holds $\Delta D_{g_m}(S_n) > \Delta D_{g}(S_i)$, it means $\Delta D_{g_m}(S_n) > \Delta D_{g}(S_i) \geq \Delta D_{g}(S_n)$, which completes the proof. \qquad $\square$
\end{proof}

\nop{
\begin{proof}
Please refer to Appendix \ref{proof:diversity_score_pruning_lemma} for detailed proof.
\end{proof}
}

%Next, we introduce our online DTop$L$-ICDE processing algorithm. 

% \noindent {\bf Procedure of the DTop$L$-ICDE Greedy Algorithm With Pruning, {\sf Greedy\_WP}.} 
\noindent {\bf The DTop$L$-ICDE Greedy Algorithm With Pruning, {\sf Greedy\_WP}.}
Algorithm~\ref{algo:online_dtop_calculation} shows the pseudo-code to handle the online DTop$L$-ICDE query over a given social network $G$. Specifically, the algorithm invokes \textit{online Top$L$-ICDE processing algorithm} (i.e., Algorithm \ref{algo:online_calculation}) to obtain a set, $T$, of $nL$ candidate communities with the highest influences (line 1), then refines the set $T$ to obtain $L$ communities with the highest diversity score (lines 2-15), and finally returns the actual DTop$L$-ICDE answers (line 16).

\begin{algorithm}[!ht]
\caption{{\bf Online DTop$L$-ICDE Processing}}
\label{algo:online_dtop_calculation}\small
\KwIn{
    \romannumeral1) a social network $G$;
    \romannumeral2) a set, $Q$, of query keywords;
    \romannumeral3) the support, $k$, of the truss for each seed community;
    \romannumeral4) the maximum radius, $r$, of seed communities;
    \romannumeral5) the influence threshold $\theta \in [\theta_z, \theta_{z+1})$;
    \romannumeral6) two integer parameters $n$ and $L$, and;
    \romannumeral7) a tree index $\mathcal{I}$ over $G$
}
\KwOut{
    a set, $S$, of diversified top-$L$ most influential communities
}

\tcp{obtain ($nL$) DTop$L$-ICDE candidates}
invoke online Top$L$-ICDE processing algorithm (Algorithm~\ref{algo:online_calculation}) to obtain a set, $T$, of top-$(nL)$ most influential seed communities\;

\tcp{refine candidates via {\sf Greedy\_WP}}
initialize a maximum heap $\mathcal{H}$ with entries in the form of $(g, key_g)$\;

\For{each candidate seed community $g \in T$}{
    set $g.round = 0$\;
    insert $(g, \sigma(g))$ into heap $\mathcal{H}$\;
}

%\tcp{initialization}
$S = \emptyset$, $round = 0$\;
\While{$|S| < L$}{
    $(g, ub\_\Delta D_{g}(S))$ = de-heap$(\mathcal{H})$\;
    \eIf{$g.round = round$}{
        add $g$ to $S$\;
        $round = round + 1$;
    }{
        compute the increment of the diversity score $\Delta D_{g}(S) = D(S\cup\{g\}) - D(S)$\;
        $g.round = round$\;
        insert $(g, \Delta D_{g}(S))$ into heap $\mathcal{H}$\;
    }
}

\Return $S$\;
\end{algorithm}

%\noindent{\bf Candidate Community Retrieval:} First of all, the algorithm invokes the online Top$L$-ICDE processing algorithm to obtain the top-$nL$ most influential seed communities, denoted by $T$ (line 1).

\underline{\it Initialization:} Specifically, after obtaining $nL$ candidate communities via Algorithm \ref{algo:online_calculation} (line 1), we initialize a \textit{maximum heap}, $\mathcal{H}$,  that stores entries in the form $(g, key_g)$, where $g$ is a seed community and $key_g$ is the key of the heap entry (defined as the upper bound, $ub\_\Delta D_{g}(S)$, of the diversity score increment) (line 2). For each candidate community $g\in T$, we set its round number, $g.round$, to $0$, and the upper bound $\sigma(g)$ of the diversity score increment in this round. Then, we insert entries $(g, \sigma(g))$ into heap $\mathcal{H}$ for the refinement (lines 3-5). We also maintain an initially empty answer set $S$, and set the initial round number, $round$, to $0$ (line 6).

%The heap entries of $\mathcal{H}$ are in the form of $(g, ub\_\Delta D_g(S))$, where $g$ is the candidate community. $ub\_\Delta D_g(S)$ is the upper bound of the increment of diversity score, which is initialized as $\Delta D_{g_i}(\emptyset) = D(\emptyset\cup\{g_i\}) - D(\emptyset) = \sigma(g_i)$ when $S = \emptyset$  (lines 3-5).

%We use $S$ to store the answer of the online DTop$L$-ICDE query, and a variable $cnt$ records the iteration turn of the refinement phase, which are initially set to $\emptyset$ and $0$, respectively (line 6).

\underline{\it Candidate Community Refinement:} To refine candidate communities in heap $\mathcal{H}$, each time we pop out an entry $(g, ub\_\Delta D_{g}(S))$ from $\mathcal{H}$ with the maximum key (line 8), and check whether or not the key $ub\_\Delta D_{g}(S)$ is computed at this round (i.e., $g.round = round$), considering the following two cases (lines 9-15):

\underline{\it Case 1:} If $g.round = round$ holds, it indicates that $g$ is the one with the highest diversity score increment in the current round, $round$ (as proved by Lemma~\ref{lemma:diversity_score_pruning}). Thus, $g$ will be added to the DTop$L$-ICDE answer set $S$ and $round$ is increased by 1 (lines 9-11).

\underline{\it Case 2:} If $g.round \neq round$ holds (line 12), the entry key, $ub\_\Delta D_{g}(S)$, is outdated, which is equal to $\Delta D_{g}(S')$ ($S'\subseteq S$). Thus, we will re-compute the diversity score increment, $\Delta D_g(S)$, w.r.t. the current answer set $S$, update $g.round$ with $round$, and insert $(g, \Delta D_g(S))$ back into $\mathcal{H}$ (lines 13-15).

%which means the increment of diversity score of $g$ needs to be updated. Thus, the latest diversity score increment, $\Delta D_g(S)$, will be computed on the current result set $S$, $g.round$ is set as $round$ and $(g, \Delta D_g(S))$ be added back to the heap $\mathcal{H}$ (lines 13-15).

After picking $L$ candidates from $\mathcal{H}$ to $S$, the algorithm terminates the loop (line 7) and returns $S$ as DTop$L$-ICDE answers (line 16).

\noindent{\bf Complexity Analysis:}
To obtain top-$(nL)$ influential communities, the total time complexity is given in Section~\ref{subsec:online_topl_icde_algorithm}. For the refinement, we create a heap for $(nL)$ candidates. We need to compute the diversity score increments for $nL^2$ times in the worst case and $L$ times for the best case. Let $DPP^{(k)}$ be the diversity pruning power (i.e., the percentage of candidates that can be pruned) in the $k$-th iteration, where $0 \leq k \leq L$. Therefore, the computation of the diversity score increment is executed for $\sum_{k=1}^{L}(nL-k+1) \cdot (1 - DPP^{(k)})$ times. Since the computation cost of the diversity score increment is $O(|\bigcup_{g \in T}V(g^\mathit{Inf})|)$, where $T$ is the set of top-$(nL)$ influential communities, the total time complexity of the refinement is given by $O\left(\sum_{k=1}^{L}(nL-k+1) \cdot (1 - DPP^{(k)}) \cdot |\bigcup_{g \in T}V(g^\mathit{Inf})|\right)$.

%based on the comparison 

\noindent{\bf Approximation Ratio Analysis:} From \cite{kempe2003MaximizingSpreadInfluence}, for a function following \textit{monotonicity} and \textit{submodularity} properties, the approximate greedy algorithm has a $(1-1/e)$ approximation guarantee. Moreover, we can prove that our greedy algorithm has a $\epsilon\cdot(1-1/e)$ approximation guarantee, where $0 < \epsilon \leq 1$.

\begin{lemma}
\label{lemma:dtop_icde_approximation}
The online DTop$L$-ICDE processing algorithm can process the DTop$L$-ICDE query approximately within better than a factor of $\epsilon\cdot(1-1/e)$, where $0 < \epsilon \leq 1$. 
\end{lemma}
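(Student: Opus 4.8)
The plan is to decompose the guarantee into two multiplicative factors: the classical $(1-1/e)$ loss of greedy submodular maximization, and an extra factor $\epsilon \in (0,1]$ that absorbs the loss from restricting the greedy search to the top-$(nL)$ candidate pool $T$ (produced in line~1 of Algorithm~\ref{algo:online_dtop_calculation}) instead of to all feasible seed communities. First I would record the two structural facts already stated just before Lemma~\ref{lemma:diversity_score_pruning}, namely that $D(\cdot)$ is \emph{monotone} and \emph{submodular}. Because $D(\emptyset)=0$, the first marginal gain of a community $g$ equals $\Delta D_g(\emptyset)=\sigma(g)$, and by submodularity $\Delta D_{g}(S)\le \Delta D_{g}(\emptyset)=\sigma(g)$ for every $S$, so $\sigma(g)$ is a valid initial key and upper bound in the heap.

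Second, I would argue that the pruned algorithm {\sf Greedy\_WP} returns exactly the same answer set $S$ as the naive {\sf Greedy\_WoP}. This is the standard lazy-evaluation (CELF-style) argument formalized in Lemma~\ref{lemma:diversity_score_pruning}: by submodularity the true increment $\Delta D_g(S)$ only decreases as $S$ grows, so a stale heap key $ub\_\Delta D_g(S')=\Delta D_g(S')$ with $S'\subseteq S$ upper-bounds the current increment; when a popped entry satisfies $g.round=round$, its key was recomputed against the current $S$ and no other candidate can have a strictly larger true increment. Hence the diversity-score pruning is \emph{safe}, and {\sf Greedy\_WP} selects, in each round, precisely the community of maximum marginal diversity gain within $T$ --- i.e.\ it implements exact greedy over the ground set $T$.

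Third, I would invoke the Nemhauser--Wolsey--Fisher theorem (cited via \cite{kempe2003MaximizingSpreadInfluence}): greedy maximization of a monotone submodular set function over a ground set $T$ returns $S$ with $D(S)\ge (1-1/e)\,D(S_T^{*})$, where $S_T^{*}$ is the optimal size-$L$ subset drawn from $T$. To relate this to the true optimum, let $S^{*}$ be the optimal DTop$L$-ICDE answer over all seed communities and define $\epsilon = D(S_T^{*})/D(S^{*})$. Since $S_T^{*}$ is itself a feasible solution of the global problem, monotonicity gives $D(S_T^{*})\le D(S^{*})$, whence $0<\epsilon\le 1$. Chaining the two bounds yields $D(S)\ge (1-1/e)\,\epsilon\,D(S^{*})$, which is exactly the claimed factor $\epsilon\cdot(1-1/e)$.

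The main obstacle is making $\epsilon$ meaningful rather than vacuous: in the worst case $T$ could miss every community of $S^{*}$, forcing $\epsilon$ small, so the theorem only gives a guarantee \emph{parameterized} by how well the influence-ranked top-$(nL)$ candidates cover the diversified optimum. I would therefore stress that $\epsilon$ precisely quantifies the quality of the candidate-generation step, that enlarging $n$ enlarges $T$ and drives $\epsilon\to 1$ (once $nL\ge |V(G)|$ the pool contains all centered communities, so $S_T^{*}=S^{*}$ and $\epsilon=1$, recovering the full $(1-1/e)$ bound), and that the empirical rationale --- high-influence communities tend to dominate the diversified optimum --- is what keeps $\epsilon$ near $1$ in practice.
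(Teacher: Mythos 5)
Your proof is correct, but it takes a genuinely different route from the paper's, and the difference lies entirely in what $\epsilon$ is. The paper defines $\epsilon$ as the \emph{cardinality} ratio $|S'|/|\hat{S}|$ between the top-$(nL)$ candidate pool and the set of all seed communities, and then asserts (``it is not difficult to prove'') that the output satisfies $D(S)\geq (|S'|/|\hat{S}|)\cdot(1-1/e)\cdot D(S^{*})$; no argument is given for why retaining a fraction $|S'|/|\hat{S}|$ of the candidates preserves the same fraction of the optimal diversity score, and this step does not follow from monotonicity or submodularity alone. You instead define $\epsilon$ as the \emph{value} ratio $D(S_T^{*})/D(S^{*})$ between the restricted-pool optimum and the global optimum, which makes the chain $D(S)\geq(1-1/e)D(S_T^{*})=(1-1/e)\,\epsilon\,D(S^{*})$ airtight: the first inequality is the Nemhauser--Wolsey--Fisher bound applied to exact greedy over the ground set $T$ (and your appeal to the lazy-evaluation safety of Lemma~\ref{lemma:diversity_score_pruning} to show {\sf Greedy\_WP} coincides with exact greedy is a step the paper's proof silently assumes), and the second is the definition of $\epsilon$ together with monotonicity. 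What each approach buys: the paper's $\epsilon$ is concrete and computable a priori but the resulting inequality is unjustified as stated; your $\epsilon$ makes the lemma provably true but shifts all the uncertainty into a quantity that cannot be computed without knowing the global optimum --- a trade-off you candidly flag, along with the correct observation that $n$ large enough forces $\epsilon=1$. Both readings are consistent with the lemma's (underspecified) statement, but yours is the one that actually closes the argument.
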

\begin{proof}
\label{proof:dtop_icde_approximation_lemma}
With the naive greedy algorithm, the candidate set, $\hat{S}$, contains all of the seed communities, while the online DTop$L$-ICDE processing algorithm only selects the candidate set, $S^\prime$, with the top-$nL$ influential score. Since that $S^\prime \subseteq \hat{S}$, it is not difficult to prove that the diversity score of the result set, $D(S)$, must not be smaller than $({|S^\prime|}/{|\hat{S}|})\cdot(1-1/e)$ multiplying the diversity score of the optimal result set. Denoting the ${|S^\prime|}/{|\hat{S}|}$ as $\epsilon$, where $0 < \epsilon \leq 1$, we can conclude an approximation ratio of $\epsilon\cdot(1-1/e)$ for online DTop$L$-ICDE processing algorithm. \qquad $\square$
\end{proof}
\nop{
\begin{proof}
Please refer to Appendix \ref{proof:dtop_icde_approximation_lemma} for detailed proof.
\end{proof}
}

% \noindent {\bf Discussions on the Pruning Strategies and the Offline Pre-Computation.} Different from the pruning strategies for Top$L$-ICDE, the influence score pruning strategy is deprecated to prevent filtering out the candidates with low influences. Therefore, we compute pre-computed data $(v_i.BV_r, v_i.ub\_sup_r)$ for each vertex.

% \noindent {\bf Discussions on the Index Mechanism and Index Pruning Strategies.} We create the same tree index mentioned in Section~\ref{subsec:indexing_mechanism} without influential score upper bound. Therefore, for the index-level pruning strategies mentioned in Section~\ref{subsec:index_pruning}, we deprecate the \textit{index-level influential score pruning} to prevent eligible vertices from being pruned.

\section{Experimental Evaluation}
\label{sec:experiments}

\subsection{Experimental Settings}

We tested the performance of our Top$L$-ICDE processing approach (i.e., Algorithm~\ref{algo:online_calculation}) on both real and synthetic graphs. Our code is uploaded on Github at the URL (\url{https://github.com/Watremons/influential-community-detection})

\noindent {\bf Real-World Graph Data Sets:}
We used two real-world graphs, \textit{DBLP} and \textit{Amazon}, similar to previous works \cite{luo2023EfficientInfluentialCommunity, zhou2023InfluentialCommunitySearch, chen2015OnlineTopicawareInfluence}, whose statistics are depicted in Table~\ref{tab:real_data_set}. \textit{DBLP} is a bibliographical network, in which two authors are connected if they co-authored at least one paper, whereas \textit{Amazon} is an \textit{Also Bought} network where two products are connected if they are co-purchased by customers.

%we reprocessed the networks and gave them $(3 \pm 1)$ keywords in all $20$ keywords following Uniform distribution.

%Table~\ref{tab:real_data_set} depicts statistics of social networks.

%, in which authors who published to a certain journal or conference form a community
%, in which each product category defines each ground-truth community

%Since the vertex on the original social networks does not have any keywords, we reprocessed the networks and gave them $(3 \pm 1)$ keywords in all $20$ keywords following Uniform distribution.

%Each vertex in these real-world graphs is associated with $(3 \pm 1)$ out of 20 category keywords with the Uniform distribution.

\noindent {\bf Synthetic Graph Data Sets:}
For synthetic social networks, we generate \textit{Newman–Watts–Strogatz small-world} graphs $G$ \cite{newman1999RenormalizationGroupAnalysis}. Specifically, we first produce a ring with size $|V(G)|$, and then connect each vertex with its $m$ nearest neighbors in the ring. Next, for each resulting edge $e_{u,v}$, with probability $\mu$, we add a new edge $e_{u,w}$ between $u$ and a random vertex $w$. Here, we set $m=6$ and $\mu = 0.167$. For each vertex, we also randomly produce a keyword set $v_i.W$ from the keyword domain $\Sigma$, following Uniform, Gaussian, or Zipf distribution, and obtain three synthetic graphs, denoted as \textit{Uni}, \textit{Gau}, and \textit{Zipf}, respectively. For each edge $e_{u,v}$ in graph $G$, we randomly generate a value within the interval $[0.5,0.6)$ as the edge weight $p_{u,v}$. The propagation probabilities can be computed based on the MIA model (Section \ref{subsec:information_propagation_model}). 

%Then, we used the \textit{graph partitioning} \cite{karypis1998FastHighQuality, gonzalez2012PowerGraphDistributedGraphParallel, gonzalez2014GraphXGraphProcessing, roy2015ChaosScaleoutGraph} to extract subgraphs with $500K, 100K, 50K, 25K, 10K$ vertices. 

In our experiments, we randomly select $|Q|$ keywords from the keyword domain $\Sigma$ and form a query keyword set $Q$.

%Next, \textit{baseline} computes the influential score of the $k$-truss subgraph and adds it to a candidate set if its score is top-$L$ among communities we have seen so far. Finally, \textit{baseline} returns $L$ seed communities with the highest influential scores in the candidate set.

\noindent \textbf{Competitors:}
To our best knowledge, no prior works studied the Top$L$-ICDE problem and its variant DTop$L$-ICDE problem by considering highly connected $k$-truss communities with user-specified keywords and high (collective) influences on other users. Thus, for Top$L$-ICDE, we use a baseline method, named \textit{ATindex}, which applies the state-of-the-art $(k,d)$-truss community search algorithm \cite{huang2017attribute}. Specifically, \textit{ATindex} offline pre-computes and indexes the trussness on vertices and edges. Then, it online filters out vertices with trussness less than $k$ via the index, extracts $r$-hop subgraphs (satisfying the keyword constraints w.r.t. $Q$) centered at the remaining vertices, and obtains maximal $k$-truss within $r$-hop subgraphs. After that, \textit{ATindex} computes influential scores of these $k$-truss subgraphs and returns $L$ communities with the highest influential scores.
% For Top$L$-ICDE, we compare our approach with a baseline algorithm, named \textit{baseline}, which first computes an $r$-hop subgraph online for each vertex and then obtains its $k$-truss subgraph (satisfying the keyword constraints w.r.t. query keyword set $Q$). Next, \textit{baseline} computes influential scores of $k$-truss subgraphs and returns $L$ seed communities with the highest influential scores. Since the time cost of \textit{baseline} is extremely high, we sample $0.1\%$ vertices from each graph without replacement and estimate the total time as $ \overline{t_s} \cdot |V(G)|$, where $\overline{t_s}$ is the average time per sample.

For DTop$L$-ICDE, we compare our approach (using {\sf Greedy\_WP}) with {\sf Greedy\_WoP} and \textit{Optimal} 
methods. {\sf Greedy\_WoP} is the greedy algorithm without pruning mentioned in Section~\ref{subsec:top_L_most_influential_community_detection}, whereas \textit{Optimal} computes the diversity score for each possible combination of seed communities and selects the one with the maximum diversity score.

\begin{table}[t]
\begin{center}%\vspace{-2ex}
\caption{Statistics of real-world graph data sets \textit{DBLP} and \textit{Amazon}.}
\label{tab:real_data_set}
\footnotesize
%\vspace{-1ex}
\begin{tabular}{|l||l||l|}
\hline
\textbf{Social Networks} & $|V(G)|$ & $|E(G)|$\\
\hline\hline
    \textit{DBLP} & 317,080 & 1,049,866\\\hline
    \textit{Amazon} & 334,863 & 925,872\\\hline
\end{tabular}
\end{center}\vspace{-2ex}
\end{table}

\begin{table}[t]
\begin{center}%\vspace{-3ex}
\caption{Parameter settings.}
\label{tab:parameters}\vspace{-3ex}
\footnotesize
%\vspace{-2ex}
\begin{tabular}{|l||p{30ex}|}
\hline
\textbf{Parameters}&\textbf{Values} \\
\hline\hline
    influence threshold $\theta$ & 0.1, \textbf{0.2}, 0.3\\\hline
    size, $|Q|$, of query keyword set $Q$  & 2, 3, \textbf{5}, 8, 10 \\\hline
    support, $k$, of truss structure & 3, \textbf{4}, 5\\\hline    
    radius $r$ & 1, \textbf{2}, 3\\\hline
    size, $L$, of query result set & 2, 3, \textbf{5}, 8, 10\\\hline
    size, $|v_i.W|$, of keywords per vertex & 1, 2, {\bf 3}, 4, 5 \\\hline
    keyword domain size $|\Sigma|$ & 10, \textbf{20}, 50, 80\\\hline
    the size, $|V(G)|$, of data graph $G$ & 10K, 25K, {\bf 50K}, 100K, 250K, 500K, 1M\\\hline
    parameter, $n$, for DTop$L$-ICDE & 2, 3, \textbf{5}, 8, 10\\\hline
\end{tabular}
\end{center}\vspace{-2ex}
\end{table}

\begin{figure}[t!]%\vspace{-1ex}
    \centering
    \includegraphics[height=4cm]{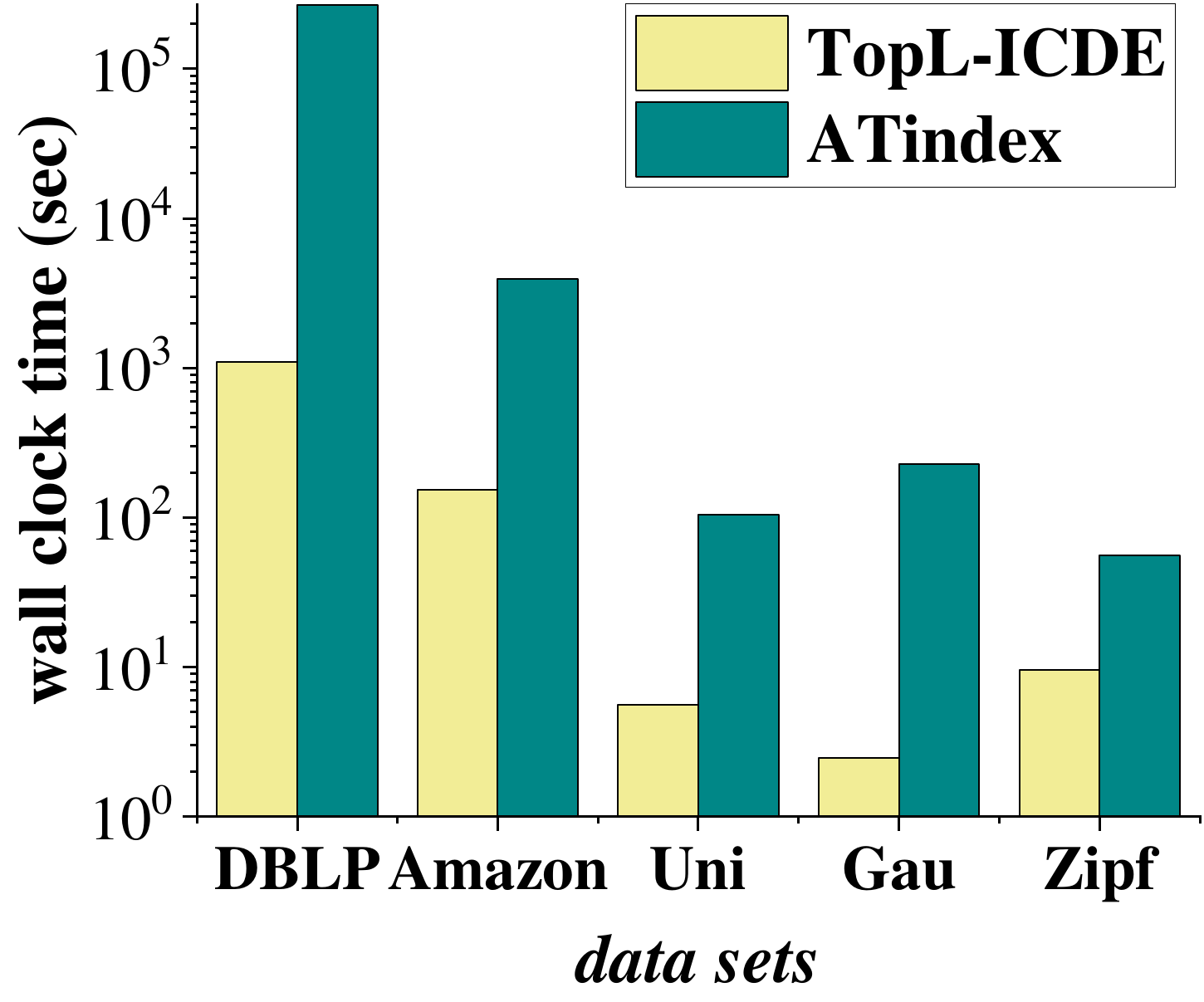}
    %\vspace{-2ex}
    \caption{The Top$L$-ICDE performance on real/synthetic graph data.}
    \label{fig:performance_total}
\end{figure}

\noindent \textbf{Measures:}
To evaluate the efficiency of our Top$L$-ICDE approach, we report the \textit{wall clock time}, which is the time cost to online retrieve Top$L$-ICDE answers via the index (Algorithm~\ref{algo:online_calculation}). For DTop$L$-ICDE, we report the \textit{wall clock time} and \textit{accuracy} (defined as the ratio of the diversity score of our method to that of the optimal method).

\noindent \textbf{Parameters Settings:}
Table~\ref{tab:parameters} depicts the parameter settings, where default values are in bold. Each time we vary the values of one parameter, while other parameters are set to their default values. We ran all the experiments on the machines with Intel(R) Core(TM) i9-10900K 3.70GHz CPU, Ubuntu 20.04 OS, and 32 GB memory. All algorithms were implemented in Python and executed with Python 3.9 interpreter.

\noindent \textbf{Research Questions:} We conduct extensive experiments to evaluate our Top$L$-ICDE and DTop$L$-ICDE approaches and answer the following four research questions (RQs):

\underline{\it RQ1 (Efficiency):} Can our proposed approaches efficiently process Top$L$-ICDE and DTop$L$-ICDE queries? 

\underline{\it RQ2 (Effectiveness):} Can our proposed pruning strategies effectively filter out candidate communities during Top$L$-ICDE query processing? 

\underline{\it RQ3 (Meaningfulness):} Are the resulting Top$L$-ICDE communities useful for real-world applications?

\underline{\it RQ4 (Accuracy):} Can our proposed approach achieve high accuracy of DTop$L$-ICDE query answers?

\subsection{Top$L$-ICDE Performance Evaluation}

\setlength{\textfloatsep}{0pt}
\begin{figure*}[t!]%\vspace{-3ex}
    \centering
    \subfigure[influence threshold $\theta$]{
        \includegraphics[height=3.2cm]{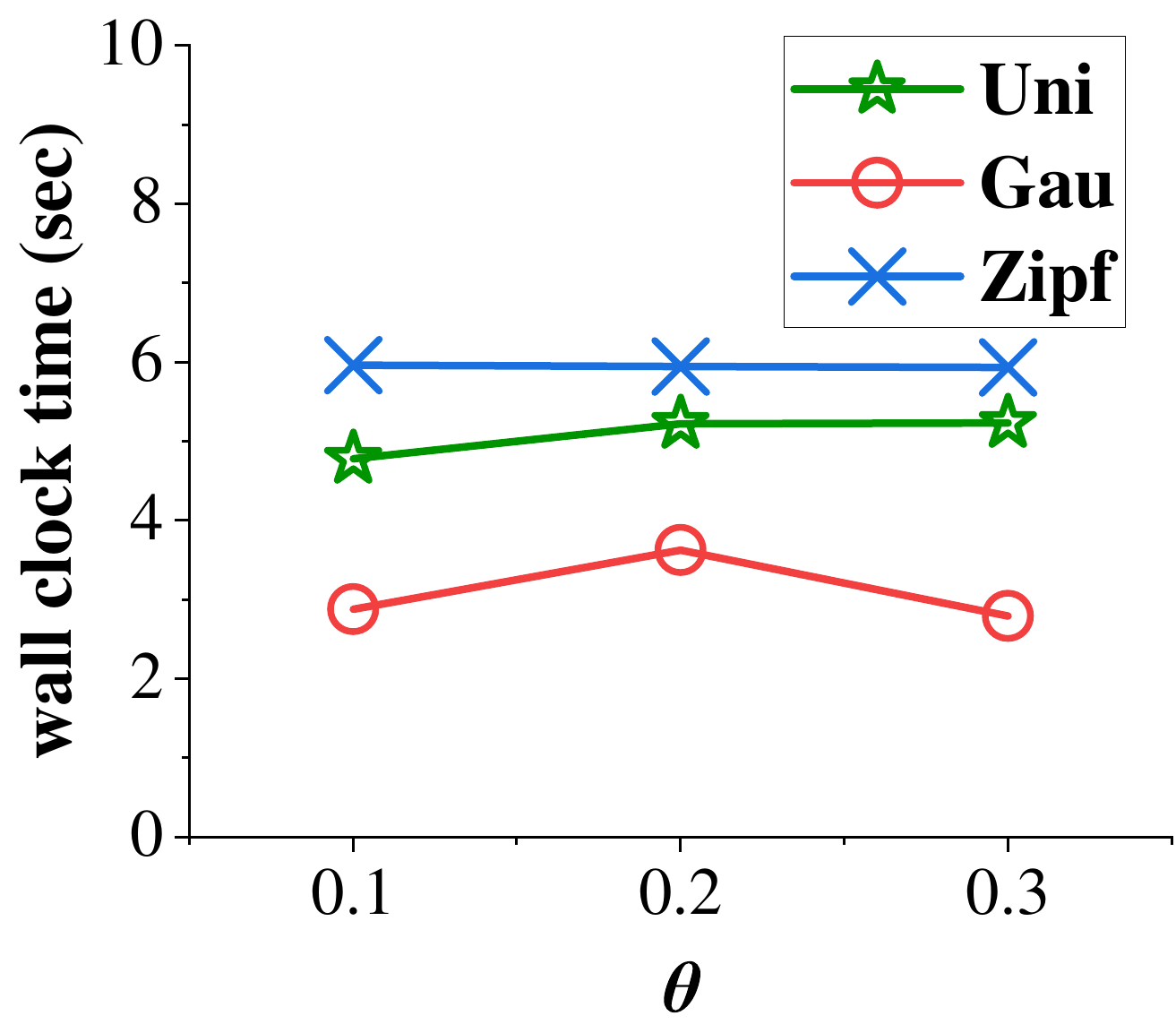}
        \label{subfig:effect_influence_threshold}
    }
    \hspace{-0.3cm}
    \subfigure[query keyword set size $|Q|$]{
        \includegraphics[height=3.2cm]{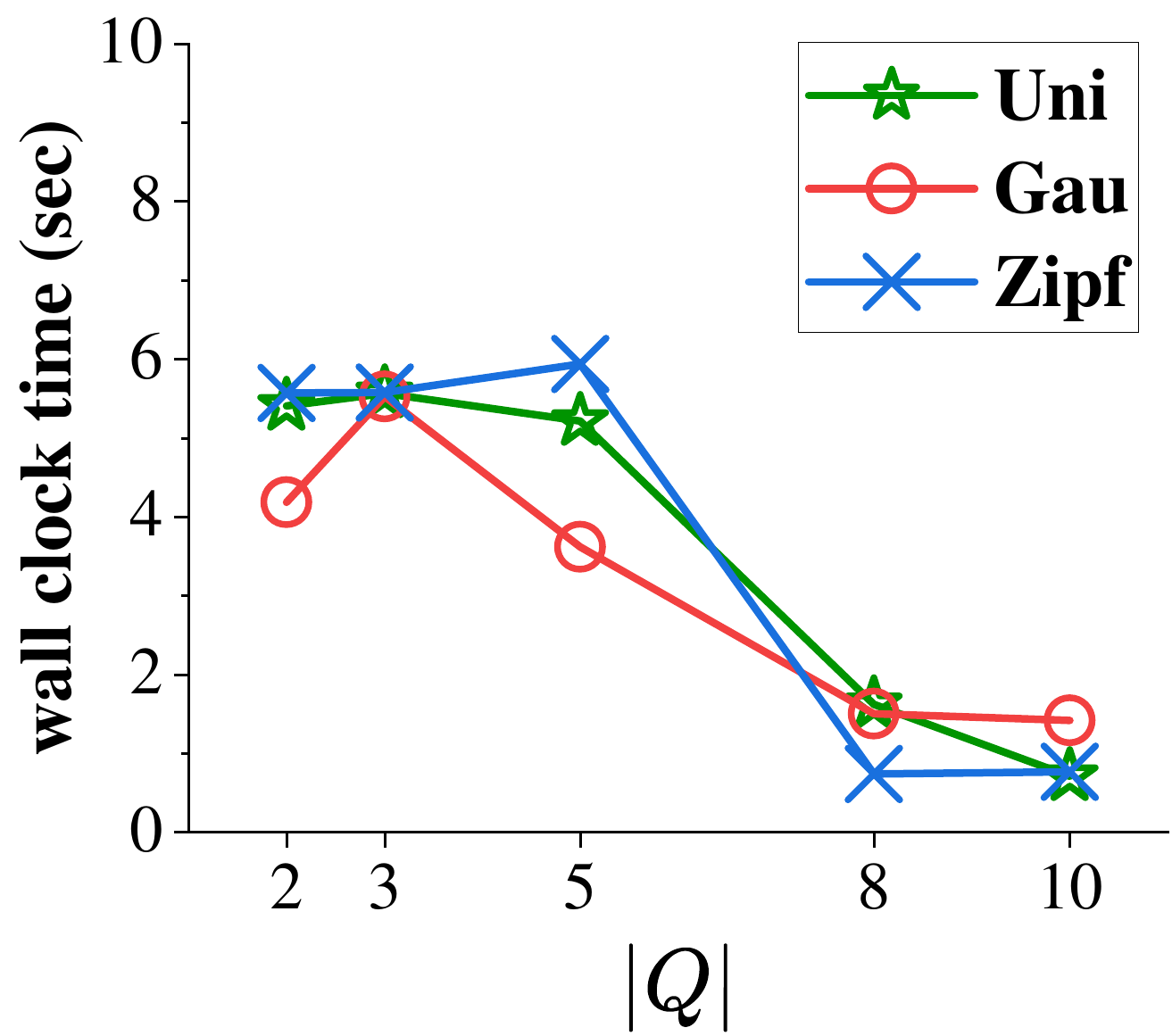}
        \label{subfig:effect_query_keywords_set_size}
    }
    \subfigure[truss support param. $k$]{
        \includegraphics[height=3.2cm]{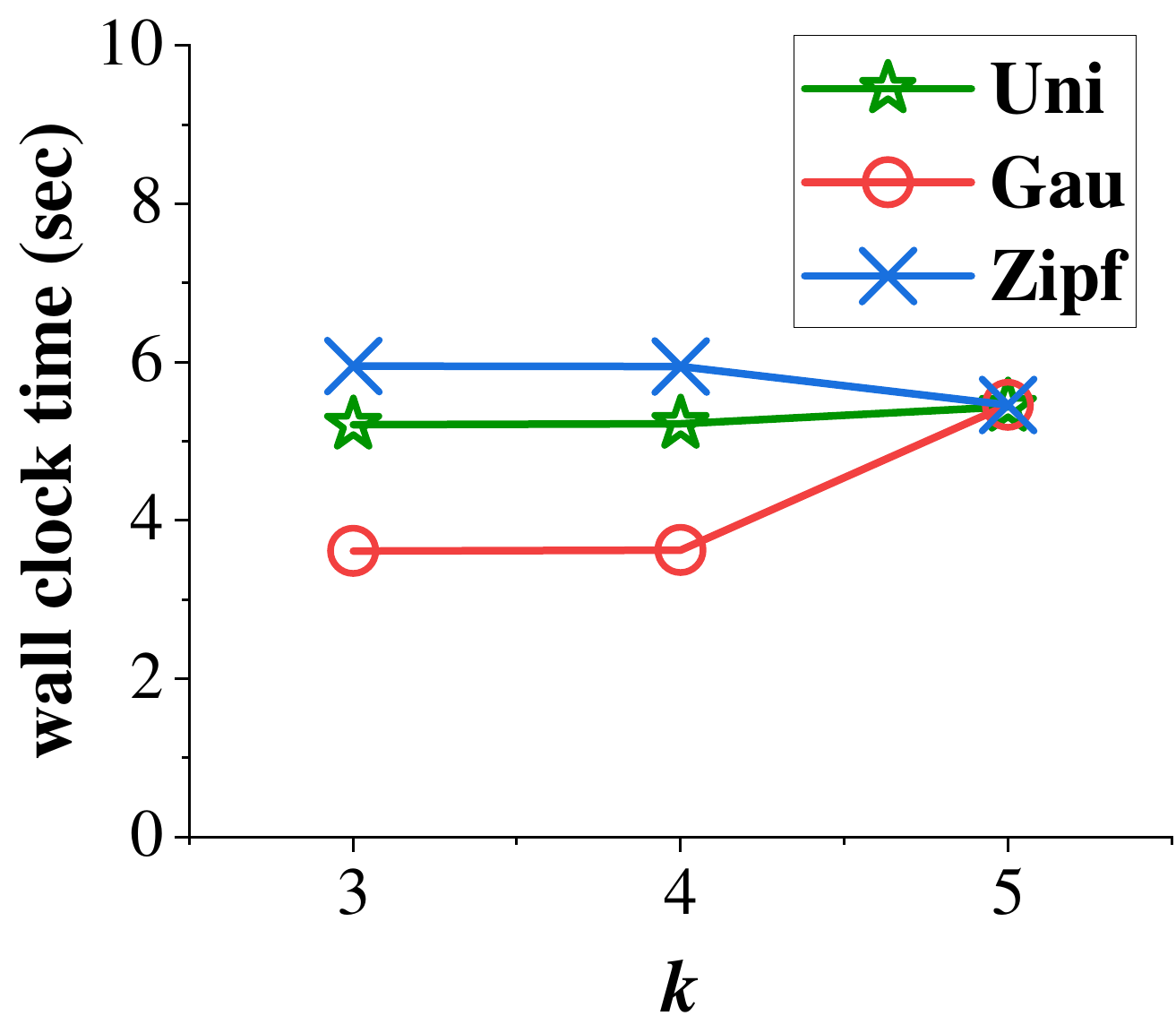}
        \label{subfig:effect_truss_support_parameter}
    }
    \hspace{-0.3cm}
    \subfigure[radius $r$]{
        \includegraphics[height=3.2cm]{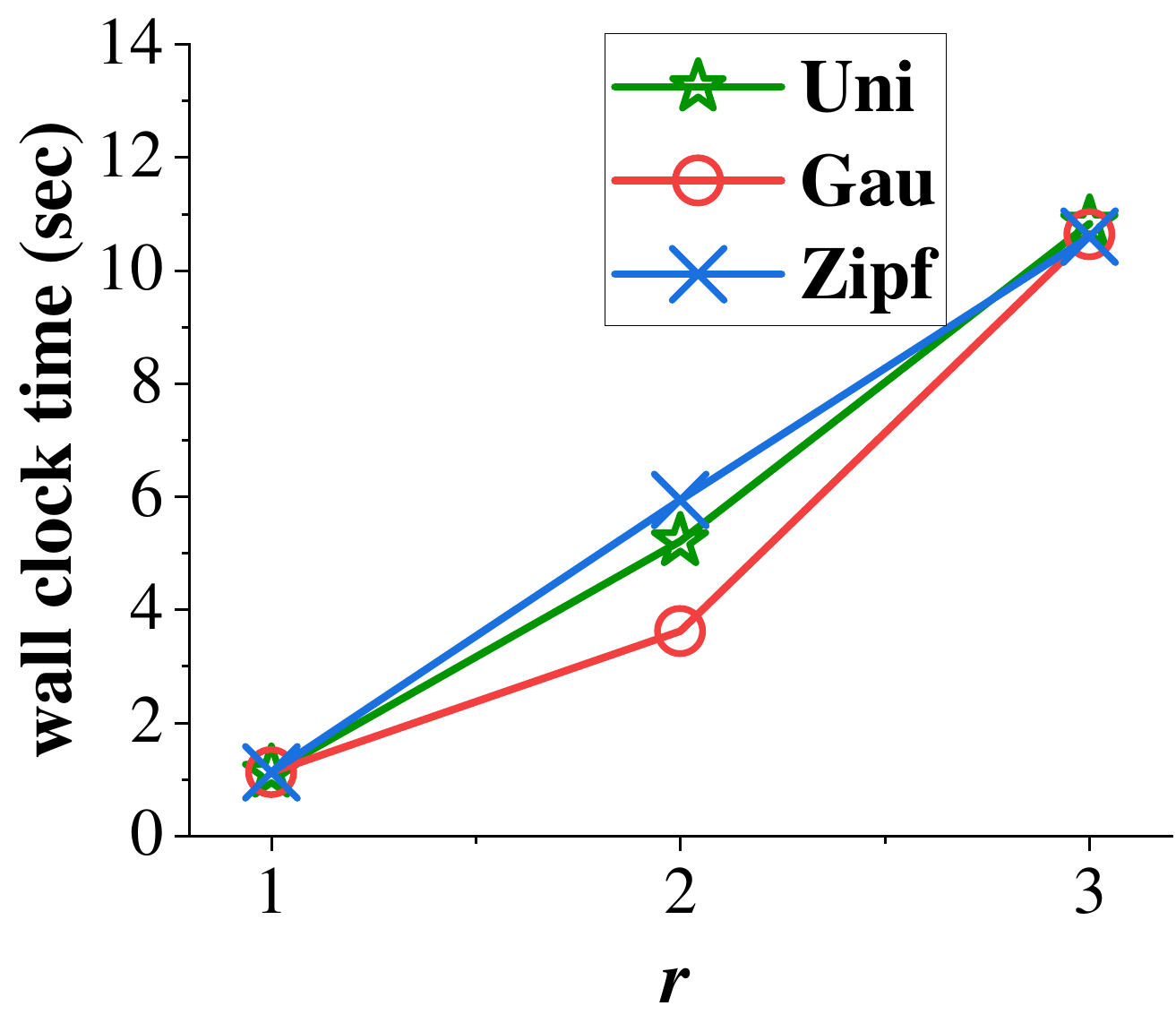}
        \label{subfig:effect_radius}
    }\\%\vspace{-2ex}
    \subfigure[query result size $L$]{
        \includegraphics[height=3.2cm]{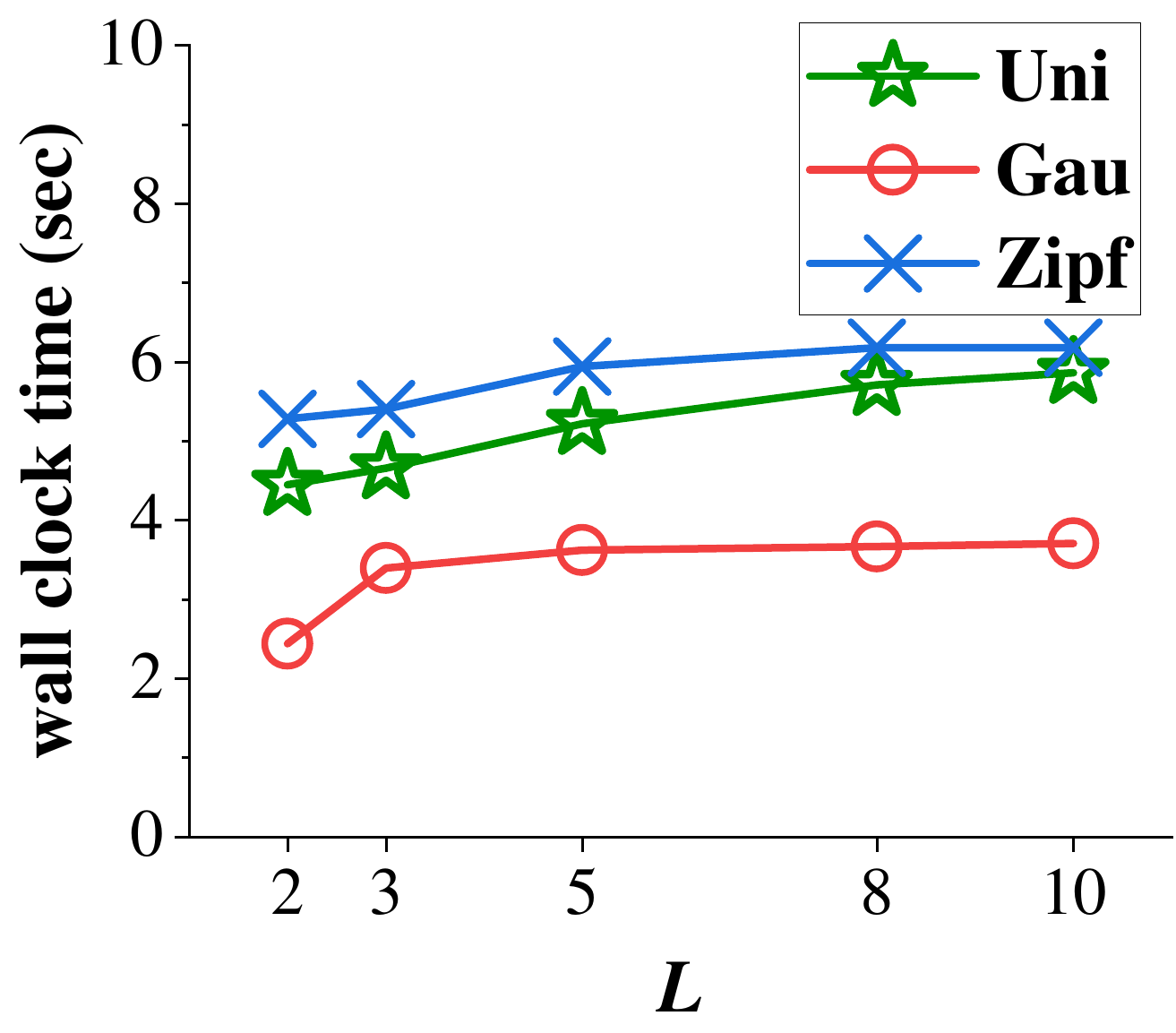}
        \label{subfig:effect_result_set_size}
    }
    \hspace{-0.3cm}
    \subfigure[\# of keywords/vertex $|v_i.W|$]{
        \includegraphics[height=3.2cm]{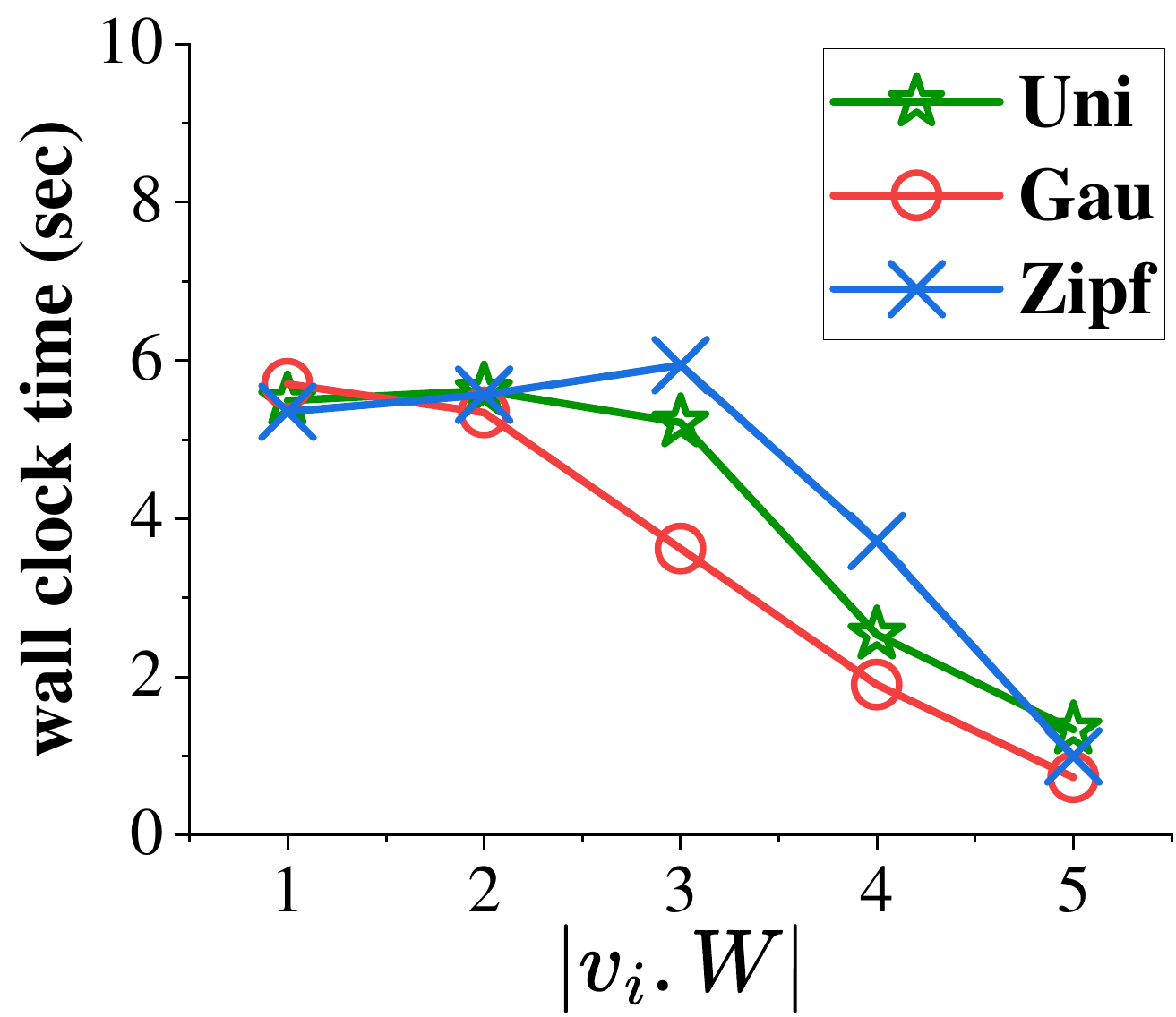}
        \label{subfig:effect_keywords_per_vertex_size}
    }
    \subfigure[keyword domain size $|\Sigma|$]{
        \includegraphics[height=3.2cm]{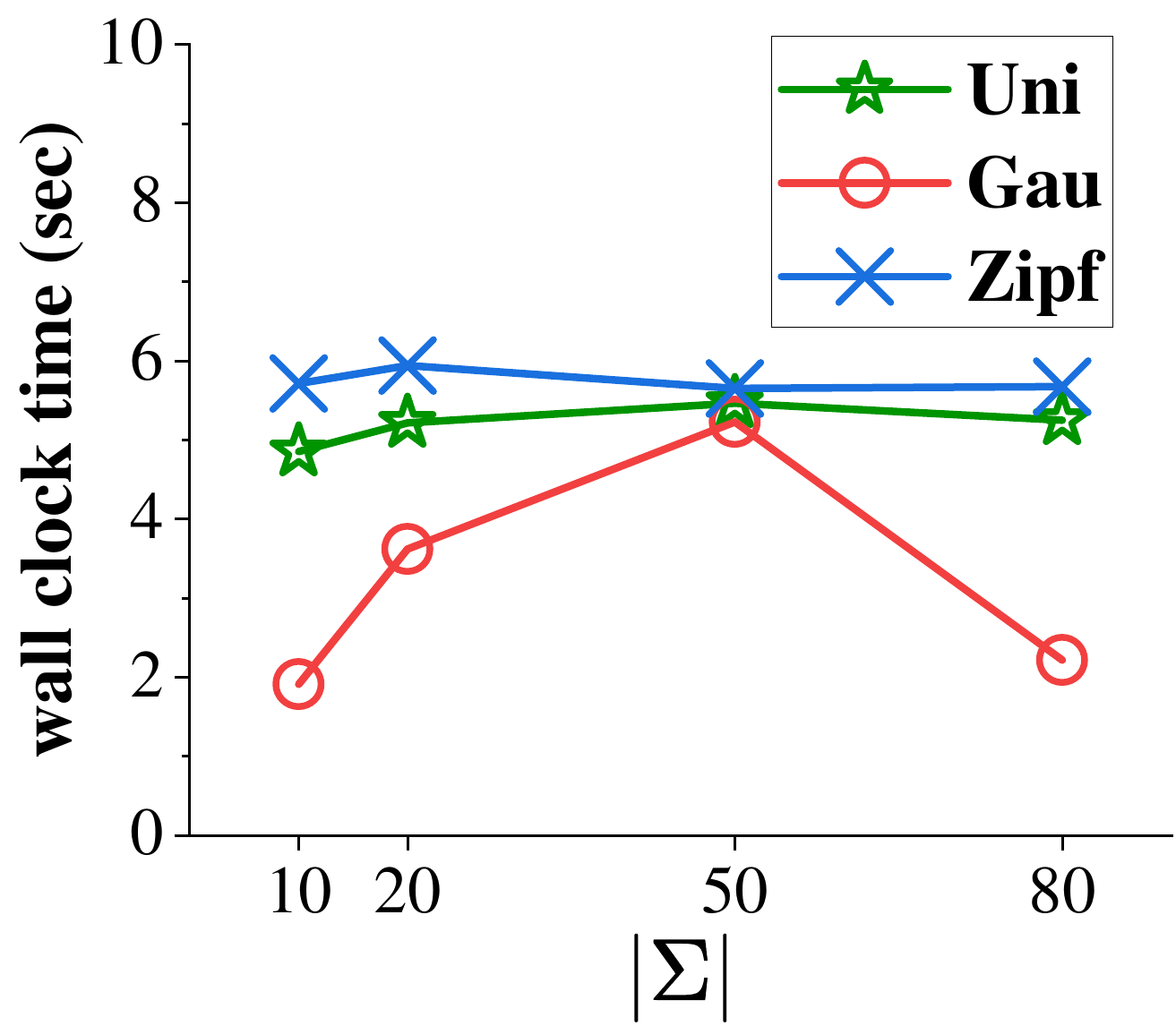}
        \label{subfig:effect_keywords_domain_size}
    }
    \hspace{-0.4cm}
    \subfigure[data graph size $|V(G)|$]{
        \includegraphics[height=3.2cm]{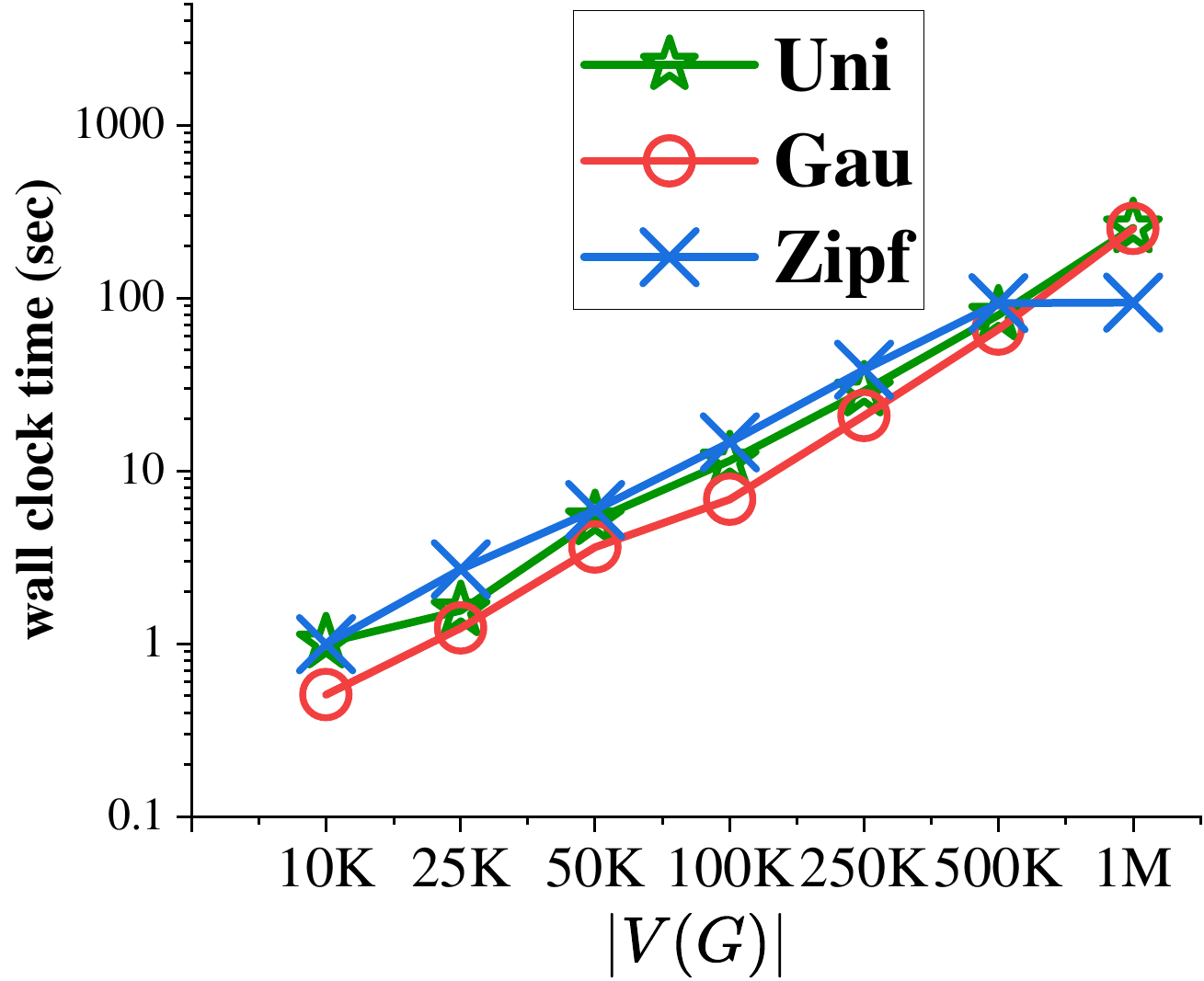}
        \label{subfig:effect_data_graph_size}
    }%\vspace{-1ex}
    \caption{The robustness evaluation of the Top$L$-ICDE performance.}
    \label{fig:efficiency}%\vspace{-4ex}
\end{figure*}

\begin{figure}[t!]\vspace{-2ex}
    \centering
    \subfigure[pruning power] {
        \includegraphics[height=2.8cm]{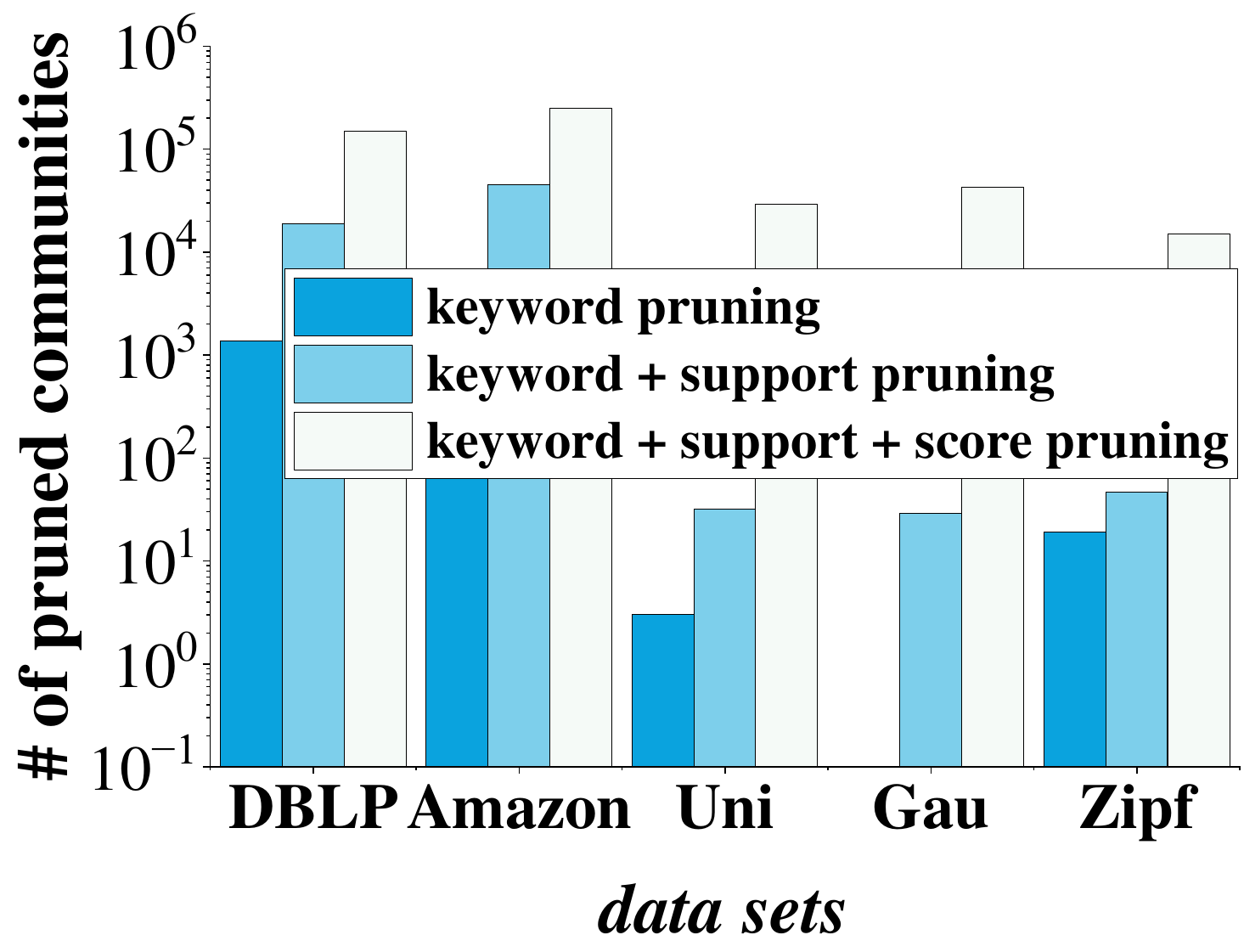}
        \label{subfig:ablation_pruning_power}
    }\hspace{-0.3cm}
    \subfigure[time cost]{
        \includegraphics[height=2.8cm]{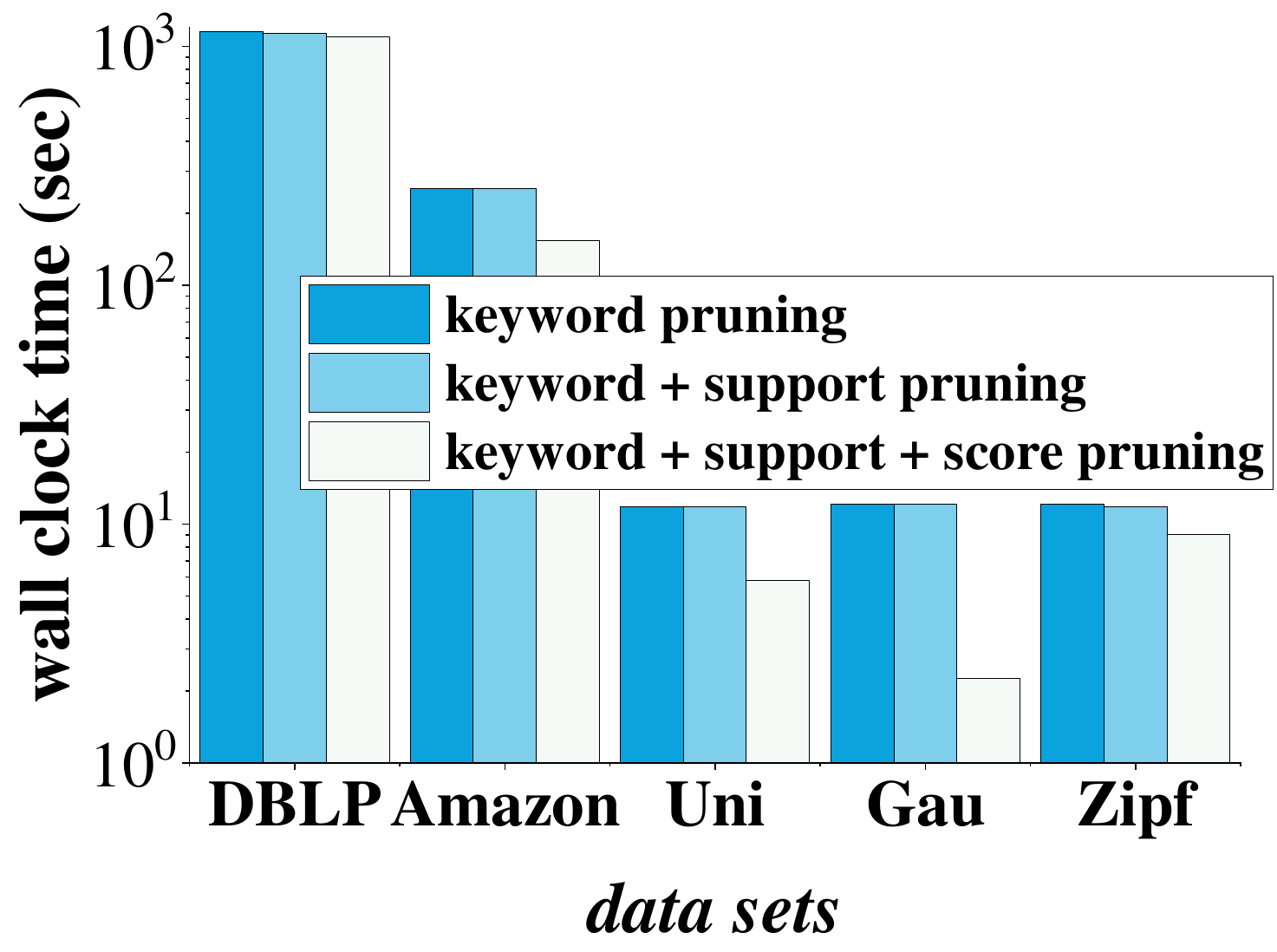}
        \label{subfig:ablation_timecost}
    }
    \vspace{-1ex}
    \caption{The ablation study of the Top$L$-ICDE performance.}
    \vspace{-4ex}
    \label{fig:ablation}
\end{figure}

\begin{figure}[t!]%\vspace{-2ex}
\subfigcapskip=-0.2cm
    \centering
    \subfigure[Top$L$-ICDE ($\sigma(g)=344.31$,\newline 974 possibly influenced nodes)] {
        \includegraphics[height=3cm]{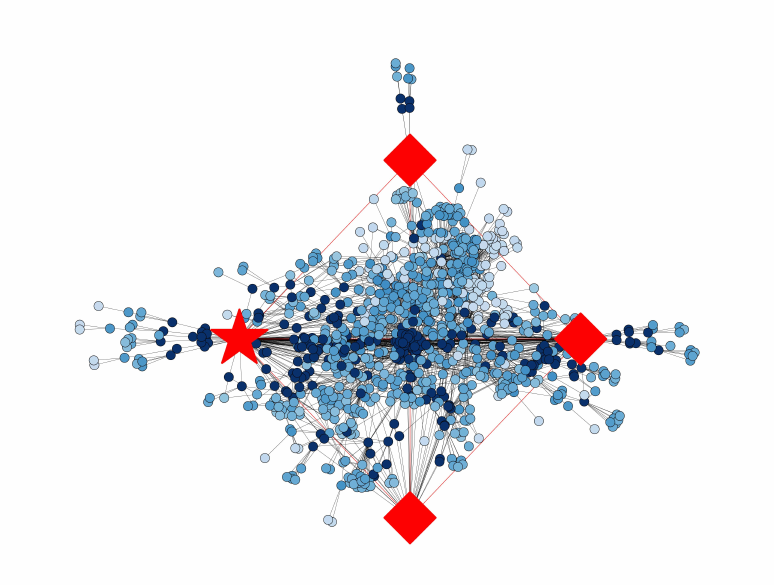}
        \label{subfig:k_truss_case_study}
    }%\hspace{0.1cm}
    \subfigure[$k$-core ($\sigma(g)=239.81$,\newline  646 possibly influenced nodes)]{
        \includegraphics[height=3cm]{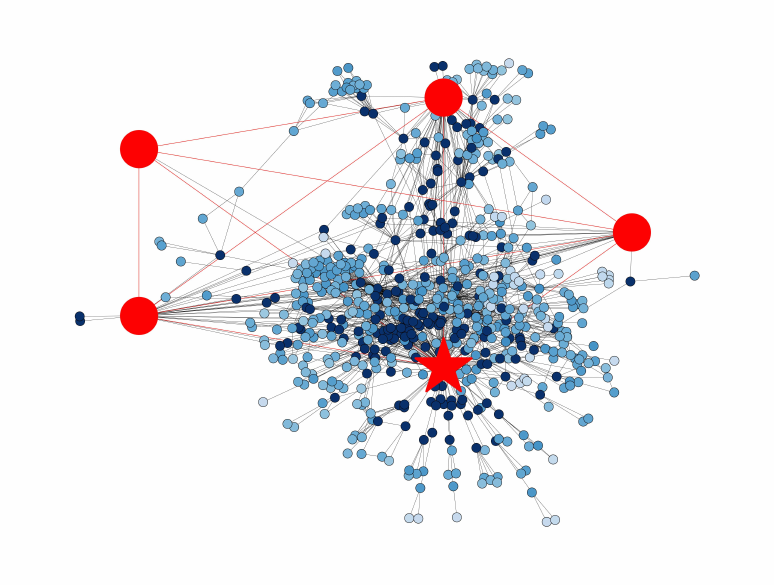}
        \label{subfig:k_core_case_study}
    }
    \vspace{-1ex}
    % \caption{{\color{blue} The case study of the influenced communities from Top$L$-ICDE vs. $k$-core.}}
    \caption{The influenced communities from Top$L$-ICDE vs. $k$-core ($k=4$).}
    \label{fig:case_study}
\end{figure}

%\noindent \textbf{Performance on Real/Synthetic Graphs:}
\noindent \textbf{The Top$L$-ICDE Efficiency on Real/Synthetic Graphs (RQ1):}
Figure~\ref{fig:performance_total} compares the performance of our Top$L$-ICDE approach with that of \textit{ATindex} over real and synthetic graphs, in terms of the \textit{wall clock time}, where we set all parameters to their default values in Table~\ref{tab:parameters}. 
Note that, for \textit{DBLP}, since the time cost of \textit{ATindex} is extremely high, we sample $0.5\%$ center vertices from original graph data without replacement and estimate the total time as $\frac{\overline{t_s}}{0.005}=200\cdot\overline{t_s}$, where $\overline{t_s}$ is the average time per sample. The experimental results show that our Top$L$-ICDE approach outperforms \textit{ATindex} by more than one order of magnitude, which confirms the efficiency of our Top$L$-ICDE algorithm on real/synthetic graphs.

% Figure~\ref{fig:performance_total} compares the performance of our Top$L$-ICDE approach with that of the \textit{baseline} algorithm over real and synthetic graphs, in terms of the \textit{wall clock time}, where we set all parameters to their default values in Table~\ref{tab:parameters}. The experimental results show that our Top$L$-ICDE approach outperforms \textit{baseline} by about two orders of magnitude, which confirms the effectiveness of our proposed pruning and indexing mechanisms and the efficiency of our Top$L$-ICDE algorithm on both real and synthetic graphs.

%Moreover, the time cost of our proposed Top$L$-ICDE approach is 3.575$sec$ $\sim$ 1178.202 $sec$, which is superior to the \textit{baseline}. 

%Therefore, the experimental results confirm the effectiveness of our proposed pruning strategies and the efficiency of our Top$L$-ICDE algorithm on both real and synthetic data. 

%\bf Moreover, the pre-computation costs are XXXXXX-1818.727 $sec$ for synthetic graphs (with default settings), and XXXXX-XXXXX $sec$ for real-world graphs. The index construction times are  0.465 $sec$ - XXXX $sec$ for synthetic graphs, and XXXXX-XXXXX $sec$ for real-world graphs.

%for synthetic data sets with default settings. It takes 1818.727 $sec$ to pre-compute the synopsis for each vertex in the data graph, and 0.465 $sec$ to construct the index.}

To evaluate the robustness of our Top$L$-ICDE approach, in subsequent experiments, we will vary different parameters (e.g., $\theta$, $|Q|$, $k$, $r$, $L$) on synthetic graphs.%, \textit{Uni}, \textit{Gau}, and \textit{Zipf}.

\underline{\it Effect of Influence Threshold $\theta$:} Figure~\ref{subfig:effect_influence_threshold} shows the performance of our Top$L$-ICDE approach, by varying the influence threshold $\theta$ from $0.1$ to $0.3$, where other parameters are by default. When $\theta$ increases, the sizes of candidate seed communities become smaller, which leads to smaller influential score bound $\sigma_L$ (i.e., lower pruning power) and thus higher filtering cost. On the other hand, with larger $\theta$, the cost of refining smaller candidate seed communities becomes lower. Therefore, in Figure~\ref{subfig:effect_influence_threshold}, for larger $\theta$, the wall clock time first increases and then decreases for all the three synthetic graphs. Nevertheless, for different $\theta$ values, the time costs remain low (i.e., 2.79 $\sim$ 5.96 $sec$) over all the graphs.

\underline{\it Effect of the Size, $|Q|$, of the Query Keyword Set:}
Figure~\ref{subfig:effect_query_keywords_set_size} varies the size, $|Q|$, of query keywords set $Q$ from $2$ to $10$, and reports the \textit{wall clock time} of our Top$L$-ICDE approach, where default values are used for other parameters. With the increase of $|Q|$, more candidate seed communities are detected, which incurs higher influential score bound $\sigma_L$ and in turn higher pruning power. Thus, when $|Q|$ increases, almost all curves decrease, except for smaller wall clock time when $|Q| < 5$. This is due to empty or small ($<L$) answer sets resulting from small $|Q|$. Nonetheless, for different $|Q|$ values, the \textit{wall clock time} remains low (i.e., 0.71 $\sim$ 5.94 $sec$).

%so that the cost of refining smaller candidate seed communities becomes lower but the cost of $k$-truss decomposition is greater. So the \textit{wall clock time} shows a trend that increases at first and then decreases. Benefiting the influential score pruning with $\sigma_L$, the \textit{wall clock time} remains low (i.e., 0.239 $sec$ $\sim$ 5.606 $sec$).}

\underline{\it Effect of Truss Support Parameter $k$:}
Figure~\ref{subfig:effect_truss_support_parameter} illustrates the performance of our Top$L$-ICDE approach, where the support parameter of the truss $k = 3$, $4$, and $5$, and default values are used for other parameters. The time cost is generally not very sensitive to $k$ values, since edge supports are similar in all the three synthetic graphs. When $k = 5$, however, no candidate communities can be detected, and thus time costs on the three graphs are similar (but trends are different from cases of $k=3, 4$). For different $k$ values, wall clock times of our Top$L$-ICDE approach remain low (i.e., 3.61 $\sim$ 5.95 $sec$).

\underline{\it Effect of Radius $r$:}
Figure~\ref{subfig:effect_radius} illustrates the experimental results of our Top$L$-ICDE approach for different radii $r$ of seed communities, where $r = 1$, $2$, and $3$ and other parameters are by default. A larger radius $r$ leads to larger seed communities to filter and refine, which incurs higher time costs, as confirmed by the figure. Nonetheless, the \textit{wall clock time} remains small (i.e., around 1.12 $\sim$ 10.83 $sec$) for different $r$ values.

%, the algorithm must compute an ego graph with a greater $r$, which is bound to consume more time to decompose a $k$-truss from the ego graph. Nonetheless, for different $r$ values, the \textit{wall clock time} remains small (i.e., around 1.117 $sec$ $\sim$ 10.834 $sec$).}

\underline{\it Effect of the Size, $L$, of Query Result Set:}
Figure~\ref{subfig:effect_result_set_size} presents the performance of our Top$L$-ICDE approach with different sizes, $L$, of query result set, where $L$ varies from $2$ to $10$ and default values are used for other parameters. Intuitively, the larger $L$, the more communities must be processed. Despite that, the time cost of Top$L$-ICDE remains low (i.e., 2.44 $\sim$ 6.18 $sec$) for different $L$ values.

\underline{\it Effect of the Number, $|v_i.W|$, of Keywords per Vertex:}
Figure~\ref{subfig:effect_keywords_per_vertex_size} varies the average number, $|v_i.W|$, of keywords per vertex from $1$ to $5$, where other parameters are set to their default values. More keywords in $v_i.W$ imply higher probabilities that vertex $v_i$ is included in candidate seed communities, which leads to higher processing costs. On the other hand, larger seed communities will have higher influential score bound $\sigma_L$ and in turn higher pruning power. Thus, the wall clock time is co-affected by these two factors. For larger $|v_i.W|$, the time cost first increases and then decreases but nevertheless remains low  (i.e., 0.73 $\sim$ 5.94 $sec$).

%Therefore, with the increase of $|v_i.W|$, the wall clock time increases due to the filtering and refinement of more seed communities. 

\setlength{\textfloatsep}{0pt}
\begin{figure*}[t!]%\vspace{-3ex}
    \centering
    \subfigure[time cost vs. graphs]{
        \includegraphics[height=2.8cm]{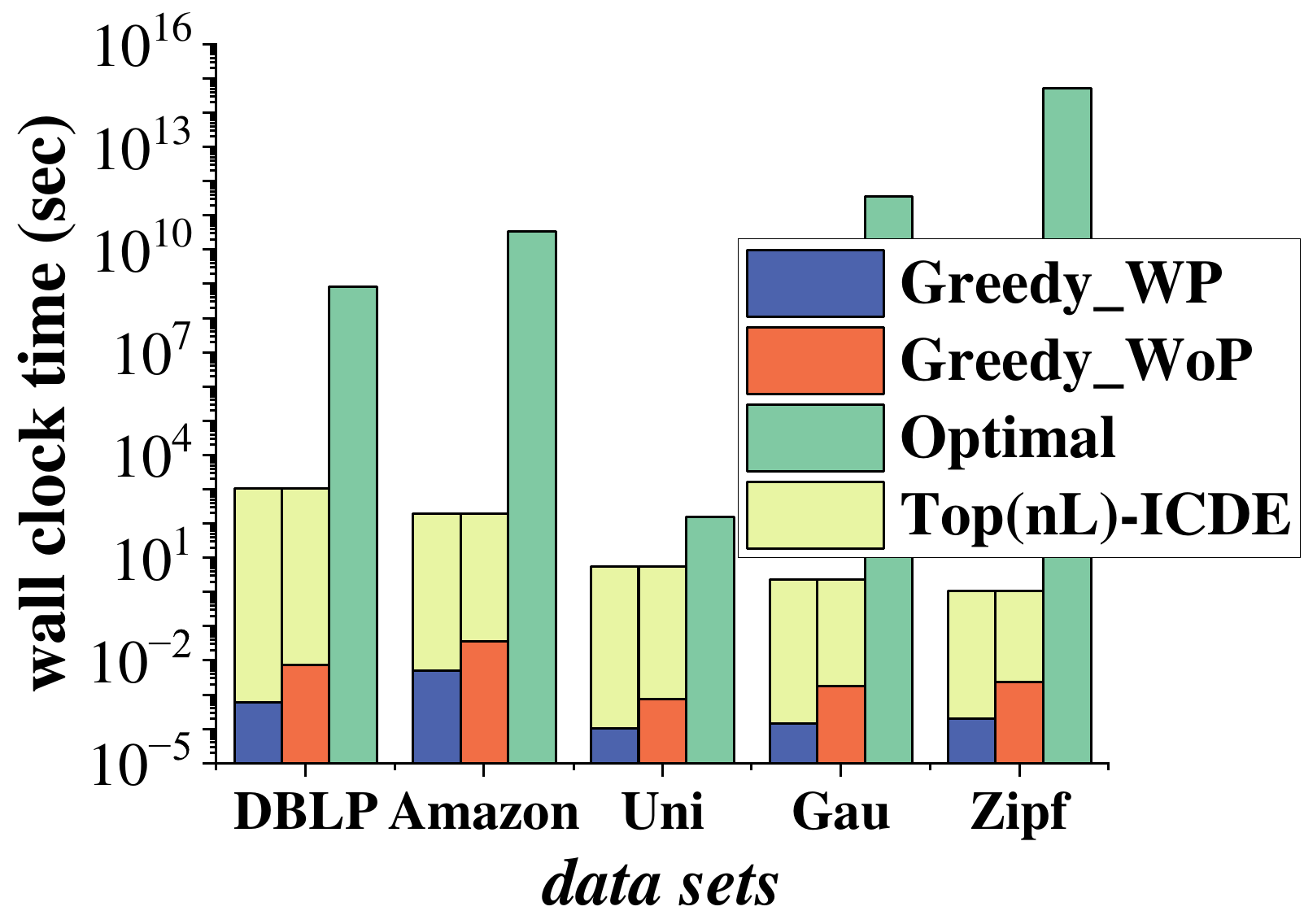}
        \label{subfig:dtopl_performance_time}
    }
    \hspace{-0.3cm} 
    \subfigure[query result size $L$]{
        \includegraphics[height=2.8cm]{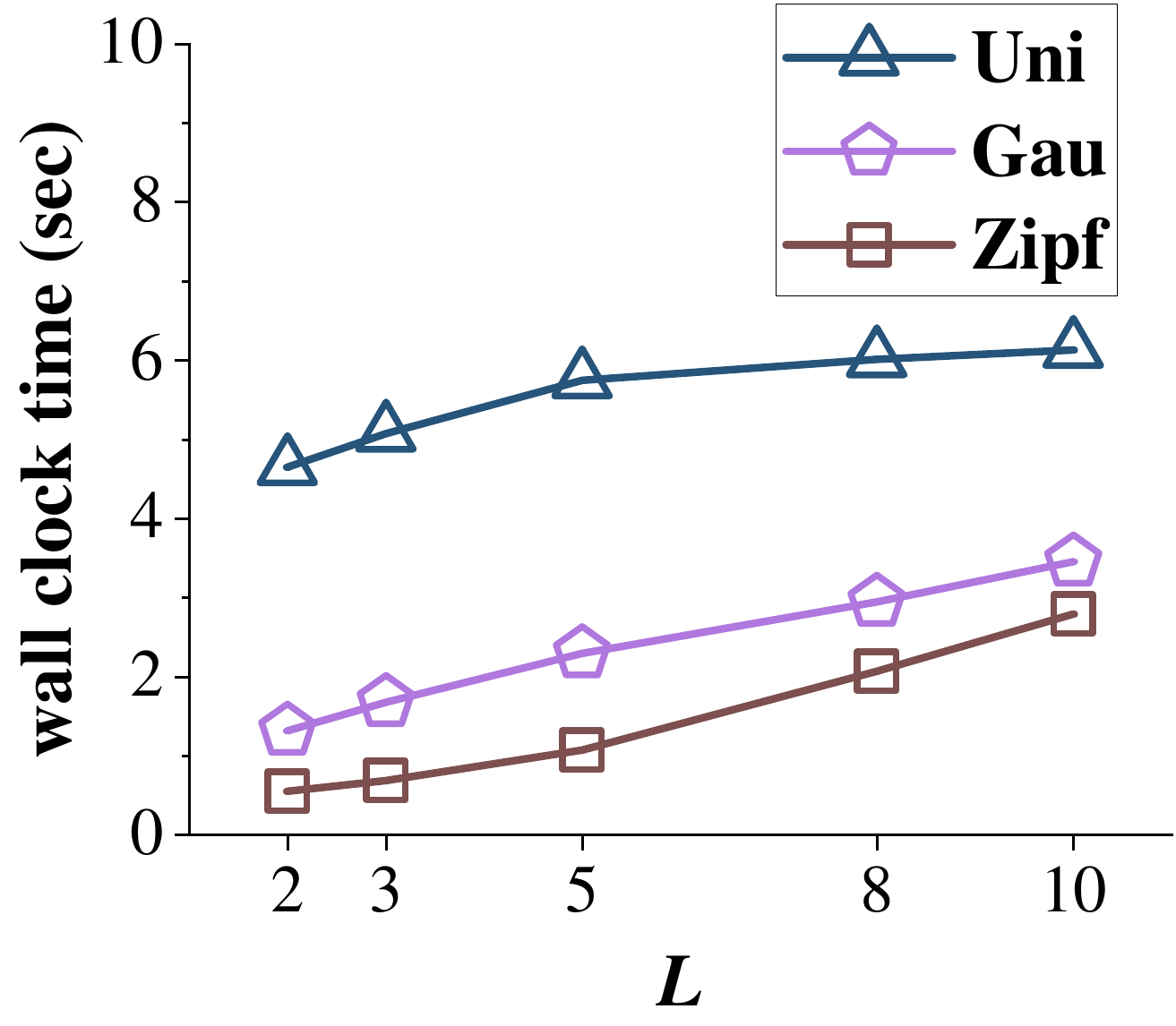}
        \label{subfig:greedy_wp_effect_result_set_size}
    }\hspace{-0.3cm}
    \subfigure[parameter $n$]{
        \includegraphics[height=2.8cm]{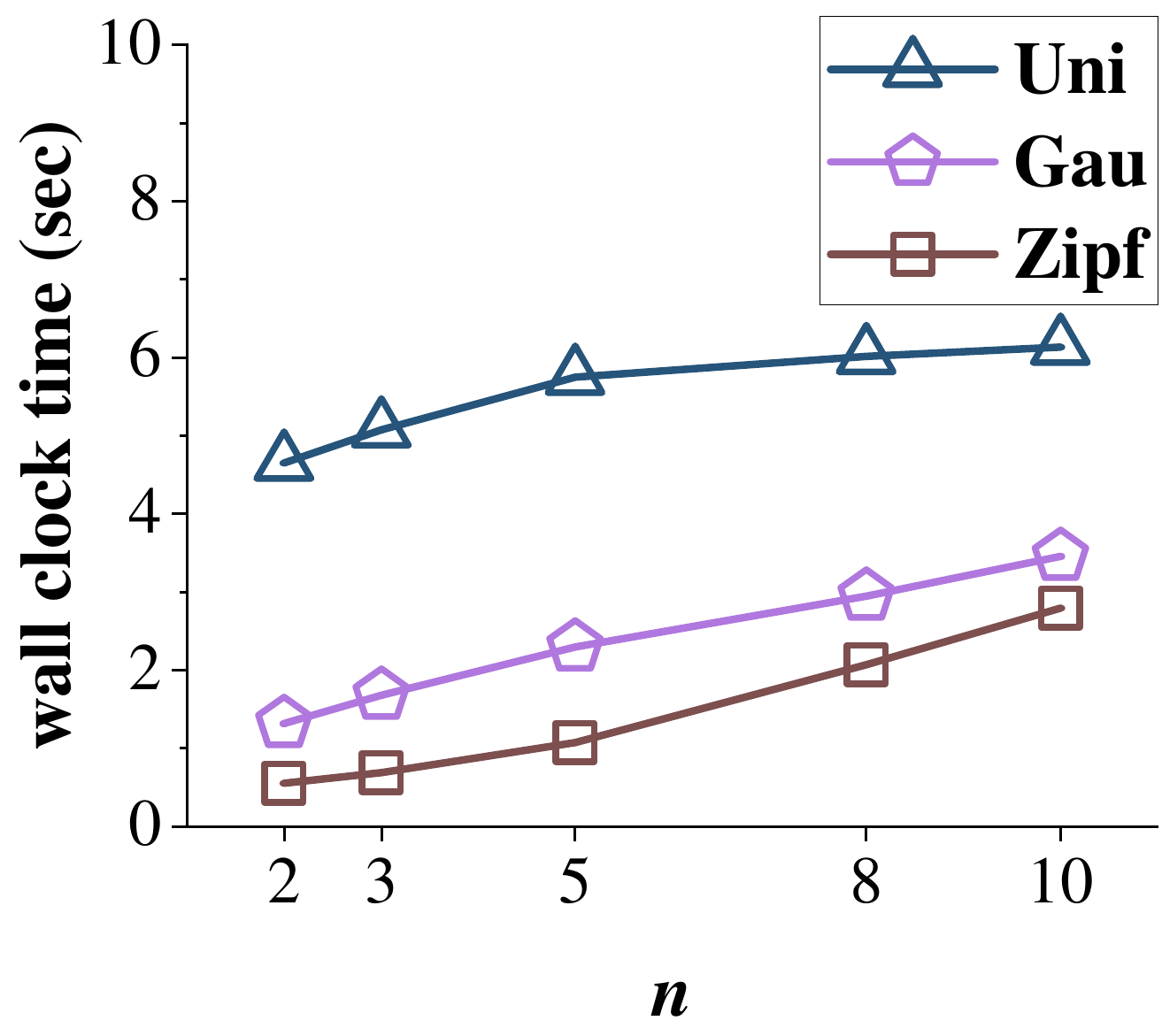}
        \label{subfig:greedy_wp_effect_param_n}
    }\hspace{-0.3cm}
    \subfigure[graph size $|V(G)|$]{
        \includegraphics[height=2.8cm]{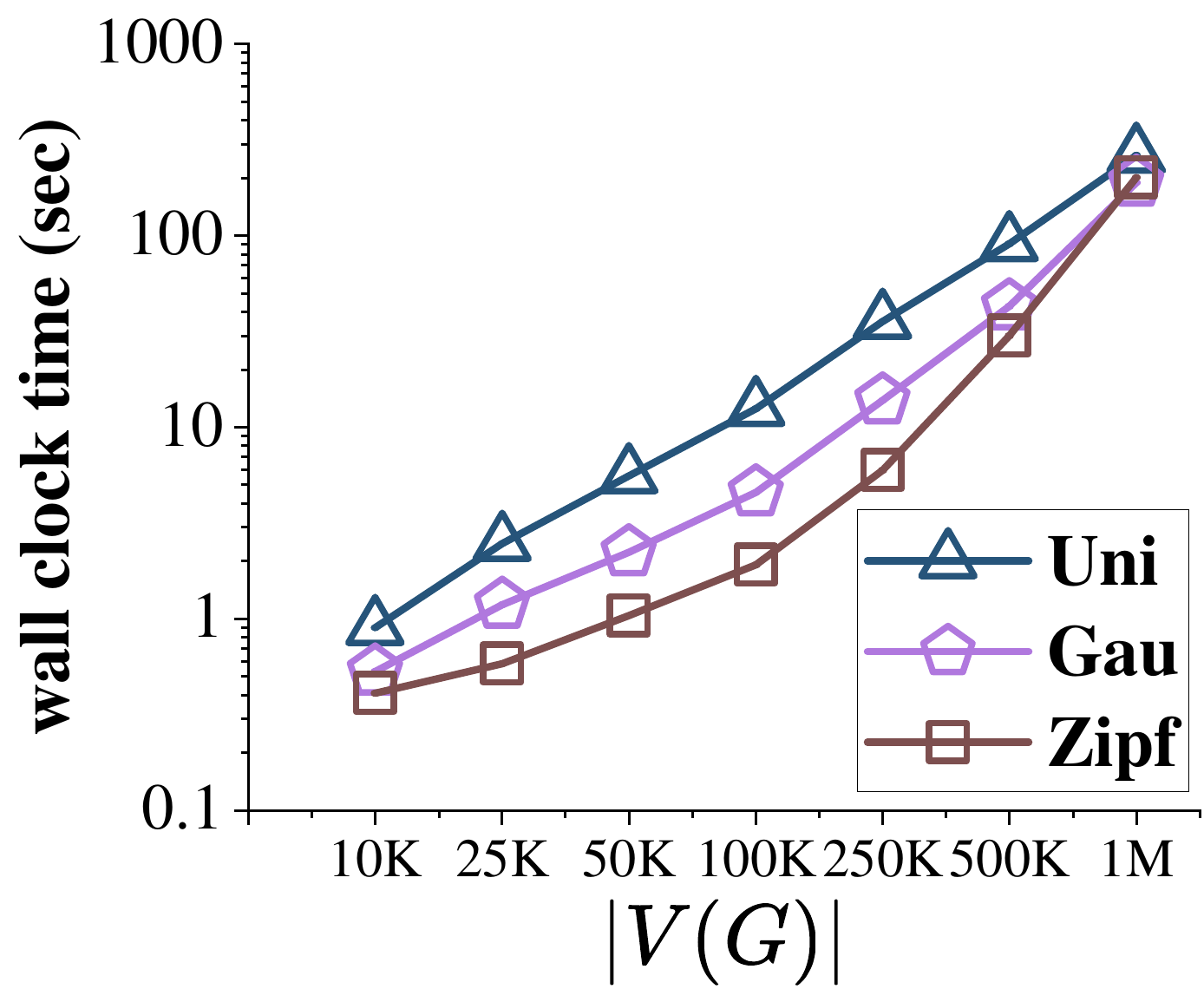}
        \label{subfig:greedy_wp_effect_data_graph_size}
    }\hspace{-0.3cm}
    \subfigure[accuracy vs. graphs]{
        \includegraphics[height=2.8cm]{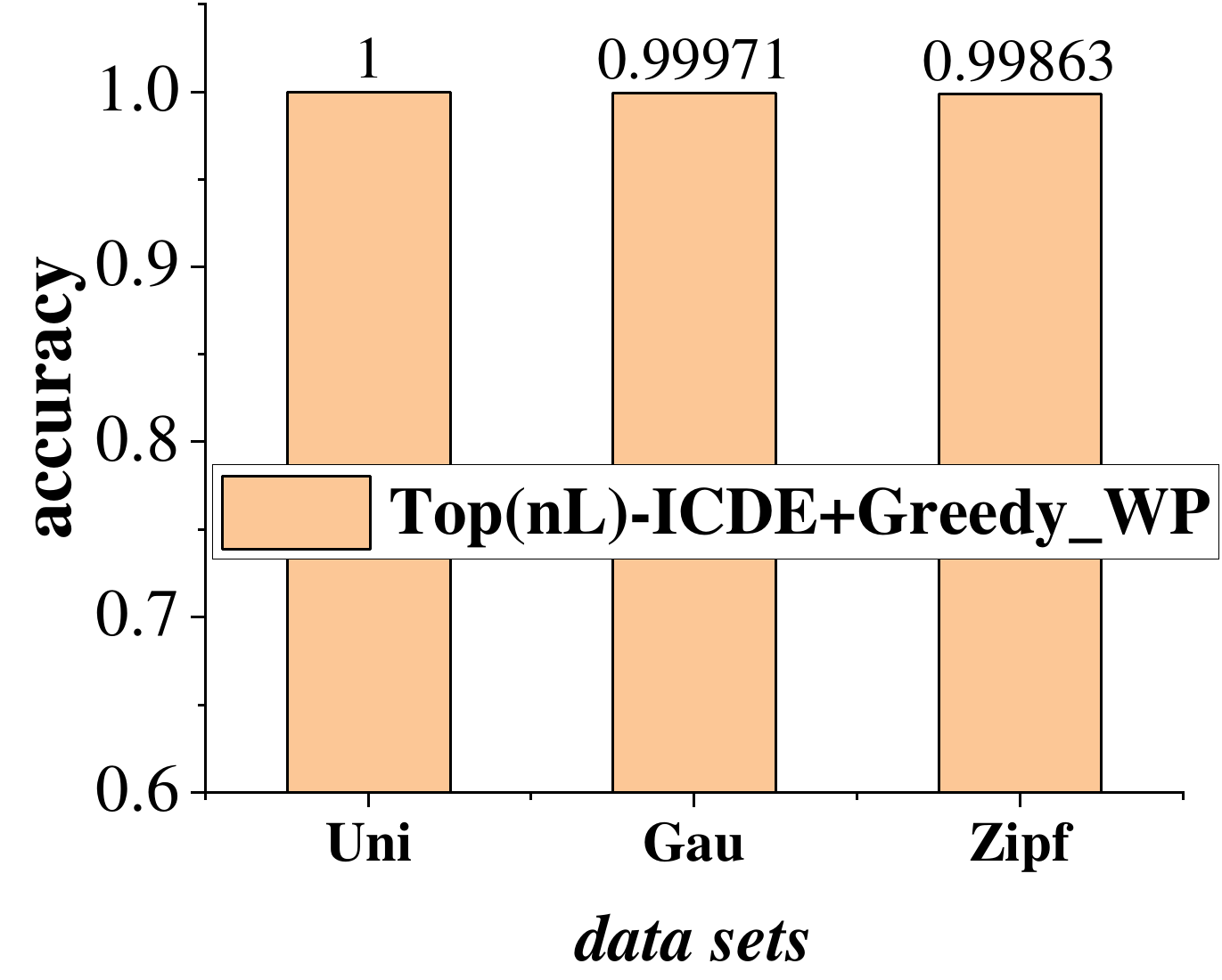}
        \label{subfig:dtopl_performance_accuracy}
    }%\vspace{-1ex}
    \caption{The DTop$L$-ICDE performance evaluation.}
    \label{fig:DTopL-efficiency}%\vspace{-4ex}
\end{figure*}

%means that each user in the data graph has a higher probability of participating in a surrounding community so the number of seed communities increases. Therefore, the power of \textit{influential score pruning} is strengthened and time cost is lower. (i.e., 0.73 $sec$ $\sim$ 5.7 $sec$)}

\underline{\it Effect of Keyword Domain Size $|\Sigma|$:}
Figure~\ref{subfig:effect_keywords_domain_size} evaluates the performance of our Top$L$-ICDE approach by varying the keyword domain size, $|\Sigma|$ from $10$ to $80$, where other parameters are set to default values. When $|\Sigma|$ becomes larger, more vertices are expected to be included in seed communities, which results in higher costs to filter/refine larger seed communities. On the other hand, larger seed communities also produce higher influential score bound $\sigma_L$ with higher pruning power. Therefore, for all three graphs, when $|\Sigma|$ increases, the wall clock time first increases and then decreases. Similar to previous experimental results, the time costs remain low (i.e., 1.91 $\sim$ 5.94 $sec$).

\noindent \textbf{Ablation Study (RQ2):} We conduct an ablation study over real/synthetic graphs to evaluate the effectiveness of our proposed pruning strategies, where all parameters are set to their default values. We tested different combinations by adding one more pruning method each time: (1) \textit{keyword pruning} only, (2) \textit{keyword + support pruning}, and (3) \textit{keyword + support + score pruning}. Figure~\ref{subfig:ablation_pruning_power} examines the number of pruned candidate communities, whereas Figure~\ref{subfig:ablation_timecost} shows the time cost for different pruning combinations. From experimental results, we can see that with more pruning methods, the number of pruned communities increases by about an order of magnitude, and the wall clock time decreases. Especially, the third influential score pruning method can significantly prune more candidate communities (in addition to the first two pruning methods) and result in the lowest time cost.

\noindent \textbf{Case Study (RQ3):}
To evaluate the usefulness of our Top$L$-ICDE results, we conduct a case study to compare the influences of our Top$L$-ICDE seed community with that of $k$-core \cite{sozio2010community} over $Amazon$. Figure~\ref{subfig:k_truss_case_study} shows our Top$1$-ICDE community with 4 users ($(4,2)$-truss), whereas Figure~\ref{subfig:k_core_case_study} illustrates 5 users in $4$-core community, where the (red) star point in both subfigures represent the same center vertex. From the figures, we can see that our Top$1$-ICDE community has an influential score $\sigma(g) = 344.31$ with 974 possibly influenced users (blue points). In contrast, the $4$-core has more seed users, but with lower influential score $\sigma(g) = 239.81$ and smaller number of possibly influenced users (i.e., 646). This confirms the usefulness of our Top$L$-ICDE problem to obtain seed communities with high influences for real-world applications such as online advertising/marketing.

\underline{\it Effect of the Graph Size $|V(G)|$:}
Figure~\ref{subfig:effect_data_graph_size} tests the scalability of our Top$L$-ICDE approach with different social network sizes, $|V(G)|$, from $10K$ to $1M$, where default values are assigned to other parameters. In the figure, when the graph size $|V(G)|$ becomes larger, the wall clock time smoothly increases (i.e.,  from 0.51 $sec$ to 255.62 $sec$ for $|V(G)|$ from $10K$ to $1M$, respectively), which confirms the scalability of our Top$L$-ICDE algorithm for large network sizes.

\subsection{DTop$L$-ICDE Performance Evaluation}

\noindent \textbf{The DTop$L$-ICDE Efficiency on Real/Synthetic Graphs (RQ1):}
Figure~\ref{subfig:dtopl_performance_time} compares the performance of our DTop$L$-ICDE approach (i.e., Top($nL$)-ICDE+{\sf Greedy\_WP}), Top($nL$)-ICDE+{\sf Greedy\_WoP}, and the \textit{Optimal} algorithm over real and synthetic graphs, in terms of the \textit{wall clock time}, where all parameters are set to their default values in Table~\ref{tab:parameters}. The experimental results show that our DTop$L$-ICDE approach outperforms \textit{Optimal} by at least three orders of magnitude. 

%Moreover, we also test the experiments on small-scale graphs ($|V(G)| = 1K$, $3$ keywords per vertices, and $|\Sigma| = 20$) following Uniform, Gaussian, and Zipf distributions, and report the \textit{accuracy} of our DTop$L$-ICDE approach (i.e., the diversity score ratio of our approach to \textit{Optimal}) in Figure~\ref{subfig:dtopl_performance_accuracy}. The results indicate that our DTop$L$-ICDE accuracy is very close to 100\% (i.e., 99.863\% $\sim$ 100\%).

Below, we evaluate the robustness of our DTop$L$-ICDE approach with different parameters (e.g., $n$, $L$, $|V(G)|$) on synthetic graphs.

%\noindent \textbf{Effect of the Size, $L$, of Query Result Set:} 
\underline{\it Effect of the Size, $L$, of Query Result Set:} 
Figure~\ref{subfig:greedy_wp_effect_result_set_size} illustrates the experimental results of our DTop$L$-ICDE approach for different sizes, $L$, of query result set, by varying $L$ from $2$ to $10$ and default values are used for other parameters. Larger $L$ values lead to lower influential score bound $\sigma_{(nL)}$, and thus more candidate communities to be retrieved and refined, which incur higher time costs. For different $L$ values, the time cost of DTop$L$-ICDE remains low (i.e., 2.72 $\sim$ 6.39 $sec$).

%However, the candidates are fewer in the \textit{Zipf} so that all of the candidates in the data graph are considered when $L=2$. Despite that, 

%\noindent \textbf{Effect of Parameter $n$:}
\underline{\it Effect of Parameter $n$:}
Figure~\ref{subfig:greedy_wp_effect_param_n} shows the performance of our DTop$L$-ICDE approach, where $n$ varies from $2$ to $10$ and other parameters are set to their default values. With increasing $n$, lower influential score bound $\sigma_{(nL)}$ is used, resulting in higher time costs. Nevertheless, the \textit{wall clock time} remains small (i.e., around 2.72 $\sim$ 6.28 $sec$) for various $n$ values.

%When $n=2$, DTop$L$-ICDE obtains all of the candidates, which leads to the stable time cost.

%\noindent \textbf{Effect of the Graph Size $|V(G)|$:}
\underline{\it Effect of the Graph Size $|V(G)|$:}
Figure~\ref{subfig:greedy_wp_effect_data_graph_size} reports the performance of our DTop$L$-ICDE approach with different social-network sizes, $|V(G)|$, from $10K$ to $1M$ and default values are used for other parameters. Intuitively, the larger $|V(G)|$, the more communities that must be processed, which incurs smoothly increasing time costs (i.e., 0.9 $\sim$ 278.18 $sec$).

\noindent \textbf{The DTop$L$-ICDE Accuracy (RQ4):}
Moreover, we also test the experiments on small-scale graphs ($|V(G)| = 1K$, $3$ keywords per vertices, and $|\Sigma| = 20$) following Uniform, Gaussian, and Zipf distributions, and report the \textit{accuracy} of our DTop$L$-ICDE approach (i.e., the diversity score ratio of our approach to \textit{Optimal}) in Figure~\ref{subfig:dtopl_performance_accuracy}. The results indicate that our DTop$L$-ICDE accuracy is very close to 100\% (i.e., 99.863\% $\sim$ 100\%).

%, which confirms the scalability of our DTop$L$-ICDE approach.

\section{Related Work}
\label{sec:related_work}

%. The community search problem aims to obtain communities in a social network based on the given query request.

\noindent {\bf Community Search/Detection:}
In real applications of social network analysis, the \textit{community search} (CS) and \textit{community detection} (CD) are two fundamental tasks, which retrieve \textit{one} community that contains a given query vertex and \textit{all} communities in social networks, respectively. 

% (which contains at least $k$ nodes and any two nodes in $\mathcal{C}^{kr}$ can be reached at most $r$ hops)
%However, these works are designed for non-distance semantic (i.e., $(k,d)$-truss). Although Huang et al. \cite{huang2017attribute} proposed $(k,d)$-truss for geo-spatial networks, they did not consider user topic keywords and social influence. 

Prior works proposed many community semantics based on different structural cohesiveness, such as the minimum degree \cite{cui2014local}, $k$-core \cite{sozio2010community}, $k$-clique \cite{cui2013online}, and $(k,d)$-truss \cite{huang2017attribute}. In contrast, our Top$L$-ICDE problem retrieves not only highly connected seed communities but also those with the highest influences and containing query keywords in social networks, which is more challenging. On the other hand, previous works on community detection retrieved all communities by considering link information only \cite{newman2004finding,fortunato2010community}. More recent works used clustering techniques to detect communities \cite{xu2012model,conte2018d2k,veldt2018correlation}. However, these works did not require structural constraints of community answers or consider the impact of the influenced communities, which is the focus of our Top$L$-ICDE problem in this paper.

\noindent {\bf Influence Maximization:} 
Previous works on the \textit{influence maximization} (IM) problem over social networks \cite{chen2010ScalableInfluenceMaximization,tang2018online,ohsaka2020solution,ali2021fairness,li2021large} usually obtain arbitrary (connected or disconnected) individual users from social networks with the maximum influence on other users, where \textit{independent cascade} (IC) and \textit{linear threshold} (LT) models \cite{kempe2003MaximizingSpreadInfluence} were used to capture the influence propagation. However, most existing works on the IM problem do not assume strong social relationships of the selected users (i.e., no community requirements for users). In contrast, our Top$L$-ICDE requires seed communities to be connected, have high structural cohesiveness, and cover some query keywords, which is more challenging.  

%to reflect the influence of a community, we proposed the concept of the influenced community, which clearly reflects the overall sphere of influence of a community. 

\noindent{\bf Influential Community:} 
There are some recent works \cite{li2017most,luo2023EfficientInfluentialCommunity,zhou2023influential} on finding the most influential community over social networks. These works considered different graph data models such as uncertain graphs \cite{luo2023EfficientInfluentialCommunity} and heterogeneous information networks \cite{zhou2023influential} and influential community semantics like $kr$-clique \cite{li2017most}, $(k,\eta)$-influential community, and $(k,\mathcal{P})$-core \cite{zhou2023influential}. Moreover, they did not take into account users' interests (represented by keywords) in communities. With different graph models and influential community semantics, our Top$L$-ICDE problem uses a certain, directed graph data model with the MIA model for the influence propagation and aims to retrieve top-$L$ influential communities (rather than all communities) under different community semantics of structural, keyword-aware, and influential score ranking. Thus, we cannot directly borrow previous works to solve our Top$L$-ICDE problem.

%To the best of our knowledge, there are no works focusing on diversified community detection with maximum influence. 

%Some other works discuss the problem of diversified attributes. 

%GNN is also utilized to solve the diversified subgraph problems. 

%Existing works considered the diversified subgraphs with maximum user coverage. For example, 

%However, this problem does not take into account the cohesiveness of subgraphs.

%In addition, some prior works \cite{huangTopKStructuralDiversity,tai2014StructuralDiversityResisting} studied the structural diversity search problem, where structural diversity is the number of connected components. The problem aims to find users with the highest structural diversities instead of communities with high influence and maximum user coverage.

% Recently, Zhang et al. \cite{zhang2021GRAINImprovingDataa} proposed a new data selection criterion, \textit{diversified influence maximization}, for GNNs, by considering both magnitude and diversity of feature influence. 

%However, two of them just considered the diversity structure in $1$-hop subgraph rather than the communities in the whole data graph.

\noindent{\bf Diversified Subgraphs:} There are several existing works that considered retrieving the diversified subgraphs. For example, Yang et al. \cite{yang2016DiversifiedTopkSubgraph} studied the top-$k$ diversified subgraph problem, which returns a set of up to $k$ subgraphs that are isomorphic to a given query graph, and cover the largest number of vertices. Some prior works \cite{huangTopKStructuralDiversity,tai2014StructuralDiversityResisting} studied the structural diversity search problem in graphs, which obtains vertex(es) with the highest structural diversities (defined as $\#$ of connected components in the 1-hop subgraph of a vertex). Chowdhary et al. \cite{chowdhary2022FindingAttributeDiversified} aimed to search for a community that is structure-cohesive (i.e., with the minimum number of vertices) and attribute-diversified (i.e., with the maximum number of attribute labels in vertices). The aforementioned works either did not consider the cohesiveness and/or influences of the returned subgraphs, or focused on node-/attribute-level diversity (rather than community-level diversity). Thus, with different problem definitions, we cannot directly use techniques proposed in these works to solve our DTop$L$-ICDE problem.

%\textit{top-$L$ most influential community detection} (Top$L$-ICDE)
\balance

\section{Conclusions}
\label{sec:conclusion}
In this paper, we proposed
a novel Top$L$-ICDE problem, which retrieves top-$L$ seed communities from social networks with the highest influential scores. Unlike traditional community detection problems, the Top$L$-ICDE problem considers influence effects from seed communities, rather than returning highly connected communities alone. To process the Top$L$-ICDE operator, we proposed effective community-level and index-level pruning strategies to filter out false alarms of candidate seed communities and design an effective tree index over pre-computed data to facilitate our proposed efficient Top$L$-ICDE processing algorithm. We also formulated and tackled a Top$L$-ICDE variant (i.e., DTop$L$-ICDE), which is NP-hard, by proposing a greedy algorithm with effective diversity score pruning. Comprehensive experimental results
on real/synthetic social networks confirm the effectiveness and efficiency of our Top$L$-ICDE and DTop$L$-ICDE approaches. 
%show
%have been conducted to 

\nop{
\appendix
\section{appendix}\small

\subsection{Proof of Lemma~\ref{lemma:keyword_pruning}}
\label{proof:keyword_pruning_lemma}
\begin{proof}
Intuitively, if $v_i.W \cap Q = \emptyset$ holds for any vertex in $g$, it means that user $v_i$ in the social network is not interested in any keyword in the query keyword set. It does not satisfy the Definition~\ref{def:seed_community} so that subgraph $g$ can be safely pruned, which completes the proof. \qquad $\square$
\end{proof}

\subsection{Proof of Lemma~\ref{lemma:support_pruning}}
\label{proof:support_pruning_lemma}
\begin{proof}
For a subgraph $g$, we can calculate the support $sup(e_i)$ for each $e_i \in E(g)$. In the definition of k-truss\cite{cohen2008TrussesCohesiveSubgraphs}, the value of $sup(e_i)$ is the number of triangles made by $e_i$ and other pairs of edges. Moreover, in a $k$-truss subgraph, each edge $e$ must be reinforced by at least $k - 2$ pairs of edges, making a triangle with edge $e$. According to our definition of seed community (Definition~\ref{def:seed_community}), subgraph $g$ can be safely pruned by a specific $k$, which completes the proof. \qquad $\square$
\end{proof}

\subsection{Proof of Lemma~\ref{lemma:radius_pruning}}
\label{proof:radius_pruning_lemma}
\begin{proof}
According to the second bullet in Definition~\ref{def:seed_community}, a subgraph with maximum radius $r$ means that there exists a center vertex whose \textit{the shortest path distance}($dist(.,.)$) to any other vertex is less than $r$. For a subgraph $g$ and vertex $v_i \in g$, if there exists vertex $v_j \in g$ whose $dist(v_i, v_j)$ is greater than $r$, it does not satisfy the definition of seed community (Definition~\ref{def:seed_community}) so that subgraph $g$ can be safely pruned, which completes the proof. \qquad $\square$
\end{proof}

\subsection{Proof of Lemma~\ref{lemma:influential_score_pruning}}
\label{proof:influence_upperbound_pruning_lemma}
\begin{proof}
Given the smallest influential score among $L$ seed communities $g_i$, $\sigma_L$, and a seed community $g$, the influential score of $g$, $\sigma(g)$, and the upper bound of $\sigma(g)$, $ub\_\sigma(g)$. It holds that $ub\_\sigma(g) > \sigma(g)$ in accordance with the definition of upper bound. If $\sigma_L > ub\_\sigma(g)$, there is $\sigma_L > ub\_\sigma(g) > \sigma(g_i)$. So $g$ can not be added to the top $L$ set and can be safely pruned, which completes the proof. \qquad $\square$
\end{proof}

\subsection{Proof of Lemma~\ref{lemma:index_keyword_pruning}}
\label{proof:index_keyword_pruning_lemma}
\begin{proof}
The $N_i.BV_r$ is an aggregated keyword bit vector computed by $\bigvee_{\forall v_l\in N_i} v_l.BV_r$. According to the computation of keyword bit vector, the $f(w)$-th bit position valued $0$ means that any vertex $v_l\in N_i$ holds that $w \notin v_l.W$, where $w$ is a keyword and the $f(\cdot)$ is a hash function. Therefore, if $N_i.BV_r \wedge Q.BV = \boldsymbol{0}$ holds, each keyword in $Q$ is not included by any vertex belonging to $N_i$. $N_i$ can be safely pruned, which completes the proof. \qquad $\square$
\end{proof}

\subsection{Proof of Lemma~\ref{lemma:index_support_pruning}}
\label{proof:index_support_pruning_lemma}
\begin{proof}
Since $N_i.ub\_sup_r$ is the maximum edge support upper bound in $\forall v_l \in N_i$, it holds that $N_i.ub\_sup_r \geq v_l.ub\_sup_r$ and $v_l.ub\_sup_r$ is greater than the support of any subgraph included $v_l$ satisfied truss structure. Therefore, if $N_i.ub\_sup_r < k$ holds, the support of any subgraph included $v_l$ is less than $k$ for each $v_l \in N_i$ so that $N_i$ can be safely pruned, which completes the proof. \qquad $\square$
\end{proof}

\subsection{Proof of Lemma~\ref{lemma:index_influence_pruning}}
\label{proof:index_influence_pruning_lemma}
\begin{proof}
The influential score upper bound $N_i.\sigma_z$ is the maximum influential score upper bound w.r.t. threshold $\theta_z$ in $\forall v_l \in N_i$, so it holds that $N_i.\sigma_z \geq \sigma_z(hop(v_l, r)) \geq \sigma(hop(v_l, r))$, where $\sigma(hop(v_l, r))$ is the accurate influential score of $hop(v_l, r)$. The smallest influential score $\sigma_L$ holds that $\sigma_L \leq \sigma(g_i)$ for $\forall g_i \in S_L$. Therefore, if it holds that $N_i.\sigma_z \leq \sigma_L$, we obtain that $\sigma(g_i) \geq \sigma(hop(v_l, r))$ for $\forall g_i \in S_L$ and $\forall v_l \in N_i$ so that the $N_i$ can be safely pruned, which completes the proof. \qquad $\square$
\end{proof}

\subsection{Proof of Lemma~\ref{lemma:diversity_score_pruning}}
\label{proof:diversity_score_pruning_lemma}
\begin{proof}
In the given sequence $\mathcal{S}$, it holds $S_i \subseteq S_j$ if $i < j$. According to the submodularity of the computation of the diversity score $D(S)$, it holds that $\Delta D_{g}(S_i) \geq D_{g}(S_j)$ for a candidate $g$ if $i \leq j$. Therefore, given the latest result set $S_n$, a candidate $g_m$ and any other candidate $g$, if it holds $\Delta D_{g_m}(S_n) > \Delta D_{g}(S_i)$, it means $\Delta D_{g_m}(S_n) > \Delta D_{g}(S_i) \geq \Delta D_{g}(S_n)$, which completes the proof. \qquad $\square$
\end{proof}

\subsection{Proof of Lemma~\ref{lemma:dtop_icde_approximation}}
\label{proof:dtop_icde_approximation_lemma}
\begin{proof}
With the naive greedy algorithm, the candidate set, $\hat{S}$, contains all of the seed communities, while the online DTop$L$-ICDE processing algorithm only selects the candidate set, $S^\prime$, with the top-$nL$ influential score. Since that $S^\prime \subseteq \hat{S}$, it is not difficult to prove that the diversity score of the result set, $D(S)$, must not be smaller than $({|S^\prime|}/{|\hat{S}|})\cdot(1-1/e)$ multiplying the diversity score of the optimal result set. Denoting the ${|S^\prime|}/{|\hat{S}|}$ as $\epsilon$, where $0 < \epsilon \leq 1$, we can conclude an approximation ratio of $\epsilon\cdot(1-1/e)$ for online DTop$L$-ICDE processing algorithm. \qquad $\square$
\end{proof}
}
%\newpage

\bibliographystyle{IEEEtran}
\bibliography{cites}

\end{document}